\newtheorem{Theorem}{Theorem}
\newtheorem{Lemma}{Lemma}
\newtheorem{Assumption}{Assumption}
\newtheorem{Remark}{Remark}
\numberwithin{equation}{section}
\definecolor{Red}{rgb}{.9,0,0}
\begin{document}

\title{B-CONCORD - A scalable Bayesian high-dimensional precision matrix estimation procedure}
\author{Peyman Jalali, Kshitij Khare and George Michailidis}
\affil{University of Florida} 
\date{}

\maketitle

\begin{abstract}
Sparse estimation of the precision matrix under high-dimensional scaling constitutes a canonical problem in 
statistics and machine learning. Numerous regression and likelihood based approaches, many frequentist and some Bayesian
in nature have been developed. Bayesian methods provide direct uncertainty quantification of the model parameters through the
posterior distribution and thus do not require a second round of computations for obtaining debiased estimates of the model parameters and their confidence intervals. However, they are computationally expensive for settings involving more than 500 variables. To that end, we develop B-CONCORD for the problem at hand, a Bayesian analogue of the CONvex CORrelation selection methoD (CONCORD) introduced by Khare et al. (2015). B-CONCORD leverages the CONCORD generalized likelihood function
together with a spike-and-slab prior distribution to induce sparsity in the precision matrix parameters. We establish model selection and estimation consistency under high-dimensional scaling; further, we develop a procedure that refits only the non-zero parameters of the precision matrix, leading to significant improvements in the estimates in finite samples.
Extensive numerical work illustrates the computational scalability of the proposed approach vis-a-vis competing Bayesian methods, as well as its accuracy.
\end{abstract}


\section{Introduction}

\noindent
Graphical models capture conditional dependence relationships between a set of random variables
\cite{buhlmann2011statistics}. The emergence of high dimensional data, wherein researchers have measured a large
number of variables $p$ on a relative small number of samples $n$ led to the study of estimating such models
under {\it sparsity} constraints, namely that the number of true non-zero parameters is less than the sample size.
A rich body of work on algorithms and the associated theoretical considerations emerged addressing this problem
\cite{wainwright2019high}. A key development was the introduction of the neighborhood selection method 
\cite{Meinshausen:Buhlmann:2006} which for Gaussian graphical models leverages the connection between the $(i,j)^{th}$ entry of the precision matrix $\mathbf{\Omega} = \mathbf{\Sigma}^{-1}$ -the model parameter of interest- to the partial correlation between the $i^{th}$ and $j^{th}$ variable; the latter can be estimated through a regression model even under sparsity constraints. This regression approach was used for graphical models for binary variables, as well as mixed measurement variables (e.g. numerical, binary, count) \cite{chen2015selection}, in addition to the Gaussian case.

As previously mentioned, numerous approaches have been developed for estimation of a sparse precision matrix, either
based on the neigborhood selection idea or leveraging the Gaussian likelihood; e.g., see  
\cite{Meinshausen:Buhlmann:2006, yuan2007model, friedman2008sparse, peng2009partial, CLL:2011, khare2015convex} 
and references therein. These approaches come with statistical guarantees expressed in the form of high probability
error bounds for selecting the correct non-zero model parameters and for the norm difference between the estimated
model parameters and the data generating ones. The Bayesian paradigm provides comprehensive uncertainty quantification 
of the model parameters through the posterior distribution. To that end, several Bayesian 
counterparts to the penalized (Gaussian) likelihood based methods have been proposed in the literature; e.g., see
\cite{DLR:2011, wang2012bayesian, Cheng:Lenkoski:2012, Wang:2015}. However, a key challenge for these approaches is their
scalability to settings involving a large number of variables (e.g. $p\geq 500$).

The main goal of this paper is to develop a highly \textit{scalable} Bayesian approach for sparse precision matrix estimation in high-dimensional settings with thousands of variables by leveraging the neighborhood selection method. This is accomplished by leveraging the regression based generalized likelihood function in \cite{khare2015convex} and combined with a spike-and-slab 
prior distribution on the precision matrix $\mathbf{\Omega}$ parameters, to obtain a generalized posterior distribution.
A key advantage of the generalized likelihood function is that the resulting posterior distributions for the elements
of $\mathbf{\Omega}$ are available in closed form (up to a normalizing constant), unlike full Gaussian likelihood approaches.
This enables derivation of a scalable Gibbs sampler that works well in settings involving thousands of variables (see numerical evidence in Section \ref{simulations}) and outperforms state of the art approaches in the literature. Further, we establish posterior consistency both for selecting the correct non-zero elements of $\mathbf{\Omega}$ and for the norm difference between the estimated model parameters and the data generating ones. In addition, to improve estimation accuracy of the magnitude of the non-zero elements in small sample settings, we introduce a novel refitting procedure with a modified prior distribution that achieves this objective, a novel development of independent interest for Bayesian methods for high-dimensional sparse estimation problems.

The remainder of the paper is organized as follows. The problem formulation based on the generalized likelihood function and the spike-and-slab prior distribution, together with the development of a scalable Gibbs sampler for sampling from the resulting posterior distribution is presented in Section \ref{generalized:likelihood:spike:slab}. The novel refitting procedure
is discussed in Section \ref{refitting:technique}. High-dimensional selection consistency and convergence rates for the model parameters are established in Section \ref{sec:consistency}. Extensive numerical evaluation of the proposed algorithm and an illustration to a Omics data set is given in Section \ref{simulations}. Finally, the proofs of all technical results and some background on the generalized likelihood function are delegated to the Appendix.

\section{The Bayesian CONCORD (B-CONCORD) framework for precision matrix estimation} 
\label{generalized:likelihood:spike:slab}

The key building block in the proposed framework is the CONCORD generalized likelihood function introduced in 
\cite{khare2015convex} that is motivated by the regression based neighborhood regression approach for estimation
of Gaussian graphical models introduced in \cite{Meinshausen:Buhlmann:2006} ({\it a brief introduction and 
derivationof it is given in Supplemental Section \ref{background}}). Let 
$\mathcal{Y} :=\left(\{\mathbf{y}_{i:}\}_{i=1}^{n}\right)$ be independent and identically distributed observations 
from a $p$-variate (continuous) distribution, with mean $\boldsymbol{0}$ and covariance matrix 
$\mathbf{\Omega}^{-1}$. Then, the CONCORD generalized likelihood function is defined as follows. 
\begin{equation} \label{eq2}
\mathcal{L}_{CONCORD} ({\bf \Omega}) := \exp \left( - \frac{1}{2}\sum\limits_{j=1}^{p}
{\sum\limits_{i=1}^{n}{\left( \omega_{jj}y_{ij} + \sum\limits_{k \ne j}{\omega_{jk}y_{ik}} \right)^2}} + n \sum\limits_{j=1}^p 
\log \omega_{jj} \right). 
\end{equation}

\noindent
Similarly to the neighborhood selection approach, for computational reasons we require that $\Omega \in \mathbf{\mathbb{M}}_{p}^+$, the space of real $p\times p$ symmetric matrices with positive diagonal elements, but {\it not} necessarily positive definite.

\medskip 
\noindent
{\bf A Spike and Slab Prior Distribution for B-CONCORD.} \\
\noindent
For every off-diagonal entry of ${\bf \Omega} \in \mathbf{\mathbb{M}}_p$, we assume the following: 
particular 
\begin{equation}\label{spikeslab}
\omega_{jk} \sim (1-q)I_{\{0\}}(\omega_{jk}) + q \phi_{\lambda_{jk}} (\omega_{jk}) I_{\mathbb{R} \setminus \{0\}}(\omega_{jk}) 
\end{equation}

\noindent
independently for every $1 \leq j < k \leq p$, where $\phi_\lambda$ denotes the normal density with mean zero and 
variance $1/\lambda$. Further, we impose an independent Exponential$(\gamma)$ prior distribution on all the 
diagonal entries of ${\bf \Omega}$. Hence, the (shrinkage) hyperparameters are $\{\lambda_{jk}\}_{1 \leq j < k \leq p}$ and 
$\gamma$. We will discuss the choice of these hyperparameters in Remark \ref{hyperparameter:selection}. 

We now introduce additional notation for ease of exposition, in particular for the asymptotic analysis in 
Section \ref{sec:consistency}. Let 
\begin{equation} \label{theta}
{\boldsymbol \xi} = (\omega_{12}, \omega_{13}, \cdots, \omega_{1,p}, \omega_{2,3}, \cdots, \omega_{p-1,p}) 
\end{equation}

\noindent
denote the collection of off-diagonal entries of the symmetric matrix ${\bf \Omega}$. Further, let $\boldsymbol{l} \in 
\{0,1\}^{{p \choose 2}}$ be a generic sparsity pattern for ${\boldsymbol \xi}$. There are $2^{{p \choose 2}}$ such 
sparsity patterns. For example, when $p = 3$, there are ${3 \choose 2} = 3$ off-diagonal entries, and the 
$2^{{p \choose 2}} = 8$ possible sparsity patterns in the off-diagonal entries are  
\begin{align*}
& (0,0,0), \; (1,0,0), \; (0,1,0), \; (0,0,1)\\
& (1,1,0), \; (1,0,1), \; (0,1,1), \; (1,1,1). 
\end{align*}

\noindent
For every sparsity pattern $\boldsymbol{l}$, let $d_{\boldsymbol{l}}$ be the density (number of non-zero entries) of 
$\boldsymbol{l}$, and $\mathcal{M}_{\boldsymbol{l}}$ be the space where ${\boldsymbol \xi}$ varies when 
restricted to follow the sparsity pattern $\boldsymbol{l}$. in the $p=3$ example discussed above, $d_{(0,0,0)} = 0$ and 
$d_{(1,0,0)} = 1$. 

Using straightforward calculations, the independent spike-and-slab priors for the off-diagonal entries (specified in 
(\ref{spikeslab}), can be alternatively represented as a hierarchical prior distribution as follows: 
\begin{equation} \label{hierarchical prior 1}
\pi \left( {\boldsymbol \xi} | \boldsymbol{l}\right) = \frac{|\mathbf{\Lambda}_{\boldsymbol{l}\boldsymbol{l}}|^{\frac{1}{2}}}{\left(2\pi \right)^{\frac{d_{\boldsymbol{l}}}{2}}} \exp \left(-\frac{{\boldsymbol \xi}'\mathbf{\Lambda}{\boldsymbol \xi}}{2} \right)I_{\left( {\boldsymbol \xi} \in \mathcal{M}_{\ell}\right)},
\end{equation}

\noindent
where $\mathbf{\Lambda}$ is a diagonal matrix with diagonal entries $(\lambda_{jk})_{1 \leq j < k \leq p}$, and 
$\mathbf{\Lambda}_{\boldsymbol{l}\boldsymbol{l}}$ is a sub-matrix of $\mathbf{\Lambda}$ obtained after removing the 
rows and columns corresponding to the zeros in ${\boldsymbol \xi} \in \mathcal{M}_{\boldsymbol{l}}$. In other words, 
given the sparsity pattern ${\boldsymbol{l}}$, the non-zero entries of ${\boldsymbol \xi}$ follow a 
$d_{\boldsymbol{l}}$-dimensional multivariate normal distribution with mean ${\bf 0}$ and covariance matrix 
$\Lambda_{\boldsymbol{l}\boldsymbol{l}}^{-1}$. The marginal distribution of $\boldsymbol{l}$ is given by 

\begin{equation}\label{hierarchical prior 2}
\pi(\boldsymbol{l}) \propto \begin{cases} 
q^{d_{\boldsymbol{l}}}(1-q)^{\binom{p}{2} - d_{\boldsymbol{l}}} & d_{\boldsymbol{l}} \leq \tau,\\
0 & d_{\boldsymbol{l}} > \tau. 
\end{cases}
\end{equation}
which puts zero mass on {\it unrealistic sparsity patterns}, i.e., sparsity patterns with more than $\tau$ non-zero entries. In the subsequent theoretical analysis, we discuss appropriate values for $\tau$. 

Using (\ref{hierarchical prior 1}) and (\ref{hierarchical prior 2}), the marginal prior distribution on ${\boldsymbol \xi}$ can be obtained as a mixture of multivariate normal distributions as follows:
\begin{equation}\label{prior on theta}
\begin{split}
\pi\left( {\boldsymbol \xi} \right)  = \sum\limits_{ \boldsymbol{l} \in \mathcal{L}}\pi\left({\boldsymbol \xi} | \boldsymbol{l}\right)\pi\left(\boldsymbol{l}\right)
&\propto \sum\limits_{ \boldsymbol{l} \in \mathcal{L}} q^{d_{\boldsymbol{l}}}(1-q)^{\binom{p}{2} - d_{\boldsymbol{l}}} \left\{ \frac{|\mathbf{\Lambda}_{\boldsymbol{l}\boldsymbol{l}}|^{\frac{1}{2}}}{\left(2\pi \right)^{\frac{d_{\boldsymbol{l}}}{2}}} \exp \left(-\frac{{\boldsymbol \xi}'\mathbf{\Lambda}{\boldsymbol \xi}}{2} \right)I_{\left( {\boldsymbol \xi} \in \mathcal{M}_{\ell}\right)} \right\}. 
\end{split}
\end{equation}

\noindent
Note that the vector ${\boldsymbol \xi}$ only incorporates the off-diagonal entries of $\mathbf{\Omega}$. Regarding the 
diagonal entries, we define $\boldsymbol{\delta}$ to be the vector of all diagonal elements $\mathbf{\Omega}$, i.e.
\begin{equation}\label{delta}
\boldsymbol{\delta} = \left( \omega_{11}, ..., \omega_{pp}\right). 
\end{equation}

\noindent
Note that an independent Exponential($\gamma$) prior distribution is assigned on each coordinate of $\boldsymbol{\delta}$, i.e,
\begin{equation}\label{prior on delta}
\pi\left( \boldsymbol{\delta} \right) \propto \exp \left( -\gamma \boldsymbol{1}'\boldsymbol{\delta} \right) I_{\mathbb{R}_{+}^{p}}
\left( \boldsymbol{\delta} \right).
\end{equation}


\subsection{Computing the Posterior Distribution} \label{posterior:computation:gibbs:sampler}

\noindent
Combining (\ref{eq2}) and (\ref{spikeslab}), it is easy to check that the generalized posterior density of ${\bf \Omega}$ is given by 
\begin{eqnarray}
\pi \left\{ \mathbf{\Omega} | \mathcal{Y}\right\} \propto & \ \text{exp} \left( n\sum\limits_{j=1}^{p}{\text{log}\omega_{jj}} - \frac{n}{2}\text{tr}\left( \mathbf{\Omega}^2\mathbf{S}\right) - 
\lambda \mathop{\sum\sum}\limits_{1 \le j < k \le p}\frac{\omega_{jk}^2}{2} - \lambda \sum_{j=1}^p\omega_{jj}\right) 
\nonumber\\
\times & \prod \limits_{1 \le j < k \le p} \left( I_{\{0\}} (\omega_{jk}) + \frac{q\sqrt{\lambda_{jk}}}{(1-q)\sqrt{2 \pi}} 
I_{\mathbb{R} \setminus \{0\}}(\omega_{jk}) \right). \label{BSSC posterior}
\end{eqnarray}

\noindent
The generalized posterior density in (\ref{BSSC posterior}) is intractable, in the sense that it is not feasible to draw exact 
samples from such density. However, we will use the conditional posterior density of each element $\omega_{jk}$ of 
$\mathbf{\Omega}$, $1 \leq j \leq k \leq p$, given the remaining elements denoted by $\mathbf{\Omega}_{-(jk)}$, to introduce 
an entry-wise Gibbs sampler that can generate approximate samples from the generalized posterior density in 
(\ref{BSSC posterior}). In order to compute the conditional posterior density of the off-diagonal elements $\omega_{jk}, \  
1 \le j \leq k \le p$, we first note that 
\begin{equation}\label{eq6}
 n \ \text{tr}\left( \mathbf{\Omega}^2\mathbf{S}\right)
=\sum\limits_{i=1}^{n}\| \left( \mathbf{\Omega} \mathbf{y}_{i:}\right) \|^2
= \sum\limits_{i=1}^{n}{\sum\limits_{j=1}^{p}{\left( \sum\limits_{k=1}^{p}{\omega_{jk}y_{ik}}\right)^2}} =\sum\limits_{j=1}^{p} \| 
\left( \sum\limits_{k=1}^{p}{\omega_{jk} \mathbf{y}_{:k}}\right) \|^2, 
\end{equation}
\noindent
where $||\cdot||$ denotes the $\ell_2$ norm of a vector.

\noindent
Thus, in view of (\ref{eq6}), straightforward algebra shows that
\begin{equation*}
\begin{split}
\pi (\omega_{jk}| \mathbf{\Omega}_{-(jk)}, \mathcal{Y}) &\propto \exp \left\{ -\frac{n}{2} \left( a_{jk}\omega_{jk}^{2} + 2 b_{jk}
\omega_{jk}\right)\right\} \left( I_{\{0\}} (\omega_{jk}) + \frac{q\sqrt{\lambda}}{(1-q)\sqrt{2 \pi}} 
I_{\mathbb{R} \setminus \{0\}}(\omega_{jk}) \right)\\
&=I_{\{0\} }(\omega_{jk}) + c_{jk}\frac{\sqrt{na_{jk}}}{\sqrt{2\pi}} \exp\left\{ -\frac{na_{jk}}{2} \left( \omega_{jk} + 
\frac{b_{jk}}{a_{jk}}\right)^2\right\}I_{\mathbb{R} \setminus \{0\}}(\omega_{jk})
\end{split}
\end{equation*}

\noindent
with,
\begin{equation*}
\begin{split}
a_{jk} &= s_{jj} + s_{kk} + \frac{\lambda_{jk}}{n}, \quad \quad \quad 
b_{jk} = \sum_{k' \neq k} \omega_{jk'} s_{kk'} + \sum_{j' \neq j} \omega_{j'k} s_{jj'}\\
c_{jk} &= \frac{q \sqrt{\lambda_{jk}}}{(1-q) \sqrt{na_{jk}}}\exp\left( \frac{nb_{jk}^2}{2a_{jk}}\right).
\end{split}
\end{equation*}
Next, letting $p_{jk} = \frac{c_{jk}}{1+c_{jk}}$, we can then write
\begin{equation}\label{cond. omegas_jk BSSC}
\begin{split}
(\omega_{jk}| \mathbf{\Omega}_{-(jk)}, \mathcal{Y}) \sim (1-p_{jk})I_{\{0\}}(\omega_{jk}) + p_{jk}N(-\frac{b_{jk}}{a_{jk}}, \frac{1}{na_{jk}})I_{\mathbb{R} \setminus \{0\} }(\omega_{jk}), \quad \quad 1 \le j < k \le p.
\end{split}
\end{equation}

Moreover, the diagonal elements $\omega_{jj}$ are conditionally independent and the conditional density of $ \omega_{jj}$ given $\mathbf{\Omega_{-(jj)}}, \ 1 \le j \le p$, is given by
\begin{equation}\label{cond. diags BSSC}
\begin{split}
f(\omega_{jj}| \mathbf{\Omega}_{-(jj)}, \mathcal{Y}) \propto \omega_{jj}^{n}\text{exp}\left\{ -\omega_{jj}^{2}\left( \frac{n}{2} s_{jj}
\right) - \omega_{jj}\left( \lambda_{jk} + n b_j \right) \right\},
\end{split}
\end{equation}

\noindent
where 
$$
b_j = \sum_{j' \neq j} \omega_{jj'}' s_{jj'}. 
$$

\noindent
Note that the density in \ref{cond. diags BSSC} is not a standard density, but using the fact that it has a unique mode at
\begin{equation}\label{mode of diag dist}
\frac{-(\lambda_{jk} + nb_j) + \sqrt{(\lambda + nb_j)^2 + 4n^2 s_{ii}^k}}{2n s_{ii}^k}, 
\end{equation}

\noindent
one can use a discretization technique to generate samples from it. However, we have observed in our extensive numerical work that the density in (\ref{cond. diags BSSC}) puts most of its mass around the mode. As a result, when appropriate, one can simply approximate it using a degenerate density with a point mass at it's mode, given in \ref{mode of diag dist}. This approximation allows faster implementation of the algorithm without sacrificing accuracy. Using the distributions in (\ref{cond. omegas_jk BSSC}) and (\ref{cond. diags BSSC}), we develop a component-wise Gibbs sampler to generate 
approximate samples from the joint posterior density in (\ref{BSSC posterior}). Given the current value of $\mathbf{\Omega}$, a single iteration of this Gibbs sampler -henceforth referred to as Bayesian Spike and Slab CONCORD (BSSC)- is described in Algorithm \ref{GibbsBSSC}.
\begin{algorithm}
\caption{Entry wise Gibbs Sampler for BSSC}
\begin{algorithmic}
\Procedure{BSSC}{$\mathbf{S}$}\Comment{Input the data}
\For{$j=1, ..., p-1$}
\For{$k=j+1, ..., p$}
\State $\omega_{jk} \sim (1-p_{jk})I_{\{0\} }(\omega_{jk}) + p_{jk}N(-\frac{b_{jk}}{a_{jk}}, \frac{1}{a_{jk}})I_{\mathbb{R} \setminus 
\{0\}}(\omega_{jk})$
\EndFor
\EndFor
\For{$j=1, ..., p$}
\State $\omega_{jj} \gets \frac{-(\lambda_{jk} + n\mathbf{\Omega}_{-jj}'\mathbf{S}_{-jj}) + \sqrt{(\lambda_{jk} + 
nb_j)^2 + 4n^2 s_{ii}^k}}{2n s_{ii}^k}$
\EndFor
\State \textbf{return} $ \mathbf{\Omega}$\Comment{Return $\mathbf{\Omega}$}
\EndProcedure
\end{algorithmic}\label{GibbsBSSC}
\end{algorithm}

Let $\left\{ \hat{\mathbf{\Omega}}^{(t)} \right\}_{t=1}^T$ denote the iterates obtained by running the Gibbs sampler (with a sufficiently long burn-in period). For each $1 \leq j < k \leq p$, we compute the proportion of times the corresponding entry was chosen to be non-zero, i.e.,
$$
\hat{p}_{jk} = \frac{1}{T} \sum_{t=1}^T 1_{\{\hat{\omega}_{jk}^{(t)} \neq 0\}}. 
$$

\noindent
If $\hat{p}_{jk}$ is greater than a pre-specified threshold $\upsilon \in (0,1)$, $(j,k)$, the $(j,k)^{th}$ entry is considered to be non-zero in the estimated sparsity pattern for the precision matrix. Note that by the ergodic theorem $\hat{p}_{jk}$ converges to the posterior probability of $\omega_{jk}$ being non-zero as $T \rightarrow 0$. Hence, an entry is classified as non-zero, if the posterior probability of being non-zero is above $\upsilon$. We denote the resulting sparsity pattern estimate by 
$\hat{\boldsymbol{l}}_{\upsilon, BSSC}$, and use the threshold $\upsilon = 0.5$ in our numerical work to obtain 
the {\it median probability model/sparsity pattern}. The user has the flexibility to select more conservative or relaxed threshold values. 

\begin{Remark} (Selection of hyperparameters) \label{hyperparameter:selection}
A good selection of shrinkage hyperparameters $\{\lambda_{jk}\}_{1 \leq j < k \leq p}$ and $\{\gamma_j\}_{j=1}^p$, along with the mixing probability $q$ is important. Following 
\cite{park2008bayesian,wang2012bayesian}, independent gamma prior distributions are assigned
on each shrinkage parameter $\lambda_{jk}$ and $\gamma_j$, i.e., 
\begin{align*}
& \lambda_{jk} \sim \text{Gamma}(r, s) \quad \quad \mbox{for} \quad 1 \leq j < k \leq p, 
& \gamma_j \sim \text{Gamma}(r, s) \quad \quad \mbox{for} \quad 1 \leq j \leq p 
\end{align*} 

\noindent
for some fixed $r, s > 0$. Straightforward calculations demonstrate that $\{\lambda_{jk}\}_{1 \leq j < k \leq p}, 
\{\gamma_j\}_{j=1}^p$ are conditionally mutually independent given $\mathbf{\Omega}, \mathcal{Y}$, i.e., 
\begin{align*}
& \lambda_{jk} \mid \mathbf{\Omega}, \mathcal{Y} \sim \text{Gamma}(r + 0.5, 0.5 \omega_{jk}^2+ s), 
& \gamma \mid \mathbf{\Omega}, \mathcal{Y} \sim \text{Gamma}(r + 1, \omega_{jj} + s). 
\end{align*}

\noindent
Since $\mathbb{E}\{\lambda_{jk} \mid \mathbf{\Omega}, \mathcal{Y}\} = \frac{r + 0.5}{0.5 \omega_{jk}^2+ s}$ and 
$\mathbb{E}\{\gamma_j \mid \mathbf{\Omega}, \mathcal{Y}\} = \frac{r + 1}{\omega_{jj} + s}$, this approach 
selects the respective shrinkage parameters based on the current $\omega$-values in a way that 
larger (smaller) entries are regularized less (more) on average. For the parameters $r$ and $s$ of the  
Gamma prior distribution, in absence of any prior information, we recommend the non-informative choices $r=10^{-4}$ and 
$s=10^{-8}$, which come very close to flat prior distributions for the $\lambda$ and $\gamma$ values, and are based 
on the suggestions made in \cite{wang2012bayesian}. Extensive numerical work suggests (see Section 
\ref{simulations}) that these are satisfactory choices. 

The default choice for the mixing probability $q$ is the objective one, namely $q = 1/2$. Based on the consistency results in 
Section \ref{sec:consistency}, one can use the choice $q = 1/p$ in really high-dimensional settings to further 
encourage sparser models. 
\end{Remark}

\subsection{Estimating magnitudes of non-zero entries: Correcting for bias using refitting} \label{refitting:technique}

\noindent
As previously mentioned, the key objective of the B-CONCORD methodology is the identification of the correct sparsity pattern in the precision matrix $\mathbf{\Omega}$. However, a good estimate of the ``strength" (magnitude) of the conditional association between two variables is also required for downstream analysis in many applications. 
Note that such estimates can be obtained from Algorithm \ref{GibbsBSSC}, and Theorem \ref{estimation:consistency:refitted:posterior} establishes their asymptotic accuracy and convergence rates under high-dimensional scaling. Nevertheless, in finite sample settings these estimates exhibit bias, an issue also noticed in the frequentist literature and resolved through the development of debiasing procedures \cite{Zhang:Zhang:2014, Javanmard:Montanari:2014, VanDeGeer:2019}). The numerical work in Section \ref{simulations} also provides evidence for the presence of bias. 

\noindent
A popular approach for addressing this problem for regularized estimates in a frequentist setting is to employ a refitting step, wherein only the non-zero entries of the precision matrix are re-estimated (e.g., see \cite{ma2016joint}). Next, we propose a Bayesian refitting step for obtaining debiased/improved estimates of the magnitudes of the non-zero entries of $\mathbf{\Omega}$ (as specified by the sparsity pattern obtained from Algorithm \ref{GibbsBSSC}). Figure \ref{fig:bias motivation} is an illustration of the effectiveness of the refitting technique, described later in this section. The plot depicts the estimation accuracy for one coefficient of a precision matrix of dimension $p=50$ based on sample sizes $n=100$ (panel \ref{fig:bias motivation sub1}) and $n=1000$ (panel \ref{fig:bias motivation sub2}). The posterior density of the coefficient was estimated using both BSSC and the refitting approach, colored in red and blue, respectively. The true coefficient value is corresponds to the green vertical line. Figure \ref{fig:bias motivation} shows that the true coefficient is located within the range of the estimated posterior density generated by the refitting approach. Theorem \ref{estimation:consistency:refitted:posterior} theoretically establishes posterior estimation consistency of the refitting approach. In addition, in section \ref{accuracy}, we further demonstrate the numerical significance of the refitting approach in reducing the estimation error. 

\begin{figure}[h]
\centering
\caption{Illustrating the presence of bias in the output of Algorithm \ref{GibbsBSSC}) and its correction through a refitting step for a coefficient from a $p=50$ dimensional precision matrix, estimated from $n=100$ (panel (a)) and $n=1000$ (panel (b)) samples.}
\begin{subfigure}{.5\textwidth}
  \centering
  \includegraphics[height=2in,width=3in,angle=0]{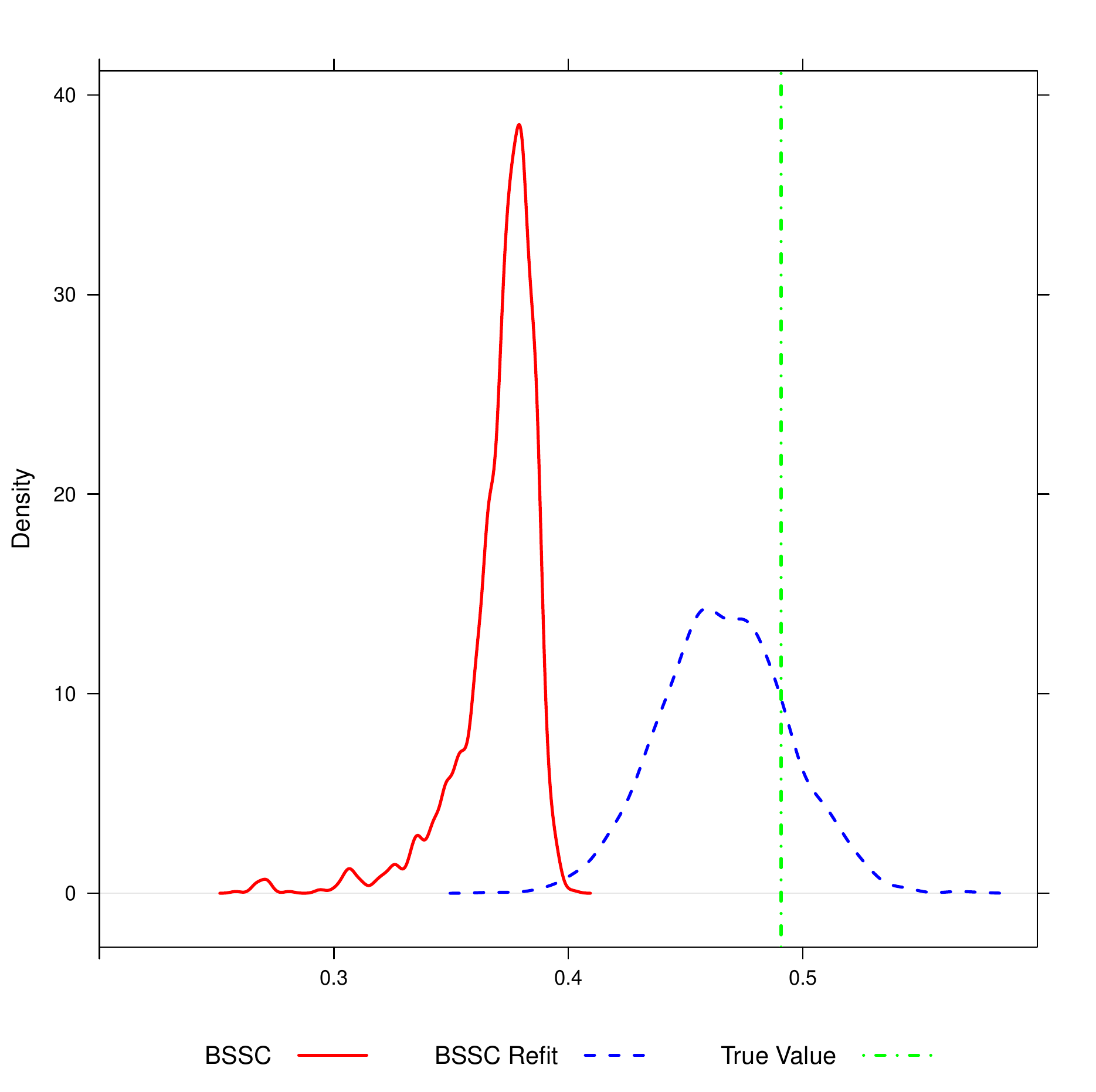}
  \caption{}
  \label{fig:bias motivation sub1}
\end{subfigure}%
\begin{subfigure}{.5\textwidth}
  \centering
  \includegraphics[height=2in,width=3in,angle=0]{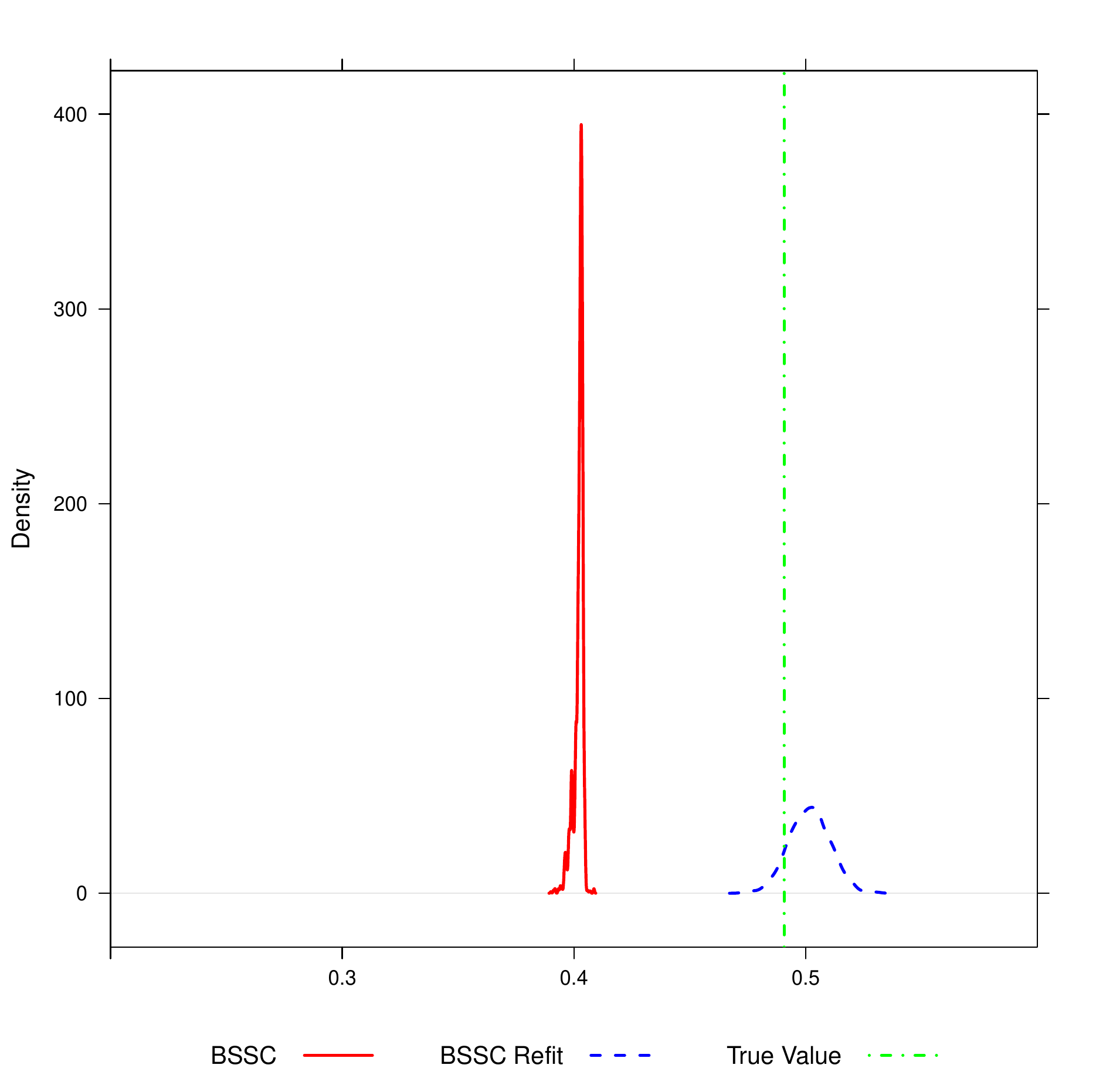}
  \caption{}
  \label{fig:bias motivation sub2}
\end{subfigure}
\label{fig:bias motivation}
\end{figure}

Recall from the discussion at the end of Section \ref{posterior:computation:gibbs:sampler} that using the output of 
Algorithm \ref{GibbsBSSC}, the proportion of times the $(j,k)^{th}$ entry was non-zero in the iterates of the sampler (denoted 
by $\hat{p}_{jk}$) can be computed. Let $\upsilon$ be the user specified threshold, and let 
\begin{equation} \label{estimated:edge:set}
\hat{E} = \{ (j,k): \; \hat{p}_{jk} > \upsilon\}
\end{equation}

\noindent
denote the collection of indices classified as non-zero. Let $\hat{G}$ denote the graph with vertices $\{1,2, \cdots, p\}$ and 
edge set $\hat{E}$. In other words, $\hat{G}$ encodes the sparsity pattern of $\hat{\mathbf{\Omega}}$. Let 
$\mathbb{M}_{\hat{G}}$ be defined as
$$
\mathbb{M}_{\hat{G}} = \left\{ \mathbf{\Omega} \in \mathbb{M}_p^+: M_{ij} = 0 \mbox{ if } i \neq j, (i,j) \notin \hat{E} \right\},
$$

\noindent
which is the space of symmetric $p \times p$ matrices with positive diagonal entries obeying the sparsity pattern 
encoded in $\hat{G}$. The goal is to construct a debiased/improved estimate (and corresponding credible region) for 
$\mathbf{\Omega}\in \mathbb{M}_{\hat{G}}$ by specifying an appropriate prior distribution on $\mathbb{M}_{\hat{G}}$. 

An immediate question that might arise in the mind of the reader is that while the positive definite constraint can be relaxed for the purpose of estimating the sparsity pattern in $\mathbf{\Omega}$, this constraint should be imposed for estimating the 
magnitudes of the entries of $\mathbf{\Omega}$. One can certainly do this by further restricting to the space 
$\mathbb{P}_{\hat{G}}$ (which is defined to be the intersection of $\mathbb{M}_{\hat{G}}$ with positive definite matrices), 
specifying a prior on $\mathbb{P}_{\hat{G}}$, and then using the subsequent posterior distribution. This however leads to 
significant computational challenges and involves inversion of $p-1$ dimensional matrices. Hence, for computational reasons, 
we first construct our prior on $\mathbb{M}_{\hat{G}}$, and project the resulting estimator (and associated credible region) on 
the space of positive definite matrices by modifying the diagonal entries (see the end of this section for details). 

We start by specifying an improper prior distribution on $\mathbb{M}_{\hat{G}}$, and explain why we expect the resulting 
posterior distribution to lead to good estimators. We specify independent (improper) uniform priors on the off-diagonal 
entries $\{\omega_{jk}\}_{(j,k) \in E}$ and for the diagonal entries we independently specify the following improper priors: 
\begin{equation}
\begin{split}
\pi(\omega_{jj}) &\propto \exp\left( n\omega_{jj} - n\log\omega_{jj}\right), \quad \omega_{jj} > 0, \quad \text{for} \quad 1 \leq j 
\leq p.
\end{split}
\end{equation}

\noindent
Then, the joint posterior distribution on $\mathbb{M}_{\hat{G}}$ is given by 
\begin{equation} \label{refitted:posterior:density}
\pi_{\text{refitted}} \left(\mathbf{\Omega}|\mathcal{Y} \right) \propto \exp\{n \text{tr}\left(\mathbf{\Omega}\right)-\frac{n}{2} \text{tr}
\left( \mathbf{\Omega}^2 \mathbf{S})\right\} \mbox{ for } \mathbf{\Omega} \in \mathbb{M}_{\hat{G}}. 
\end{equation}

\noindent
We refer to this posterior distribution as the {\bf refitted posterior}, since it is defined conditional on the sparsity pattern estimated from Algorithm \ref{GibbsBSSC} and encoded in $\hat{G}$. The following lemma addresses the propriety and 
unimodality of the posterior distribution, and its proof can be found in the Supplement.
\newtheorem{lem}{Lemma}
\begin{lem} \label{propriety}
If the degree of $\hat{G}$ (maximum number of edges shared by any vertex) is less than $n$, then the following holds. 
\begin{enumerate}
\item The refitted posterior density in (\ref{refitted:posterior:density}) can be normalized to a proper probability density. 
\item This refitted posterior density is log-concave and has a unique mode. 
\end{enumerate}
\end{lem}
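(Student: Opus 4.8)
The plan is to reduce both parts of the lemma to elementary properties of a strictly concave quadratic. The first step is to rewrite the exponent of (\ref{refitted:posterior:density}) in a transparent form: writing $\mathbf{y}_{:k}$ for the $k$-th column of the data matrix, the identity $n\,\mathrm{tr}(\mathbf{\Omega}^{2}\mathbf{S})=\sum_{j=1}^{p}\|\sum_{k}\omega_{jk}\mathbf{y}_{:k}\|^{2}$ from (\ref{eq6}), together with $n\,\mathrm{tr}(\mathbf{\Omega})=n\sum_{j}\omega_{jj}$, gives
\[
\ell(\mathbf{\Omega})\;=\;n\sum_{j=1}^{p}\omega_{jj}\;-\;\frac12\sum_{j=1}^{p}\left\|\,\omega_{jj}\mathbf{y}_{:j}+\sum_{k\,:\,(j,k)\in\hat{E}}\omega_{jk}\mathbf{y}_{:k}\,\right\|^{2},
\]
viewed as a function of the free coordinates $\boldsymbol{\theta}=\big(\{\omega_{jj}\}_{j=1}^{p},\{\omega_{jk}\}_{(j,k)\in\hat{E}}\big)\in\mathbb{R}_{+}^{p}\times\mathbb{R}^{|\hat{E}|}$. (The improper diagonal prior in the refitted model is chosen precisely so that its $-n\log\omega_{jj}$ term cancels the $n\log\omega_{jj}$ of the CONCORD likelihood, which is exactly what turns the exponent into a pure quadratic.) Since $\ell$ is an affine function minus a sum of squared linear functionals of $\boldsymbol{\theta}$, it is concave, and $\mathbb{M}_{\hat{G}}$ is convex; hence the refitted density is automatically log-concave, and the real content of the lemma is propriety together with the existence of a \emph{unique} mode.

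The engine for both is positive definiteness of the Hessian of $-\ell$, namely $\mathbf{H}=\sum_{j=1}^{p}\mathbf{B}_{j}^{\top}\mathbf{B}_{j}\succeq0$, where $\mathbf{B}_{j}$ is the linear map $\boldsymbol{\theta}\mapsto\omega_{jj}\mathbf{y}_{:j}+\sum_{k\in N_{\hat{G}}(j)}\omega_{jk}\mathbf{y}_{:k}$. If $\boldsymbol{\theta}^{\top}\mathbf{H}\boldsymbol{\theta}=0$ then $\mathbf{B}_{j}\boldsymbol{\theta}=0$ for every $j$, i.e.\ some linear combination of the columns of $\mathcal{Y}$ indexed by $[j]:=\{j\}\cup N_{\hat{G}}(j)$ vanishes; since $|[j]|=1+\deg_{\hat{G}}(j)\le n$ under the hypothesis $\deg(\hat{G})<n$, and since for a continuous data-generating distribution any at most $n$ columns of $\mathcal{Y}$ are almost surely linearly independent, all these coefficients must be zero, so $\boldsymbol{\theta}=0$. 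Thus $\mathbf{H}\succ0$ and $-\ell$ is a strictly convex quadratic with a unique unconstrained stationary point $\boldsymbol{\theta}^{*}$.

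Part 1 is then immediate: $\int_{\mathbb{M}_{\hat{G}}}e^{\ell(\mathbf{\Omega})}\,d\mathbf{\Omega}\le\int_{\mathbb{R}^{p+|\hat{E}|}}e^{\ell(\boldsymbol{\theta})}\,d\boldsymbol{\theta}$, which is a finite Gaussian integral because $\mathbf{H}\succ0$, so $\pi_{\text{refitted}}$ normalizes to a proper density. For part 2, strict concavity gives at most one maximizer of $\ell$ on the convex set $\mathbb{M}_{\hat{G}}$, and coercivity (again from $\mathbf{H}\succ0$) gives that $\ell$ attains its maximum over the closure $\overline{\mathbb{M}_{\hat{G}}}=\mathbb{R}_{\ge0}^{p}\times\mathbb{R}^{|\hat{E}|}$ at a unique point.

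The last — and I expect hardest — step is to verify that this maximizing point actually lies in $\mathbb{M}_{\hat{G}}$, i.e.\ has strictly positive diagonal entries; otherwise the supremum over $\mathbb{M}_{\hat{G}}$ would be approached only along $\omega_{jj}\downarrow0$ (where $\ell$ is finite) and no mode would exist. The stationarity condition in $\omega_{jj}$ reads $\omega_{jj}s_{jj}=1-\sum_{k\in N_{\hat{G}}(j)}\omega_{jk}s_{jk}$, so it suffices to show this right-hand side is positive at $\boldsymbol{\theta}^{*}$. I would establish this by combining the remaining (off-diagonal) stationarity equations with $\mathbf{S}\succeq0$: those conditions force $\omega_{jk}^{*}$ to oppose a covariance-weighted combination of the remaining entries, so that $\sum_{k}\omega_{jk}^{*}s_{jk}\le0$ and hence $\omega_{jj}^{*}s_{jj}\ge1>0$; this is transparent when $\deg_{\hat{G}}(j)=1$, where $\omega_{jk}^{*}s_{jk}\le0$ drops directly out of $s_{jk}^{2}\le s_{jj}s_{kk}$. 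Turning this sign argument into a clean statement uniform over all vertices is the main obstacle; everything else is routine linear algebra plus the standard general-position property of continuous data.
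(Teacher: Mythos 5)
Your core argument is the same as the paper's. The paper's proof of part 1 observes that, since $\deg(\hat{G})<n$, each column $\mathbf{\Omega}_{\cdot i}$ has at most $n$ structurally non-zero entries, that every principal submatrix of $\mathbf{S}$ of that size is (almost surely, for continuous data) positive definite with eigenvalues bounded below by some $\lambda>0$, and then dominates the normalizing integral by a product of one-dimensional Gaussian-type integrals. Your Hessian $\mathbf{H}=\sum_j\mathbf{B}_j^{\top}\mathbf{B}_j\succ0$, proved via almost-sure linear independence of any $\le n$ data columns, is exactly the same fact packaged globally rather than column-by-column, and your Gaussian domination of $\int_{\mathbb{M}_{\hat G}}e^{\ell}$ is the same bound. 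Log-concavity is, as you say, immediate from concavity of $\ell$ on the convex set $\mathbb{M}_{\hat{G}}$; the paper's appendix proof does not in fact say anything beyond the integrability argument, implicitly reading "unique mode" off strict concavity of the quadratic exponent.

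The one place you go beyond the paper is the final step: verifying that the unconstrained maximizer $\boldsymbol{\theta}^{*}$ has strictly positive diagonal entries, so that the mode lies in the open set $\mathbb{M}_{\hat{G}}$ rather than on its boundary. This is a legitimate concern that the paper simply does not address (it treats the closed-form stationary point as the mode, and only much later, in the proof of Theorem 2, shows $\hat\omega_{ii}>\tilde\varepsilon_0/2$ asymptotically on a good event). However, your proposed resolution is not a proof: the claim that the off-diagonal stationarity equations force $\sum_{k}\omega^{*}_{jk}s_{jk}\le0$ is asserted only, is verified only in the degree-one case, and I do not see why it should hold uniformly for general $\hat{G}$ and an arbitrary sample covariance $\mathbf{S}$ — the stationarity condition for $(j,k)\in\hat{E}$ couples columns $j$ and $k$ as $(\mathbf{S}\mathbf{\Omega}_{\cdot j})_k+(\mathbf{S}\mathbf{\Omega}_{\cdot k})_j=0$, which does not pin down the sign of $\omega_{jk}s_{jk}$ separately. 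So either drop that step and state the unique-mode claim over the closure $\mathbb{R}_{\ge0}^{p}\times\mathbb{R}^{|\hat{E}|}$ (which is all the paper itself establishes), or supply an actual argument; as written, that paragraph is a conjecture, not a proof, though it is a gap in a refinement the paper never attempts rather than in the lemma's stated content.
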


\noindent
The next lemma provides insights on why the mode of the refitted posterior distribution is a good/improved estimator of 
$\mathbf{\Omega}$. 
\begin{lem} \label{optimizer}
Suppose $\mathbf{K} \in \mathbb{P}_{\hat{G}}$. Then,
$$
\mathbf{K} = \mbox{arg min}_{\mathbf{\Omega} \in \mathbb{M}_{\hat{G}}} \left\{ \frac{n}{2} \text{tr} \left( \mathbf{\Omega}^2 
\mathbf{K}^{-1} \right) - n \text{tr} (\mathbf{\Omega}) \right\}. 
$$
\end{lem}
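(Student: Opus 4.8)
The plan is to recognize the objective
$f(\mathbf{\Omega}) := \tfrac{n}{2}\,\text{tr}(\mathbf{\Omega}^2\mathbf{K}^{-1}) - n\,\text{tr}(\mathbf{\Omega})$
as a strictly convex quadratic functional on the vector space of symmetric $p\times p$ matrices, to verify that $\mathbf{K}$ is its unique \emph{unconstrained} stationary point, and then to observe that since $\mathbf{K}$ already lies in the feasible set $\mathbb{M}_{\hat{G}}$, it is automatically the constrained minimizer as well. In effect the linear sparsity constraints and the positive-diagonal constraints defining $\mathbb{M}_{\hat{G}}$ turn out to be inactive at $\mathbf{K}$, so no boundary/KKT analysis is needed.

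Concretely I would proceed in three steps. (i) Since $\mathbf{K}\in\mathbb{P}_{\hat{G}}$ is positive definite, $\mathbf{K}^{-1}$ exists and is symmetric positive definite, $\mathbf{K}$ has strictly positive diagonal entries, and in particular $\mathbf{K}\in\mathbb{M}_{\hat{G}}$, so $\mathbf{K}$ is feasible for the minimization. (ii) Strict convexity: for any symmetric $\mathbf{H}$, expanding $f(\mathbf{\Omega}+t\mathbf{H})$ in $t$ gives $\tfrac{d^2}{dt^2}f(\mathbf{\Omega}+t\mathbf{H}) = n\,\text{tr}(\mathbf{H}^2\mathbf{K}^{-1}) = n\,\|\mathbf{H}\mathbf{K}^{-1/2}\|_F^2$, which is strictly positive unless $\mathbf{H}=\mathbf{0}$; hence $f$ is strictly convex on the space of symmetric matrices, and therefore on the convex subset $\mathbb{M}_{\hat{G}}$, so $f$ has at most one minimizer over $\mathbb{M}_{\hat{G}}$. (iii) Stationarity at $\mathbf{K}$: differentiating with respect to symmetric perturbations (equivalently, computing the gradient in the trace inner product $\langle\mathbf{A},\mathbf{B}\rangle = \text{tr}(\mathbf{A}\mathbf{B})$) yields $\nabla f(\mathbf{\Omega}) = \tfrac{n}{2}(\mathbf{\Omega}\mathbf{K}^{-1} + \mathbf{K}^{-1}\mathbf{\Omega}) - n\mathbf{I}$, and evaluating at $\mathbf{\Omega}=\mathbf{K}$ gives $\tfrac{n}{2}(\mathbf{I}+\mathbf{I}) - n\mathbf{I} = \mathbf{0}$.

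Combining (ii) and (iii), $\mathbf{K}$ is the unique global minimizer of $f$ over all symmetric matrices; by (i) this minimizer lies in $\mathbb{M}_{\hat{G}}$, hence it is also the unique minimizer of $f$ over $\mathbb{M}_{\hat{G}}$, which is exactly the claim. There is no serious obstacle here; the only points needing a little care are checking that the unconstrained optimum is feasible (so that the sparsity and positivity constraints may be dropped) and verifying that the convexity is strict rather than merely weak, so that the ``arg min'' is genuinely a single matrix — this is precisely where positive definiteness of $\mathbf{K}$ (equivalently of $\mathbf{K}^{-1}$) is used.
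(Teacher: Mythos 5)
Your proof is correct, and it takes a cleaner route than the paper's. The paper parametrizes $\mathbb{M}_{\hat{G}}$ by the vector $\boldsymbol{\omega}_{\hat{G}}$ of structurally non-zero entries, writes the objective as a quadratic form $\frac{n}{2}\boldsymbol{\omega}_{\hat{G}}'\tilde{K}\boldsymbol{\omega}_{\hat{G}} - n\boldsymbol{\omega}_{\hat{G}}'\mathbf{u}$, invokes the vectorization/permutation machinery of its Lemma S1 to show that $\mathrm{eig}_{\min}(\tilde{K})$ is bounded below by $\mathrm{eig}_{\min}(\mathbf{K}^{-1})>0$ (so the restricted problem has a unique minimizer $\tilde{K}^{-1}\mathbf{u}$), and then checks entrywise that the coordinates of $\mathbf{K}$ satisfy the first-order equations $\partial h/\partial\omega_{ij}=0$ for $(i,j)\in\hat{E}$ and $\partial h/\partial\omega_{ii}=0$. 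You instead establish strict convexity on the full space of symmetric matrices directly from $\tfrac{d^2}{dt^2}f(\mathbf{\Omega}+t\mathbf{H}) = n\,\mathrm{tr}(\mathbf{H}^2\mathbf{K}^{-1}) = n\|\mathbf{H}\mathbf{K}^{-1/2}\|_F^2>0$, verify that the unconstrained gradient $\tfrac{n}{2}(\mathbf{\Omega}\mathbf{K}^{-1}+\mathbf{K}^{-1}\mathbf{\Omega})-n\mathbf{I}$ vanishes at $\mathbf{\Omega}=\mathbf{K}$, and note that the unconstrained global minimizer is feasible, so the constraints may be dropped. The two arguments share the same skeleton (strict convexity plus stationarity at $\mathbf{K}$), but yours avoids the eigenvalue-comparison lemma entirely and requires no coordinate bookkeeping; the paper's restricted-coordinate formulation is not wasted effort, however, since the same objects $\tilde{K}$ and $\mathbf{u}$ are reused heavily in the proof of Theorem 2, where the analysis genuinely must be carried out in the reduced parametrization. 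One cosmetic remark: the off-diagonal sparsity constraints are equality constraints rather than inequality constraints, so calling them ``inactive'' is a slight abuse of language, but your underlying logic --- a global minimizer over a superset that happens to lie in the subset is a minimizer over the subset --- is sound.
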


\noindent
Suppose the true precision matrix $\mathbf{\Omega}^0 = (\mathbf{\Sigma}^0)^{-1}$ belongs to $\mathbb{M}_G$ for some graph 
$G$. Then, it follows from Lemma \ref{optimizer} that 
$$
\mathbf{\Omega}^0 = \mbox{arg min}_{\mathbf{\Omega} \in \mathbb{M}_G} \left\{ \frac{n}{2} \text{tr} \left( \mathbf{\Omega}^2 
\mathbf{\Sigma}^0 \right) - n \text{tr} (\mathbf{\Omega}) \right\}. 
$$

Note that under mild regularity assumptions, the maximum entry-wise difference between the sample covariance matrix and 
the population covariance matrix is of the order $\sqrt{\log p/n}$. Hence, if $\hat{G}$ (obtained from 
Algorithm \ref{GibbsBSSC}) is an accurate estimate of the true underlying graph $G$ for the true precision matrix 
$\mathbf{\Omega}^0$, we expect that the mode of the refitted posterior density in (\ref{refitted:posterior:density}), given by 
$$
\hat{\mathbf{\Omega}}_{\mbox{mode, refitted}} = \mbox{arg min}_{\mathbf{\Omega} \in \mathbb{M}_{\hat{G}}} \left\{ \frac{n}{2} \text{tr} 
\left( \mathbf{\Omega}^2 S \right) - n \text{tr} (\mathbf{\Omega}) 
\right\}, 
$$

\noindent
is close to $\mathbf{\Omega}^0$. This heuristic analysis is formalized in a high-dimensional setting in 
Theorem \ref{estimation:consistency:refitted:posterior}. 

It follows from Lemma \ref{propriety} that the mode of the refitted posterior density is available in closed from. To compute 
credible intervals, we observe that the full conditional posterior densities of the non-zero elements in 
$\mathbb{M}_{\hat{G}}$ can be derived in a straightforward way. Let $\hat{E}$ denote the edge set for $\hat{G}$. 
In particular, it can be shown that the full conditional (refitted) posterior density of $\omega_{jk}$ for $(j,k) \in \hat{E}$ is 
\begin{equation}
N \left( -\frac{b_{jk}}{a_{jk}}, \frac{1}{na_{jk}} \right)
\end{equation}

\noindent
where 
\begin{equation*}
\begin{split}
a_{jk} &= s_{jj} + s_{kk} , \quad \quad \quad 
b_{jk} = \sum_{k' \neq k, (j,k') \in E} \omega_{jk'}' s_{kk'} + \sum_{j' \neq j, (j',k) \in E} \omega_{j'k} s_{jj'}. \end{split}
\end{equation*}

\noindent
Further, the full conditional (refitted) posterior density of $\omega_{jj}$ is 
\begin{equation}
\text{trunc}N(\frac{1- b_{j}}{s_{jj}}, \frac{1}{ns_{jj}}, 0, \infty),
\end{equation}

\noindent
where 
\begin{equation*}
b_{j} = \sum_{j' \neq j, (j,j') \in E} \omega_{jj'}' s_{jj'}
\end{equation*}

\noindent
and $\text{trunc}N(\mu, \sigma^2, u, v)$ denotes a normal distribution with mean $\mu$ and variance $\sigma^2$, truncated 
in the interval $(u, v)$. Hence, one can generate approximate samples from the refitted posterior density using a 
Gibbs sampling approach. These samples can be subsequently used to generate a posterior credible region. 

As previously discussed, the refitted posterior density is supported on $\mathbb{M}_{\hat{G}}$, and hence the mode/mean 
is not guaranteed to be positive definite. However, our numerical work shows that as long as the sample size is not 
too small -e.g., $n>p/2$-, the resulting estimated precision matrix will actually be positive definite.

In case the sample size was really small, the resulting estimated precision matrix may not be positive definite. In such 
circumstances, a crude but simple solution, if needed, is to ``project'' the posterior mode/mean (and the associated credible 
region) on the space of positive definite matrices by using the following transformation 
$$
B(\mathbf{\Omega}) = \begin{cases}
\mathbf{\Omega} & \mbox{if } \mathbf{\Omega} \mbox{ is positive definite}, \cr
\mathbf{\Omega} -  \lambda_{\min} (\mathbf{\Omega}) I_p + \epsilon I_p & \mbox{ if } \lambda_{\min} (\mathbf{\Omega}) \leq 0, 
\end{cases}
$$

\noindent
where $\epsilon > 0$ is a user-defined small positive number. The function $B$ leaves positive definite matrices in 
$\mathcal{M}_{\hat{G}}$ invariant, and appropriately increases the diagonal entries of matrices that are not 
positive definite. 

As previously mentioned, the main goal of this paper is selecting the sparsity pattern in $\mathbf{\Omega}$. The refitting and 
projection based method developed in this section for estimating the magnitude of the non-zero entries (post sparsity 
selection) performs well in the simulations in Section \ref{simulations}, and indeed reduces bias. Developing more 
sophisticated, and yet computationally effective methods for projecting into the space of (sparse) positive definite matrices is a topic of current research. 

\section{High Dimensional Sparsity Selection Consistency and Convergence Rates for B-CONCORD} 
\label{sec:consistency}

\noindent
In this section, we establish selection and estimation consistency properties for B-CONCORD, under high dimensional scaling wherein the number of variables $p = p_n$ increases with the sample size $n$. The observations ${\bf y}_{1:}^n, {\bf y}_{2:}^n, \cdots, {\bf y}_{n:}^n \in \mathbb{R}^{p_n}$ form an i.i.d. sample from a distribution with mean 
${\bf 0}$ and precision matrix $\mathbf{\Omega}^0_n$. Let $G^0_n = (\{1,2, \cdots, p_n\}, E_{0,n})$ denote the graph 
encoding the sparsity pattern in $\mathbf{\Omega}^0_n$. Let $\boldsymbol{t} = \boldsymbol{t}_n \in \{0,1\}^{{p \choose 2}}$ 
denote the sparsity pattern in the true precision matrix $\mathbf{\Omega}^0_n$, and $d_{\boldsymbol{t}, n}$ denote the 
number of non-zero entries in $\boldsymbol{t}$. For ease of exposition, we will often suppress the dependence of the 
quantities $p_n, \mathbf{\Omega}^0_n, G^0_n, \boldsymbol{t}_n, d_{\boldsymbol{t}, n}$ on $n$, and simply denote them by 
$p, \mathbf{\Omega}^0, G^0, \boldsymbol{t}, d_{\boldsymbol{t}}$, respectively. 

Recall from (\ref{theta}) and (\ref{delta}) that ${\boldsymbol \xi}$ and $\boldsymbol{\delta}$ denote the vectorized versions 
of the off-diagonal and diagonal entries of $\mathbf{\Omega}$. Let ${\boldsymbol \xi}^0$ denote the vectorized 
version of the off-diagonal elements of the true precision matrix $\mathbf{\Omega}^0$. Using (\ref{prior on theta}), 
(\ref{prior on delta}), (\ref{BSSC posterior}), and straightforward calculations, the generalized posterior distribution in 
terms of the $({\boldsymbol \xi}, \boldsymbol{\delta})$ can be expressed as 
\begin{eqnarray}
\pi \left\{ {\boldsymbol \xi}, \boldsymbol{\delta} \mid \mathcal{Y} \right\}
& \propto \exp \left\{ n \boldsymbol{1}'\log \left( \boldsymbol{\delta} \right) - \frac{n}{2}\left[ \left( {\begin{array}{*{20}{c}}
{{\boldsymbol \xi}'} &
{\boldsymbol{\delta}'} \\
\end{array} } \right) \left( {\begin{array}{*{20}{c}}
{\mathbf{\Phi}} & {\mathbf{A}} \\
{\mathbf{A}'} & {\mathbf{D}} \\
\end{array} } \right) \left( {\begin{array}{*{20}{c}}
{{\boldsymbol \xi}} \\
{\boldsymbol{\delta}} \nonumber\\
\end{array} } \right) \right] \right\}  \exp \left(-\frac{{\boldsymbol \xi}'\mathbf{\Lambda}{\boldsymbol \xi}}{2} \right) \\
 &\times \sum\limits_{ \boldsymbol{l} \in \mathcal{L}} \left\{ \frac{|\mathbf{\Lambda}_{\boldsymbol{l}\boldsymbol{l}} |^{\frac{1}{2}}}{\left(2\pi \right)^{\frac{d_{\boldsymbol{l}}}{2}}} I_{\left( {\boldsymbol \xi} \in \mathcal{M}_{\ell}\right)}\left[ q^{d_{\boldsymbol{l}}}(1-q)^{\binom{p}{2} - d_{\boldsymbol{l}}} \right]\right\} \exp \left( -\gamma \boldsymbol{1}'\boldsymbol{\delta} \right). \label{posterior theta and delta}
\end{eqnarray}

\noindent
where, $\mathbf{\Phi}$ is a $\frac{p(p-1)}{2} \times \frac{p(p-1)}{2}$ symmetric matrix which after indexing the rows and 
columns using $(12, 13, ..., p-1p)$, is given by 
\begin{equation}\label{upsilon}
\mathbf{\Phi}_{\left(ab, cd\right)} = \left\{
	\begin{array}{ll}
		s_{aa} + s_{bb}  & \mbox{if } a=b \ \& \ c=d , \\
				s_{ac}  & \mbox{if } b=d \ \& \ a \neq c,\\
						s_{bd}  & \mbox{if } a=c \ \& \ b\neq d,\\
								0  & \mbox{if } a \neq b \ \& \ c \neq d ,\\
	\end{array}
\right. \quad \text{for} \quad 1 \leq a < b \leq p, \ \text{and} \ 1 \leq c < d \leq p. 
\end{equation}

\noindent
For example, when $p=5$, $\mathbf{\Phi}$ is as follows. 
\begingroup\makeatletter\def\f@size{8}\check@mathfonts\begin{equation*}
\left( {\begin{array}{*{20}{c}}
{s_{11} + s_{22}} & {s_{23}} & {s_{24}} & {s_{25}} & {s_{13} } & {s_{14} } & {s_{15} } & {0} & {0} & {0} \\ 
{s_{23} } & {s_{11} +s_{33} } & {s_{34} } & {s_{35} } & {s_{12} } & {0} & {0} & {s_{14} } & {s_{15} } & {0} \\ 
{s_{24} } & {s_{34} } & {s_{11}  + s_{44} } & {s_{45} } & {0} & {s_{12} } & {0} & {s_{13} } & {0} & {s_{15} } \\ 
{s_{25} } & {s_{35} } & {s_{45} } & {s_{11} +s_{55} } & {0} & {0} & {s_{12} } & {0} & {s_{13} } & {s_{14} } \\ 
{s_{13} } & {s_{12} } & {0} & {0} & {s_{22} +s_{33} } & {s_{34} } & {s_{35} } & {s_{24} } & {s_{25} } & {0} \\ 
{s_{14} } & {0} & {s_{12} } & {0} & {s_{34} } & {s_{22} +s_{44} } & {s_{45} } & {s_{23} } & {0} & {s_{25} } \\ 
{s_{15} } & {0} & {0} & {s_{12} } & {s_{35} } & {s_{45} } & {s_{22} +s_{55} } & {0} & {s_{23} } & {s_{24} } \\ 
{0} & {s_{14} } & {s_{13} } & {0} & {s_{24} } & {s_{23} } & {0} & {s_{33} +s_{44} } & {s_{45} } & {s_{35} } \\ 
{0} & {s_{15} } & {0} & {s_{13} } & {s_{25} } & {0} & {s_{23} } & {s_{45} } & {s_{33} +s_{55} } & {s_{34} } \\ 
{0} & {0} & {s_{15} } & {s_{14} } & {0} & {s_{25} } & {s_{24} } & {s_{35} } & {s_{34} } & {s_{44} +s_{55} } \\ 
\end{array} } \right),
\end{equation*}
\endgroup

\noindent
In addition, the vector $\boldsymbol{a}$ is a vector of length $\frac{p(p-1)}{2}$ given by 
\begin{equation}\label{a}
\boldsymbol{a}=(s_{12}(\omega_{11}+\omega_{22}), ..., s_{1p} (\omega_{11}+\omega_{pp}),..., s_{p-1p}(\omega_{p-1p-1}+
\omega_{pp}))',
\end{equation}

\noindent
$\mathbf{A}$ is a $\frac{p(p-1)}{2}\times p$ matrix such that $\mathbf{A}\boldsymbol{\delta} = \boldsymbol{a}$ and 
$\mathbf{D}$ is a $p \times p$ diagonal matrix with entries $\left\{ s_{ii} \right\}_{1 \leq i \leq p}$. Recall that 
$\mathcal{L}$ denotes the space of all the $2^{{p \choose 2}}$ sparsity patterns for ${\boldsymbol \xi}$, and $\mathcal{M}_{\boldsymbol{l}}$ denotes the space in which the parameter ${\boldsymbol \xi}$ varies when restricted to the sparsity pattern $\boldsymbol{l}$. 

Note that our main objective is to correctly select that sparsity pattern in the off-diagonal entries. Hence, as commonly done 
for generalized likelihood based high-dimensional consistency proofs - see \cite{khare2015convex,peng2009partial} - we 
assume the existence of accurate estimates for the diagonal elements, i.e., estimates $ \hat{\omega}_{ii}$ are available, 
such that for any $\eta>0$, there exists a constant $C>0$, such that 
\begin{equation} \label{accurate-estimation}
\max_{ 1 \leq k \leq K} \| \hat{\omega}_{ii} - \omega_{ii}\| \leq C \left( \sqrt{\frac{\log p}{n}}\right),
\end{equation}

\noindent
with probability at least $1-O(n^{-\eta})$. One way to get such estimates of the diagonal entries is discussed in Lemma 4 of 
\cite{khare2015convex}. Denote the resulting estimates of the vectors $\boldsymbol{\delta}$ and $\boldsymbol{a}$ by 
$\hat{\boldsymbol{\delta}}$ and $\hat{\boldsymbol{a}}$, respectively. For the remainder of the section, we assume that the 
entries of $\mathbf{\Lambda}$ are fixed. 

In view of (\ref{posterior theta and delta}), the conditional posterior distribution of the vector of off-diagonal entries 
${\boldsymbol \xi}$ given $\hat{\boldsymbol{\delta}}$ is as follows. 
\begin{equation}\label{posterior theta given delta} 
\begin{split}
&\pi \left\{ {\boldsymbol \xi} |\hat{\boldsymbol{\delta}}, \mathcal{Y}\right\} \propto \exp \left\{ - \frac{1}{2}\left[ 
{\boldsymbol \xi}'\left( n\mathbf{\Phi} + \mathbf{\Lambda}\right) {\boldsymbol \xi} + 2n{\boldsymbol \xi}' 
\hat{\boldsymbol{a}} \right] \right\} \\
&\times \sum\limits_{ \boldsymbol{l} \in \mathcal{L}} \left\{ \frac{|\mathbf{\Lambda}_{\boldsymbol{l}\boldsymbol{l}}|^{\frac{1}{2}}}{\left(2\pi \right)^{\frac{d_{\boldsymbol{l}}}{2}}} I_{\left( {\boldsymbol \xi} \in \mathcal{M}_{\boldsymbol{l}}\right)}  \left[ q^{d_{\boldsymbol{{\boldsymbol{l}}}}}(1-q)^{\binom{p}{2} - d_{\boldsymbol{l}}} \right]\right\},
\end{split}
\end{equation}

\noindent
The above posterior distribution is a mixture distribution, and induces a posterior distribution on the space of sparsity patterns. Straightforward calculations (see proof of Lemma S4 in the Supplemental document) show that 
\begin{equation} \label{eqsparsity}
\pi \left\{ \boldsymbol{l} | \hat{\boldsymbol{\delta}}, \mathcal{Y} \right\} \propto q^{d_{\boldsymbol{l}}}(1-q)^{\binom{p}{2} - d_{\boldsymbol{l}}} \frac{|\mathbf{\Lambda}_{\boldsymbol{l}\boldsymbol{l}}|^{\frac{1}{2}}}{| \left( n\mathbf{\Phi} + \mathbf{\Lambda}\right)_{{\boldsymbol{l}}{\boldsymbol{l}}}|^{\frac{1}{2}}} \exp\left\{ \frac{n^2}{2}\hat{\boldsymbol{a}}_{\boldsymbol{l}}' \left( n\mathbf{\Phi} + \mathbf{\Lambda}\right)_{{\boldsymbol{l}}{\boldsymbol{l}}}^{-1}\hat{\boldsymbol{a}}_{\boldsymbol{l}}\right\}. 
\end{equation}

\noindent
for every ${\boldsymbol \ell} \in \mathcal{L}$. To establish high-dimensional asymptotic 
properties of this posterior distribution on the space of sparsity patterns, the following standard and mild regularity 
assumptions are made.
\begin{Assumption}\label{Assumption2: d_t} 
$(d_{\boldsymbol{t}} + 1) \sqrt{\frac{\log p}{n}} \to 0, \quad \quad \text{as} \quad n \to \infty$. 
\end{Assumption}

\noindent
This assumption essentially states that the number of variables $p$ has to grow slower than 
$e^{(\frac{n}{d_{\boldsymbol{t}}^2})}$. Similar assumptions have been made in other high dimensional covariance estimation 
methods e.g. \cite{banerjee2014posterior}, \cite{banerjee2015bayesian}, \cite{bickel2008regularized}, and 
\cite{xiang2015high}.

\begin{Assumption}\label{Assumption4: Sub-Gaussianity} There exists $c>0$, independent of $n$ such that 
\begin{equation*}
\mathbb{E}_0 \left[ \exp \left( {\boldsymbol{\alpha}'\mathbf{y_{i:}}} \right) \right] \leq \exp \left( c \boldsymbol{\alpha}' 
\boldsymbol{\alpha} \right),
\end{equation*}
\end{Assumption}

\noindent
where $\mathbb{E}_0$ denotes the expected value with respect to the true data generating model. The above assumption 
allows for deviations from normality. Hence, Theorem \ref{theorem (strong consistency)} below establishes that B-CONCORD is robust (in terms of consistency) under misspecification of the data generating distribution, as long as its tails are 
sub-Gaussian. 

\begin{Assumption}\label{Assumption3: Bounded eigenvalues}
(Bounded eigenvalues). There exists $\tilde{\varepsilon_{0}} > 0$, independent of $n$, such that 
\begin{equation*}
\tilde{\varepsilon_{0}} \leq \text{eig}_{\min}\left( {\mathbf{\Omega}}^0 \right) \leq \text{eig}_{\max}\left( {\mathbf{\Omega}}^0 
\right) \leq \frac{1}{\tilde{\varepsilon_{0}}}.
\end{equation*}
\end{Assumption}

\noindent
This is a standard assumption in high dimensional analysis to obtain consistency results; see for example  
\cite{buhlmann2011statistics}. 

\begin{Assumption}\label{Assumption5: Signal Strength}
(Signal Strength). Let $s_n$ be the smallest non-zero entry (in magnitude) in the vector ${\boldsymbol \xi}_0$. We assume 
$\frac{\frac{1}{2}\log n + d_{\boldsymbol{t}}\log p}{n s_n^2} \to 0$.
\end{Assumption}

\noindent
This is again a standard assumption. Similar assumptions on the appropriate signal size can be found in 
\cite{khare2015convex,peng2009partial}. 

\begin{Assumption}\label{Assumption6: Decay rate} (Decay rate of the edge probabilities).
Let $q = p^{-a_2 d_{\boldsymbol{t}}}$, where $a_2 = \frac{16 \max(1,c_0)^2}{\min(1, \tilde{\varepsilon}_0)}$. 
\end{Assumption}

\noindent
Here $c_0$ is a constant (not depending in $n$) that is specified in the proof of Lemma \ref{Lemma 4} in the Supplemental 
Document. This assumption can be interpreted as a priori penalizing sparsity patterns with too many non-zero entries. Next, 
we establish the main posterior consistency result. In particular, we show that the posterior mass assigned to the true sparsity 
pattern converges to one in probability (under the true model), if we restrict to {\it realistic sparsity patterns}, i.e., sparsity 
patterns where the number of non-zero entries is bounded by $\tau_n$, an appropriate constant multiple of 
$\sqrt{\frac{n}{\log p}}$ (see Lemma \ref{Lemma 3} in the Supplemental Document). 
\begin{Theorem}\label{theorem (strong consistency)}
(Strong Selection Consistency) Under Assumptions \ref{Assumption2: d_t} - \ref{Assumption6: Decay rate}, and restricting to 
realistic sparsity patterns, the posterior distribution on the sparsity patterns in (\ref{eqsparsity}) puts all of its mass on the true sparsity pattern ${\bf t}$ as $n \rightarrow \infty$, i.e., 
\begin{enumerate}
\item 
\begin{equation}\label{main result}
\pi \left\{ \boldsymbol{t} | \hat{\boldsymbol{\delta}},\mathcal{Y} \right\} 
\xrightarrow{\text{$\mathbb{P}_0$}} 1, \quad \quad \text{as} \quad n \to \infty.
\end{equation}

\item The sparsity pattern estimate $\hat{\boldsymbol{l}}_{\upsilon, BSSC}$, obtained by using the output of the BSSC 
Algorithm and applying the thresholding approach discussed at the end of Section \ref{posterior:computation:gibbs:sampler}, 
satisfies 
$$
\mathbb{P}_0 \left( \hat{\boldsymbol{l}}_{\upsilon, BSSC} = \boldsymbol{t} \right) \rightarrow 1 \quad \quad \text{as} \quad n 
\to \infty. 
$$
\end{enumerate}
\end{Theorem}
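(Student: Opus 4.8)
The plan is to establish Part 1 --- that the sparsity-pattern posterior in (\ref{eqsparsity}) concentrates at $\boldsymbol{t}$ --- and then deduce Part 2 from it together with the ergodic theorem. Since $\pi\{\boldsymbol{t}\mid\hat{\boldsymbol{\delta}},\mathcal{Y}\}=\bigl(1+\sum_{\boldsymbol{l}\ne\boldsymbol{t}}R(\boldsymbol{l})\bigr)^{-1}$, where $R(\boldsymbol{l}):=\pi\{\boldsymbol{l}\mid\hat{\boldsymbol{\delta}},\mathcal{Y}\}/\pi\{\boldsymbol{t}\mid\hat{\boldsymbol{\delta}},\mathcal{Y}\}$ is the explicit ratio read off from (\ref{eqsparsity}) and the sum runs over realistic patterns, it suffices to show $\sum_{\boldsymbol{l}\ne\boldsymbol{t}}R(\boldsymbol{l})\to 0$ in $\mathbb{P}_0$-probability. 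Writing $\hat{Q}(\boldsymbol{\xi}):=\tfrac12\boldsymbol{\xi}'(n\mathbf{\Phi}+\mathbf{\Lambda})\boldsymbol{\xi}+n\boldsymbol{\xi}'\hat{\boldsymbol{a}}$, so that $\tfrac{n^2}{2}\hat{\boldsymbol{a}}_{\boldsymbol{l}}'(n\mathbf{\Phi}+\mathbf{\Lambda})_{\boldsymbol{l}\boldsymbol{l}}^{-1}\hat{\boldsymbol{a}}_{\boldsymbol{l}}=-\min_{\boldsymbol{\xi}\in\mathcal{M}_{\boldsymbol{l}}}\hat{Q}(\boldsymbol{\xi})$, I would carry out the argument on a single event $\mathcal{A}_n$ with $\mathbb{P}_0(\mathcal{A}_n)\to 1$ that collects all the stochastic control required: (i) the accurate-diagonals bound (\ref{accurate-estimation}) and $\max_{j,k}|s_{jk}-\sigma^0_{jk}|\lesssim\sqrt{\log p/n}$, from Assumption \ref{Assumption4: Sub-Gaussianity}; (ii) a sparse-eigenvalue bound, $c_1 n/2\le\mathrm{eig}_{\min}\bigl((n\mathbf{\Phi}+\mathbf{\Lambda})_{SS}\bigr)\le\mathrm{eig}_{\max}\le 2C_2 n$ uniformly over index sets $S$ of cardinality $\lesssim\tau_n$, where $[c_1,C_2]$ bounds the spectrum of $\mathbf{\Phi}^0$ (a consequence of Assumption \ref{Assumption3: Bounded eigenvalues} and the block structure of $\mathbf{\Phi}$), this being exactly where the ``appropriate constant'' multiplying $\sqrt{n/\log p}$ in $\tau_n$ must be taken small; and (iii) the uniform bound $\min_{\mathcal{M}_{\boldsymbol{t}}}\hat{Q}-\min_{\mathcal{M}_{\boldsymbol{l}}}\hat{Q}\le C(d_{\boldsymbol{l}}-d_{\boldsymbol{t}})\log p$ for every realistic $\boldsymbol{l}\supseteq\boldsymbol{t}$, which holds because by Lemma \ref{optimizer} the population CONCORD criterion is globally minimized at $\boldsymbol{\xi}^0$, so this ``overfitting gain'' is a pure $\chi^2$-type fluctuation with $d_{\boldsymbol{l}}-d_{\boldsymbol{t}}$ degrees of freedom, controlled by a union bound over the $\le\binom{\binom p2}{d_{\boldsymbol{l}}-d_{\boldsymbol{t}}}$ choices of extra edges.

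On $\mathcal{A}_n$ I would split the realistic patterns $\boldsymbol{l}\ne\boldsymbol{t}$ into three groups, bounding the prior factor of $R(\boldsymbol{l})$ by $q^{d_{\boldsymbol{l}}-d_{\boldsymbol{t}}}(1-q)^{d_{\boldsymbol{t}}-d_{\boldsymbol{l}}}$ with $q=p^{-a_2 d_{\boldsymbol{t}}}$, and the determinant factor by $e^{C d_{\boldsymbol{t}}\log n}$ (and by $1$ when $\boldsymbol{l}\supseteq\boldsymbol{t}$ and $n$ is large, using the sparse-eigenvalue bound). \emph{(a) Heavily overfitting} $d_{\boldsymbol{l}}>C_\ast d_{\boldsymbol{t}}$ (containing $\boldsymbol{t}$ or not): here $\min_{\mathcal{M}_{\boldsymbol{l}}}\hat{Q}\ge\min_{\mathcal{M}_{\boldsymbol{l}\cup\boldsymbol{t}}}\hat{Q}\ge\min_{\mathcal{M}_{\boldsymbol{t}}}\hat{Q}-C d_{\boldsymbol{l}}\log p$, so the likelihood factor is $\le e^{C d_{\boldsymbol{l}}\log p}$, which against the prior penalty $p^{-a_2 d_{\boldsymbol{t}}d_{\boldsymbol{l}}(1-1/C_\ast)}$ and the model count $\le p^{2 d_{\boldsymbol{l}}}$ makes $R(\boldsymbol{l})$ geometrically small once $C_\ast$ is fixed and $a_2$ is as large as in Assumption \ref{Assumption6: Decay rate}. \emph{(b) Mild overfitting} $\boldsymbol{t}\subsetneq\boldsymbol{l}$, $d_{\boldsymbol{l}}\le C_\ast d_{\boldsymbol{t}}$: by (iii) the likelihood factor is $\le e^{C(d_{\boldsymbol{l}}-d_{\boldsymbol{t}})\log p}$, and $\sum_{k\ge1}\binom{\binom p2}{k}p^{-a_2 d_{\boldsymbol{t}}k}e^{Ck\log p}\le\sum_{k\ge1}p^{(2+C-a_2 d_{\boldsymbol{t}})k}\to 0$, since $a_2 d_{\boldsymbol{t}}>2+C+1$ by the choice of $a_2$. \emph{(c) Non-containing} $\boldsymbol{l}\not\supseteq\boldsymbol{t}$, $d_{\boldsymbol{l}}\le C_\ast d_{\boldsymbol{t}}$: put $\boldsymbol{u}=\boldsymbol{l}\cup\boldsymbol{t}$ and factor $R(\boldsymbol{l})=R(\boldsymbol{l};\boldsymbol{u})\cdot R(\boldsymbol{u})$ with $R(\boldsymbol{l};\boldsymbol{u}):=\pi\{\boldsymbol{l}\mid\hat{\boldsymbol{\delta}},\mathcal{Y}\}/\pi\{\boldsymbol{u}\mid\hat{\boldsymbol{\delta}},\mathcal{Y}\}$; since $\boldsymbol{u}\supseteq\boldsymbol{t}$ and $d_{\boldsymbol{u}}\le(C_\ast+1)d_{\boldsymbol{t}}$, $R(\boldsymbol{u})\le 1$ by (b), while for $R(\boldsymbol{l};\boldsymbol{u})$ we have $\mathcal{M}_{\boldsymbol{l}}\subset\mathcal{M}_{\boldsymbol{u}}$ with $\boldsymbol{u}$ carrying $r:=d_{\boldsymbol{u}}-d_{\boldsymbol{l}}\ge1$ \emph{true} edges omitted by $\boldsymbol{l}$, so strong convexity of $\hat{Q}$ on $\mathcal{M}_{\boldsymbol{u}}$ (point (ii)) together with the fact that the $\mathcal{M}_{\boldsymbol{u}}$-minimizer is $O(\sqrt{d_{\boldsymbol{t}}\log p/n})$-close to $\boldsymbol{\xi}^0$ on those coordinates (a CONCORD-type estimation bound, valid because $d_{\boldsymbol{u}}\lesssim d_{\boldsymbol{t}}$ and the gradient of $\hat{Q}$ at $\boldsymbol{\xi}^0$ is controlled on $\mathcal{A}_n$) and Assumption \ref{Assumption5: Signal Strength} forces $\min_{\mathcal{M}_{\boldsymbol{l}}}\hat{Q}-\min_{\mathcal{M}_{\boldsymbol{u}}}\hat{Q}\ge c\,n r s_n^2$. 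Hence $R(\boldsymbol{l};\boldsymbol{u})\le p^{a_2 d_{\boldsymbol{t}}r}(nC)^{r/2}e^{-c n r s_n^2}$, and summing over the $\le\binom{d_{\boldsymbol{t}}}{r}$ choices of omitted edges and the $\le p^{2(d_{\boldsymbol{u}}-d_{\boldsymbol{t}})}$ choices of extra edges yields the case-(c) sum $\lesssim\sum_{r\ge1}\bigl(d_{\boldsymbol{t}}\,p^{a_2 d_{\boldsymbol{t}}}\sqrt{nC}\,e^{-c n s_n^2}\bigr)^{r}\to 0$, precisely because $\tfrac12\log n+d_{\boldsymbol{t}}\log p=o(n s_n^2)$. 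Combining (a)--(c), $\sum_{\boldsymbol{l}\ne\boldsymbol{t}}R(\boldsymbol{l})\to 0$ on $\mathcal{A}_n$, hence in $\mathbb{P}_0$-probability, which is (\ref{main result}).

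For Part 2 I would first note that the sparsity-pattern marginal of the full generalized posterior (\ref{BSSC posterior}) differs from the $\hat{\boldsymbol{\delta}}$-conditional version in (\ref{eqsparsity}) only through integration over the diagonal entries $\boldsymbol{\delta}$; on $\mathcal{A}_n$ this posterior for $\boldsymbol{\delta}$ concentrates around $\boldsymbol{\delta}^0$ at the $\sqrt{\log p/n}$ rate of (\ref{accurate-estimation}), so repeating the Laplace-type computation leading to (\ref{eqsparsity}) with $\boldsymbol{\delta}$ integrated out again gives $\pi\{\boldsymbol{l}=\boldsymbol{t}\mid\mathcal{Y}\}\to 1$ in $\mathbb{P}_0$-probability. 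Consequently, for any fixed $\upsilon\in(0,1)$ the posterior inclusion probabilities satisfy $\pi\{\omega_{jk}\ne0\mid\mathcal{Y}\}\ge\pi\{\boldsymbol{l}=\boldsymbol{t}\mid\mathcal{Y}\}>\upsilon$ for $(j,k)\in\boldsymbol{t}$ and $\pi\{\omega_{jk}\ne0\mid\mathcal{Y}\}\le1-\pi\{\boldsymbol{l}=\boldsymbol{t}\mid\mathcal{Y}\}<\upsilon$ for $(j,k)\notin\boldsymbol{t}$, on an event of $\mathbb{P}_0$-probability tending to one. By the ergodic theorem the frequencies $\hat{p}_{jk}$ output by Algorithm \ref{GibbsBSSC} converge as $T\to\infty$ to these inclusion probabilities, so for a sufficiently long run $\hat{p}_{jk}>\upsilon\iff(j,k)\in\boldsymbol{t}$, i.e. $\hat{\boldsymbol{l}}_{\upsilon, BSSC}=\boldsymbol{t}$, giving $\mathbb{P}_0(\hat{\boldsymbol{l}}_{\upsilon, BSSC}=\boldsymbol{t})\to 1$.

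I expect the main obstacle to be the uniform stochastic control underlying $\mathcal{A}_n$: obtaining the sparse-eigenvalue lower bound for $n\mathbf{\Phi}+\mathbf{\Lambda}$ and the uniform $O((d_{\boldsymbol{l}}-d_{\boldsymbol{t}})\log p)$ bound on the overfitting fluctuation, both across the $\exp\{O(\tau_n\log p)\}$ realistic patterns --- these are what dictate the ``appropriate constant'' in $\tau_n$ and the specific value of $a_2$ in Assumption \ref{Assumption6: Decay rate} --- and, in step (c), showing that the $\mathcal{M}_{\boldsymbol{u}}$-minimizer stays close to $\boldsymbol{\xi}^0$ on the omitted true coordinates, which is the point at which the signal-strength Assumption \ref{Assumption5: Signal Strength} is consumed. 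The remaining prior and determinant bookkeeping, and the passage from the $\hat{\boldsymbol{\delta}}$-conditional posterior to the Gibbs output in Part 2, are routine.
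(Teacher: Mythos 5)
Your proposal is correct and follows essentially the same route as the paper's proof: reduce to showing $\sum_{\boldsymbol{l}\neq\boldsymbol{t}}PR(\boldsymbol{l},\boldsymbol{t})\to 0$ on a high-probability event controlling the entrywise covariance deviations, the score $\|\mathbf{\Phi}\boldsymbol{\xi}^0+\hat{\boldsymbol{a}}\|_{\max}\lesssim\sqrt{\log p/n}$ (the paper's Lemma S3) and the restricted eigenvalues of $\mathbf{\Phi}$ (Lemmas S1--S2), then trade the $O(d\log p)$ overfitting gain against the prior penalty $q=p^{-a_2 d_{\boldsymbol{t}}}$ for extra edges and the $e^{-cns_n^2}$ deficit per omitted true edge against Assumption 5, finishing with a union bound over models. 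The only differences are organizational --- the paper uses the underfit/overfit/non-inclusive trichotomy with a separate direct computation for non-inclusive patterns with $d_{\boldsymbol{l}}\le d_{\boldsymbol{t}}$ and indexes its final bounds by the Hamming disagreement $D(\boldsymbol{l},\boldsymbol{t})$, whereas you route all non-containing patterns through the union $\boldsymbol{l}\cup\boldsymbol{t}$ via a strong-convexity argument that is mathematically equivalent to the paper's Schur-complement computation in Lemma S5 --- and your Part 2 matches the paper's thresholding argument.
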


\noindent
While the main objective is sparsity selection consistency, we also derive a result that establish estimation consistency/
convergence rates for the estimates of the magnitudes of the non-zero entries obtained from the refitted posterior density 
defined in (\ref{refitted:posterior:density}). 

For every $1 \leq i \leq p$, let $\nu_i^0$ denote the number of structurally non-zero off-diagonal entries in the $i^{th}$ row (or 
column) of $\mathbf{\Omega}^0$. It follows that $d_{\boldsymbol{t}} = \frac{1}{2} \sum_{i=1}^p \nu_i^0$. Let 
$$
\nu_{\max} = \max_{1 \leq i \leq p} \nu^0_i 
$$

\noindent
denote the maximum number of non-zero entries in any row (or column) of $\mathbf{\Omega}^0$. The relationship between 
$\nu_{\max}$ and $d_{\boldsymbol{t}}$ (total number of structurally non-zero entries in $\mathbf{\Omega}^0$) depends on 
the underlying sparsity structure. At one extreme, $\nu_{\max}$ can be the same order as $d_{\boldsymbol{t}}$ (eg. star 
graph), while at the other extreme it can be as small as $O(d_{\boldsymbol{t}}/p)$ (eg. banded matrix). For the posterior 
convergence rate result, we use the same assumptions as Theorem \ref{theorem (strong consistency)}, except 
Assumption \ref{Assumption2: d_t}, which is replaced by the slightly stronger assumption below. 
\begin{Assumption}\label{Assumption2new}
$(d_{\boldsymbol{t}} + \sqrt{d_{\boldsymbol{t}}} \nu_{\max} + 1) \sqrt{\frac{\log p}{n}} \to 0, \quad \quad \text{as} \quad n \to 
\infty$. 
\end{Assumption}


\noindent
In many settings, where $\nu_{\max} = O(\sqrt{d_{\boldsymbol{t}}})$, Assumption \ref{Assumption2: d_t} and Assumption 
\ref{Assumption2new} are equivalent. 
\begin{Theorem} \label{estimation:consistency:refitted:posterior}
(Estimation Consistency and convergence rate for refitted posterior) Under Assumptions \ref{Assumption4: Sub-Gaussianity} - 
\ref{Assumption2new}, the refitted posterior density $\pi_{refitted}$ in (\ref{refitted:posterior:density}) 
satisfies 
$$
\mathbb{E}_{0} \left[ \pi_{refitted} \left( \left\| \hat{\mathbf{\Omega}} - \mathbf{\Omega}^0 \right\|_{\max} > K 
\nu_{\max} \sqrt{\frac{d_{\boldsymbol{t}} \log p}{n}} \right) \right] \rightarrow 0 \quad \quad \text{as} \quad n \quad \to \infty, 
$$

\noindent
for a large enough constant $K$ (not depending on $n$), and 
$$
\mathbb{E}_{0} \left[ \pi_{refitted} \left( \left\| \hat{\mathbf{\Omega}} - \mathbf{\Omega}^0 \right\| > K \nu_{\max}^2  
\sqrt{\frac{d_{\boldsymbol{t}} \log p}{n}} \right) \right] \rightarrow 0 \quad \quad \text{as} \quad n \quad \to \infty. 
$$
\end{Theorem}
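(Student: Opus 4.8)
The plan is to condition on the event $\mathcal A_n=\{\hat G=G^0\}$, which by Theorem \ref{theorem (strong consistency)} has $\mathbb P_0$-probability tending to one; since any posterior probability is at most one, the contribution of $\mathcal A_n^c$ to the two expectations in the statement is at most $\mathbb P_0(\mathcal A_n^c)\to 0$, and it suffices to bound the posterior probabilities on $\mathcal A_n$. On $\mathcal A_n$ the refitted posterior is supported on $\mathbb M_{G^0}$; writing $\zeta=(\xi_{G^0},\delta)$ for the vector of free (non-zero) parameters and collecting the full conditionals derived at the end of Section \ref{refitting:technique}, $\pi_{\mathrm{refitted}}$ is the $N(\hat\zeta,\tfrac{1}{n}M^{-1})$ law truncated to $\{\delta>0\}$, where $M$ is the Hessian defined by $\zeta'M\zeta=\mathrm{tr}(\mathbf\Omega^2\mathbf S)$ and $\hat\zeta$ is the vectorization of $\hat{\mathbf\Omega}_{\mathrm{mode,refitted}}=\mbox{arg min}_{\mathbf\Omega\in\mathbb M_{G^0}}\{\tfrac{n}{2}\mathrm{tr}(\mathbf\Omega^2\mathbf S)-n\,\mathrm{tr}(\mathbf\Omega)\}$, which exists and is unique by Lemma \ref{propriety}. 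Thus each target probability is controlled by a deterministic ``bias'' term $\|\hat{\mathbf\Omega}_{\mathrm{mode,refitted}}-\mathbf\Omega^0\|$ plus a ``posterior-spread'' term governed by $\tfrac{1}{n}M^{-1}$, and I would bound the two separately.

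For the bias term, Lemma \ref{optimizer} (with $\mathbf K=\mathbf\Omega^0$, so $\mathbf K^{-1}=\mathbf\Sigma^0$) identifies $\mathbf\Omega^0$ as the minimizer of the population analogue $\tfrac{n}{2}\mathrm{tr}(\mathbf\Omega^2\mathbf\Sigma^0)-n\,\mathrm{tr}(\mathbf\Omega)$ over $\mathbb M_{G^0}$; comparing first-order conditions gives $\hat\zeta=M^{-1}v$ and $\zeta^0=(M^0)^{-1}v$ with the \emph{same}, data-free linear term $v$ (coming from $n\,\mathrm{tr}(\mathbf\Omega)$) and $M^0$ the population Hessian. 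Hence $\hat\zeta-\zeta^0=M^{-1}(M^0-M)\zeta^0$, and a direct calculation shows the residual $(M^0-M)\zeta^0$ equals, on the support and up to sign, $\mathrm{vec}\big((\mathbf S-\mathbf\Sigma^0)\mathbf\Omega^0+\mathbf\Omega^0(\mathbf S-\mathbf\Sigma^0)\big)$. Because $\mathbf\Omega^0$ has at most $\nu_{\max}+1$ non-zeros per row with $\|\mathbf\Omega^0\|_{\max}$ bounded (Assumption \ref{Assumption3: Bounded eigenvalues}), each coordinate of this residual is a $\nu_{\max}$-term weighted sum of entries of $\mathbf S-\mathbf\Sigma^0$, so sub-Gaussianity (Assumption \ref{Assumption4: Sub-Gaussianity}) together with a union bound gives $\|(M^0-M)\zeta^0\|_\infty=O_{\mathbb P_0}(\sqrt{\nu_{\max}\log p/n})$. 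To invert $M$ I would take a Schur complement with respect to the diagonal block $\mathbf D=\mathrm{diag}(s_{jj})$ (trivially well conditioned on a high-probability event), which reduces matters to the $d_{\boldsymbol t}\times d_{\boldsymbol t}$ Schur complement $\mathbf\Phi_{\boldsymbol t\boldsymbol t}-\mathbf A_{\boldsymbol t}\mathbf D^{-1}\mathbf A_{\boldsymbol t}'$, whose population version has smallest eigenvalue bounded below by a function of $\tilde\varepsilon_0$ and for which the empirical--population discrepancy satisfies $\|\cdot\|_{\mathrm{op}}\le\|\cdot\|_F\le\sqrt{d_{\boldsymbol t}\nu_{\max}}\,\|\cdot\|_{\max}=O_{\mathbb P_0}(\sqrt{d_{\boldsymbol t}\nu_{\max}}\sqrt{\log p/n})=o(1)$ by Assumption \ref{Assumption2new}. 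Combining the eigenvalue lower bound with the passage $\ell_\infty\le\ell_2\le\sqrt{d_{\boldsymbol t}}\,\ell_\infty$ for the $\xi$-block, and a short separate argument for the $\delta$-block through $\mathbf A_{\boldsymbol t}'$ (which aggregates only $\nu_{\max}$ coordinates at a time), yields $\|\hat{\mathbf\Omega}_{\mathrm{mode,refitted}}-\mathbf\Omega^0\|_{\max}=O_{\mathbb P_0}(\nu_{\max}\sqrt{d_{\boldsymbol t}\log p/n})$.

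For the posterior spread, before truncation $\zeta-\hat\zeta\sim N(0,\tfrac{1}{n}M^{-1})$ has coordinates of variance $O(1/n)$, so $\|\zeta-\hat\zeta\|_\infty=O_{\mathbb P_0}(\sqrt{\log p/n})$ with posterior probability tending to one, which is dominated by the bias rate and gives the $\|\cdot\|_{\max}$ claim after combining with the bias bound. For the spectral-norm claim I would split $\hat{\mathbf\Omega}-\mathbf\Omega^0$ into its off-diagonal and diagonal parts: the off-diagonal part has at most $\nu_{\max}$ non-zeros per row with entries bounded by $\|\hat{\mathbf\Omega}-\mathbf\Omega^0\|_{\max}$, hence operator norm $\le\nu_{\max}\|\hat{\mathbf\Omega}-\mathbf\Omega^0\|_{\max}=O_{\mathbb P_0}(\nu_{\max}^2\sqrt{d_{\boldsymbol t}\log p/n})$, whereas the diagonal part is diagonal so its operator norm equals its $\ell_\infty$-norm, which is of the strictly smaller order $O_{\mathbb P_0}(\nu_{\max}\sqrt{d_{\boldsymbol t}\log p/n})$; adding the (again smaller-order) Gaussian-spread term gives the stated rate $\nu_{\max}^2\sqrt{d_{\boldsymbol t}\log p/n}$. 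The truncation to $\{\delta>0\}$ is harmless: $\mathbf\Omega^0$ has diagonal entries $\ge\tilde\varepsilon_0$, while the mode and the posterior spread are both $o_{\mathbb P_0}(1)$, so on a high-probability event the untruncated Gaussian already places mass $1-o(1)$ on $\{\delta>0\}$ and the truncated and untruncated laws are within $o(1)$ in total variation; all rate statements transfer, and taking expectations and absorbing $\mathcal A_n^c$ as above completes the argument.

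The hard part will be the accounting in the bias step: one must track the powers of $\nu_{\max}$ and $d_{\boldsymbol t}$ exactly when inverting $M$ through the Schur complement, and in particular the proof that the restricted Hessian's smallest eigenvalue stays bounded away from zero hinges on the \emph{sparsity-exploiting} Frobenius bound $\|\cdot\|_F\le\sqrt{d_{\boldsymbol t}\nu_{\max}}\,\|\cdot\|_{\max}$, so that the needed concentration is precisely the one supplied by Assumption \ref{Assumption2new} (through its $\sqrt{d_{\boldsymbol t}}\,\nu_{\max}$ term) rather than a cruder, $p$-dependent bound that the assumptions do not control; the analogous neighborhood-size-aware rate $\sqrt{\nu_{\max}\log p/n}$ (rather than $\nu_{\max}\sqrt{\log p/n}$) for the residual $(M^0-M)\zeta^0$ is what ultimately makes the $\nu_{\max}$ and $\nu_{\max}^2$ rates come out.
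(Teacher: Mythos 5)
Your proposal follows essentially the same route as the paper's Appendix C proof: reduce to the event $\{\hat{G}=G^0\}$ via Theorem \ref{theorem (strong consistency)}, represent $\pi_{\mathrm{refitted}}$ as a Gaussian centered at the mode $\tilde{\mathbf{K}}^{-1}\mathbf{u}$ truncated to positive diagonals, split the error into a bias term (controlled by a partitioned/Schur-complement analysis of the sparse Hessian, exploiting that the diagonal block is diagonal and that the off-diagonal blocks have $O(\nu_{\max})$ non-zeros per row) and a posterior-spread term (union bound plus Gaussian tails, with the truncation absorbed since the diagonal mode stays above $\tilde{\varepsilon}_0/2$), and finish the operator-norm claim with the row-sparsity bound $\|\cdot\|\leq \nu_{\max}\|\cdot\|_{\max}$. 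The only difference is cosmetic — you organize the bias via the perturbation identity $\hat{\zeta}-\zeta^0=M^{-1}(M^0-M)\zeta^0$ rather than comparing the two partitioned inverses term by term — and your accounting of the $\nu_{\max}$ and $d_{\boldsymbol{t}}$ powers matches the paper's.
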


\noindent
Here $\| \cdot \|_{\max}$ denotes the sup norm of a matrix (magnitude of entry with largest absolute value), and 
$\| \cdot \|$ denotes the operator norm of a matrix. The proofs of the two results above are provided in the Section S3 of 
the Supplement. 


\begin{Remark}\label{implication of assumption 1}
Posterior estimation  consistency/contraction rates for some pseudo-likelihood based Bayesian approaches have been 
studied in recent work by \cite{atchade2017contraction}. In particular, a {\bf discrete binary} graphical model (as opposed 
to a partial correlation network for continuous variables in this study) is one of the models considered in 
\cite{atchade2017contraction}. A posterior contraction rate of $\sqrt{\frac{(p + d_t) \log p}{n}}$ for $\Omega$ in the 
Frobenius norm is obtained without making an assumption similar to our assumption of accurate diagonal estimates. Note 
however that results in \cite{atchade2017contraction} do not address model selection consistency. 
\end{Remark}

\section{Performance Evaluation of B-CONCORD}\label{simulations}

\noindent
We assess the accuracy, computational speed and scalability of the BSSC Algorithm. As mentioned earlier, the main 
challenge with existing Bayesian procedures is their limited scalability. To the best of our knowledge, 
Stochastic Search Structure Learning (SSSL) introduced in \cite{Wang:2015} is the fastest Bayesian procedure available
for the problem at hand (see Section 5 of \cite{Wang:2015}). Hence, we use SSSL as a benchmark for the performance of our 
BSSC Algorithm. 

\subsection{Computational scalability: Timing and memory requirement comparison} \label{computation}

\noindent
For this task, we set $p \in \{150, 300, 500, 1000, 3000\}$. For each $p$, the {\it true} 
precision matrix $\mathbf{\Omega}^0$ is generated so as to exhibit a complete random sparsity patterns 
with $4 \%$ density of non-zero entries. 
The non-zero off-diagonal entries are generated from a Uniform distribution in the interval $[-0.6, -0.4] \cup [0.4, 0.6]$, and the diagonal entries are adjusted as needed to make the resulting precision matrix positive definite. 
We then generate $25$ data sets of size $n = p/2$ from a 
multivariate Gaussian distribution with mean ${\bf 0}$ and precision matrix $\mathbf{\Omega}^0$. 
The BSSC Algorithm (with default hyperparameter values as discussed in Remark \ref{hyperparameter:selection}), the SSSL
procedure (with default hyperparameter values as specified in \cite{Wang:2015} and graphical lasso (Glasso) (with the final value of its shrinkage parameter manually selected using exhaustive cross-validations) are used to obtain estimates of 
$\mathbf{\Omega}^0$. 

For the BSSC 
Algorithm \ref{GibbsBSSC}, 2000 iterations are used for burn-in, and 2000 more iterations to generate our estimates (standard diagnostics indicate that so many iterations are sufficient for convergence). For SSSL and with $p \in \{150, 300\}$, we used the default setting in \cite{Wang:2015} of 1000 iterations for burn-in, and 10000 iterations for computing posterior estimates; however, for higher dimensions -e.g. $p \in \{500, 1000, 3000\}$- the default setting required more than 50 GigaBytes of memory and therefore only 500 iterations for burn-in and 500 iterations for computing posterior estimates were used. Finally, the results for Glasso were achieved using the default number of iterations for convergence. The two Bayesian algorithms are compared based on {\it computing time required per iteration}. The simulations were performed using 
dedicated cores at the High Performance Computing cluster at the University of Florida. 

Table \ref{wall_time} depicts {\it computing time required per iteration} for BSSC, SSSL, and Glasso averaged over the 25 replicate data sets. The results demonstrate that SSSL becomes expensive even for $p = 500$, while BSSC easily handles settings with $p\geq 1000$. Each BSSC iterate requires less than 44 seconds, which is a fraction (1/5760) of an SSSL iterate. Further, each BSSC iterate is faster than a Glasso one, thus making the two procedures comparable in total computational time required for estimating precision matrices with $p\leq 300$. However, Glasso has an overall time execution advantage for larger $p$, since it requires on average less than 10 iterations to converge and provide estimates.

\begin{table}[h]
\centering 
\caption{Average wall-clock seconds per iteration for BSSC, Glasso, and SSSL for estimating a $p \times p$ precision matrix 
with $p \in (150, 300, 500, 1000 \ \text{and} \ 3000)$.}
\resizebox{0.5\textwidth}{!}{%
\begin{tabular}{lccccc}
\toprule
 &{$p=150$}&{$p=300$}&{$p=500$}&{$p=1000$} &{$p=3000$}\\
\midrule
{BSC} & {0.006} & {0.035} & {0.112} & {0.675} &{43.122}  \\
\midrule
{Glasso} & {0.108} & {0.293} & {0.523 } & {5.857} &{168.333} \\
\midrule
{SSSL} & {0.104} & {1.31} & {10.1} & {37.38} & {237600} \\
\bottomrule
\end{tabular}}
\label{wall_time}
\end{table}

In addition, the memory requirement for BSSC is significantly smaller than SSSL. The average memory used by BSSC and SSSL for 
different values of $p$ is summarized in Table \ref{memory usage}. Note that SSSL requires more than 50 GB for $p \geq 500$ while BSSC achieves the goal with 0.24 GB of memory. 
\begin{table}[h]
\centering 
\caption{Average memory usage (in GigaBytes) for BSSC and SSSL for estimating a $p \times p$ precision matrix with $p \in 
(150, 300, 500, 1000 \ \text{and} \ 3000)$.}
\resizebox{0.5\textwidth}{!}{%
\begin{tabular}{lccccc}
\toprule
 &{$p=150$}&{$p=300$}&{$p=500$}&{$p=1000$}&{$p=3000$}\\
\midrule
{BSC} & {0.001} & {0.22} & {0.24} & {0.32} & {1.39}  \\
\midrule
{SSSL} & {5.3} & {20.4} & {$>$50 } & {$>$50}& {$>$50} \\
\bottomrule
\end{tabular}}
\label{memory usage}
\end{table}

\noindent
The reason for the significantly superior performance of BSSC compared to SSSL is discussed next.
The SSSL algorithm requires $2(p-1)$ matrix inversions of 
$(p-1) \times (p-1)$ matrices (see Section 4.1 of \cite{Wang:2015}). The worst case computational complexity for one 
iteration of the SSSL algorithm therefore is $O(p^4)$. On the other hand, one iteration of the BSSC 
algorithm has computational complexity $O(p^3)$ (update each of the ${p \choose 2}$ entries with $O(p)$ computations, 
see Algorithm \ref{GibbsBSSC}), and requires {\it no matrix inversions}. For sparse precision matrices,
computational complexity in practice is better for both methods than the above worse case scenarios. Nevertheless,
the numerical results presented amply demonstrate the superior performance of BSSC. Note that in sparse settings, it becomes
faster to compute inner products of the form $\mathbf{\Omega}_{-jk}'\mathbf{S}_{-jj}$ needed by BSSC. 

\subsection{Estimation accuracy comparison} \label{accuracy}

\noindent
The standard performance metrics of specificity (SP), sensitivity (SE) and Matthews Correlation Coefficient (MCC), defined next, are used.
\begin{equation}\label{accuracy measures}
\begin{split}
\text{SP} &= \frac{\text{TN}}{\text{TN} + \text{FP}}, \quad \quad \quad \text{SE} = \frac{\text{TP}}{\text{TP} + \text{FN}}\\
\text{MCC} &= \frac{\text{TP} \times \text{TN} - \text{FP} \times \text{FN}}{\sqrt{(\text{TP} + \text{FP})(\text{TP} + \text{FN})(\text{TN} + \text{FP})(\text{TN} + \text{FN})}}
\end{split}
\end{equation}

\noindent
where TP, TN, FP and FN represent the number of true positives, true negatives, false positives and false negatives, 
respectively. Larger values of any of the above metrics indicate better sparsity selection obtained by the corresponding algorithm. Precision matrices of dimension $p \in (150, 300, 500, \text{and} \ 1000)$, and 
sample size $n \in (p/2, 3p/4, p, \text{and} \ 2p)$ are considered. Further, the proportion of non-zero upper off-diagonal entries of $\mathbf{\Omega}^0$ is set to 0.04 and 0.1. The true precision matrix is generated according to the same mechanism
as in Section \ref{computation}. For each combination of $p$, $n$, and edge density level, we generate $50$ data sets of size $n$ from a multivariate normal distribution with mean ${\bf 0}$ and precision matrix $\mathbf{\Omega}^0$. For each data set, we estimate the sparsity pattern using BSSC and SSSL, and subsequently calculate the SP, SE, MCC measures.
The same number of burn-in (2000) and estimation (2000) iterations are used for BSSC. For small $p \in \{150, 300\}$, we used the default setting in \cite{Wang:2015} of 1000 iterations for burn-in, and 10000 iterations for computing posterior estimates. For the reasons previously discussed, we only used 500 iterations for burn-in and 500 iterations for estimation purposes for $p=500, 1000$.

The SP, SE, MCC values, averaged over $50$ replicate 
data sets, are provided in Table \ref{comparing SSSL and BSSC}. Overall, the MCC values indicate better sparsity selection 
achieved by BSSC compared to SSSL, when the density of non-zeros entries is $0.1$, while the results are 
comparable for $0.04$ density. In summary, BSSC outperforms SSSL both in terms of estimation accuracy and especially 
of computational requirements on execution time and memory.
\begin{table}[H]
\centering 
\caption{Average sparsity selection accuracy for BSSC and SSSL for estimating a $p \times p$ precision matrix with 
$p \in (150, 300, 500, \text{and} \ 1000)$.}
\resizebox{0.8\textwidth}{!}{%
\begin{tabular}{lccccccccccc}
\toprule
 &&&&{BSSC}&&&&{SSSL}&\\
\cmidrule(lr){3-6} 
\cmidrule(lr){7-10} 
 & & {$p=150$} & {$p=300$} & {$p=500$} & {$p=1000$} & {$p=150$} & {$p=300$} & {$p=500$} & {$p=1000$} \\
\cmidrule(lr){3-6} 
\cmidrule(lr){7-10}   
$n$ &&&&{$Density = 0.04$}&&&&{$Density = 0.04$}&\\
\cmidrule(lr){1-2}  
\cmidrule(lr){3-6} 
\cmidrule(lr){7-10} 
{\multirow{3}{*}{$p/2$} } 
& {SP\%}& {99} & {100} & {100} & {100} & {100} & {100} & {100} & {100}   \\
& {SE\%}& {26} & {28} & {29} & {30} & {20} & {22} & {17} & {17}   \\
& {MC\%}& {41} & {44} & {45} & {46} & {39} & {42} & {37} & {36}   \\
\cmidrule(lr){1-2}  
\cmidrule(lr){3-6} 
\cmidrule(lr){7-10} 
{\multirow{3}{*}{$3p/4$} } 
& {SP\%}& {99} & {100} & {100} & {100} & {100} & {100} & {100} & {100}    \\
& {SE\%}& {41} & {43} & {45} & {47} & {34} & {35} & {29} & {28}   \\
& {MC\%}& {55} & {58} & {60} & {62} & {54} & {56} & {50} & {49} \\
\cmidrule(lr){1-2}  
\cmidrule(lr){3-6} 
\cmidrule(lr){7-10} 
{\multirow{3}{*}{$p$} } 
& {SP\%}& {100} & {100} & {100} & {100} & {100} & {100} & {100} & {100}   \\
& {SE\%}& {53} & {57} & {60} & {61} & {45} & {48} & {41} & {38}   \\
& {MC\%}& {67} & {70} & {71} & {73} & {64} & {67} & {61} & {58}  \\
\cmidrule(lr){1-2}  
\cmidrule(lr){3-6} 
\cmidrule(lr){7-10} 
{\multirow{3}{*}{$2p$} } 
& {SP\%}& {100} & {100} & {100} & {100} & {100} & {100} & {100} & {100}   \\
& {SE\%}& {86} & {89} & {91} & {92} & {78} & {85} & {79} & {74}    \\
& {MC\%}& {89} & {91} & {92} & {93} & {87} & {91} & {88} & {85}   \\
\midrule
$n$ &&&&{$Density = 0.1$}&&&&{$Density = 0.1$}&\\
\cmidrule(lr){1-2}  
\cmidrule(lr){3-6} 
\cmidrule(lr){7-10} 
{\multirow{3}{*}{$p/2$} } 
& {SP\%}& {99} & {99} & {99} & {99} & {100} & {100} & {100} & {100}    \\
& {SE\%}& {12} & {11} & {10} & {11} & {9} & {4} & {6} & {5}    \\
& {MC\%}& {24} & {24} & {23} & {24} & {23} & {16} & {19} & {17}   \\
\cmidrule(lr){1-2}  
\cmidrule(lr){3-6} 
\cmidrule(lr){7-10} 
{\multirow{3}{*}{$3p/4$} } 
& {SP\%}& {99} & {99} & {99} & {99} & {100} & {100} & {100} & {100}    \\
& {SE\%}& {18} & {17} & {17} & {18} & {15} & {9} & {11} & {9}   \\
& {MC\%}& {33} & {33} & {32} & {33} & {33} & {25} & {28} & {24}    \\
\cmidrule(lr){1-2}  
\cmidrule(lr){3-6} 
\cmidrule(lr){7-10} 
{\multirow{3}{*}{$p$} } 
& {SP\%}& {99} & {99} & {99} & {99} & {100} & {100} & {100} & {100}   \\
& {SE\%}& {25} & {24} & {23} & {25} & {21} & {13} & {16} & {13}   \\
& {MC\%}& {40} & {41} & {40} & {41} & {40} & {31} & {34} & {29}   \\
\cmidrule(lr){1-2}  
\cmidrule(lr){3-6} 
\cmidrule(lr){7-10} 
{\multirow{3}{*}{$2p$} } 
& {SP\%}& {99} & {99} & {99} & {99} & {100} & {100} & {100} & {100}    \\
& {SE\%}& {49} & {49} & {48} & {51} & {44} & {30} & {35} & {27}   \\
& {MC\%}& {63} & {64} & {63} & {65} & {62} & {51} & {55} & {47}   \\
\bottomrule
\end{tabular}}
\label{comparing SSSL and BSSC}
\end{table}

\noindent
Next, we assess the effectiveness of the refitting technique developed in 
Section \ref{refitting:technique} for reducing the estimation bias in the magnitude of the non-zero entries. This is
accomplished by examining improvements in the relative error of the Frobenius norm, namely $\frac{\|\hat{\mathbf{\Omega}} - 
\mathbf{\Omega}^0\|_{F}}{\mathbf{\|\Omega}^0\|_{F}}$, for the original and refitted estimates. Specifically, let 
$\hat{E}$ denote the collection of indices selected as non-zero using the thresholding procedure described at the end of 
Section \ref{posterior:computation:gibbs:sampler}, and $\left\{ \hat{\mathbf{\Omega}}^(t) \right\}_{t=1}^T$ denote 
the sequence of iterates obtained by running the Gibbs sampler in Algorithm \ref{GibbsBSSC}. Our first estimate of 
$\mathbf{\Omega}$ is given by 
\begin{equation} \label{estimate}
\hat{\mathbf{\Omega}}_{BSSC} = \begin{cases}
\frac{\sum_{t=1}^T \hat{\omega}_{jk}^{(t)}}{\sum_{t=1}^T 1_{\{\hat{\omega}_{jk}^{(t)} \neq 0\}}} & \mbox{if } (j,k) \in \hat{E}, 
\mbox{ or } j=k, \cr
0 & \mbox{if } (j,k) \notin \hat{E}. 
\end{cases}
\end{equation}

\noindent
The second estimate, denoted by $\hat{\mathbf{\Omega}}_{refitted}$, is the posterior mean of the refitted posterior density 
in (\ref{refitted:posterior:density}), which can again be computed by the modified Gibbs sampling procedure on 
$\mathbb{M}_{\hat{G}}$ discussed at the end of Section \ref{refitting:technique}. Note that both 
$\hat{\mathbf{\Omega}}_{BSSC}$ and $\hat{\mathbf{\Omega}}_{refitted}$ set the indices not in $\hat{E}$ to be zero, 
and differ only in the magnitudes of indices classified as non-zero, i.e., indices in $\hat{E}$. We consider settings with
$p \in (50, 100, 150, 200, \text{and} \ 300)$, and $n= p$. The true precision matrix $\mathbf{\Omega}^0$ is generated by 
using the same mechanism as in Section \ref{computation}, and for each $p$, the relative Frobenius norm of the two estimates described above are averaged over the $50$ replicate data sets. The results are shown in Table \ref{debiasing sim}, and clearly 
demonstrate the improvement obtained by refitting, especially for larger values of $p$. 
\begin{table}[h]
\centering 
\caption{Summary of average relative error for estimation of the magnitudes of the precision matrix entries for estimates 
directly obtained from Algorithm \ref{GibbsBSSC} vs. estimates obtained by using the refitting technique in 
Section \ref{refitting:technique}}
\resizebox{0.8\textwidth}{!}{%
\begin{tabular}{lccccc}
\toprule
 &{$p=n=50$}&{$p=n=100$}&{$p=n=150$}&{$p=n=200$}&{$p=n=300$}\\
\midrule
{BSSC} & {0.38} & {0.45} & {0.45 } & {0.45 } &  {0.49} \\
\midrule
{BSSC with refitting} & {0.36} & {0.35} & {0.32 } & {0.32 } &  {0.29} \\
\bottomrule
\end{tabular}}
\label{debiasing sim}
\end{table}

\subsection{An Application of B-CONCORD to Inflammatory Bowel Disease Metabolomics data}

There are a number of factors that impact the stool metabolome, including diet, gut flora and gut function. The data analyzed next come from 208 female subjects with inflammatory bowel disease (IBD) -which includes Crohn’s disease and ulcerative colitis, affect several million individuals worldwide- that participated in the Integrative Human Microbiome Project (iHMP) and were extracted from the Metabolomics Workbench \texttt{www.metabolomicsworkbench.org} (Study ID ST000923).
The data correspond to measurements of 428 primary (directly involved in normal growth, development, and reproduction 
cellular processes) and secondary (produced by bacteria, fungi, etc.) metabolites and lipids (fatty acids and their
derivatives). Specifically, 240 primary, 49 secondary and 139 lipids were profiled by a mass spectrometry analytical platform.

The B-CONCORD methodology was employed to estimate the interaction networks of these compounds and the results are shown
in Table \ref{tab:interaction}. It can be seen that there are relatively strong interactions between primary and secondary 
metabolites and also between primary metabolites and lipids.

\begin{table}
\centering
\begin{tabular}{l|ccc} \hline
& Secondary & Primary & Lipids \\ \hline\hline
Secondary & 32 & 337 & 65 \\
Primary & 337 & 935 & 355 \\
Lipids & 65 & 355 & 412 \\ \hline\hline
\end{tabular}
\caption{Interactions between primary, secondary metabolites and lipids in IBD specimens \label{tab:interaction}}
\end{table}

Next, we comment on some specific patterns that align with findings in the literature.
We observed that the short chain fatty acids (acetate, butyrate and propionate) have a high number of connections ($\sim 22$
on average and significantly higher than other compounds), a result consistent with their function as the main source of energy for cells lining the colon and impacting the latter's health \cite{wong2006colonic}. Further, 
the primary bile acid cholate and its glycine and taurine conjugates (glycocholate, taurocholate), as well as secondary bile acids (lithocholate and deoxycholate), were also strongly connected ($\sim 15$ connections on average) and are known to
play a role in IBD \cite{tiratterra2018role,lavelle2020gut}. Finally, sphingolipids (ceramides, phoshpocolines and sphingomyelins) form connected clusters, since they represent structural components of intestinal cell membranes and are also signaling molecules involved in cell fate decisions \cite{abdel2016fostering}.

In general, the proposed model identifies numerous interesting interactions in this rich data set that could provide insights
on how they impact molecular processes in IBD.

\section{Discussion}

This article proposes a fast scalable Bayesian framework for estimating interaction networks through Gaussian
graphical models. The use of a generalized likelihood function in combination with a spike-and-slab prior distribution
on the model parameters leads to closed form expressions for the corresponding conditional posterior distributions,
thus enabling a fast Gibbs sampler for calculating the posterior distribution. The framework also comes with statistical
guarantees on the consistency of the posterior distribution under mild regularity conditions. Another key contribution is
the introduction of a modified prior distribution that is applicable to the identified network (graphical model) from the
data, which provides improved estimates of the magnitudes of the edges in the interaction network. The provided
numerical work renders support to the strong gains in the performance of the methods vis-a-vis competing procedures.


\addcontentsline{toc}{chapter}{References}
\bibliographystyle{plainnat}
\bibliography{reference}
\normalsize

\newpage


\setcounter{equation}{0}
\setcounter{figure}{0}
\setcounter{Lemma}{0}
\setcounter{table}{0}
\setcounter{section}{0}
\setcounter{page}{1}
\renewcommand{\thesection}{S \arabic{section}}
\renewcommand{\theequation}{S \arabic{equation}}
\renewcommand{\thefigure}{S \arabic{figure}}
\renewcommand{\theLemma}{S \arabic{Lemma}}

\numberwithin{equation}{section}

\section{Appendix A: Proof of Theorem 1}\label{Proof of Theorem 1}

\noindent
By Assumption 3 and the Hanson-Wright inequality \citep{rudelson2013hanson}, there exists a $c_1 >0$, independent of $n$ such 
that
\begin{equation*}\label{PC1n}
P\left\{ \max\limits_{i,j}\| s_{ij} - \sigma_{ij} \| < c_1 \sqrt{\frac{\log p}{n}}\right\} \geq 1 - \frac{1}{p^2},
\end{equation*}
and,
\begin{equation*}\label{PC2n}
P\left\{ \max\limits_{i,j}\| {\mathbf{\Omega}_{:i}^{0}}' {\mathbf{S}}_{:j}\| < c_1 \sqrt{\frac{\log p}{n}}\right\} \geq 1 - \frac{1}{p^2}.
\end{equation*}

Define the events $C_{1,n}$, $C_{2,n}$ as 

\begin{equation}\label{C1n}
C_{1,n} = \left\{ \max\limits_{i,j}\| s_{ij} - \sigma_{ij} \| < c_1 \sqrt{\frac{\log p}{n}}\right\},
\end{equation}

\begin{equation}\label{C2n}
C_{2,n} = \left\{ \max\limits_{i,j}\| {\mathbf{\Omega}_{:i}^{0}}' {\mathbf{S}}_{:j}\| < c_1 \sqrt{\frac{\log p}{n}}\right\},
\end{equation}
for the next series of lemmas, we restrict ourself to the event $C_{1,n} \cap C_{2,n}$.

The next two lemmas prove important properties of the matrix $\mathbf{\Phi}$ that appears in the generalized posterior distribution.
\begin{Lemma}\label{Lemma 2}
The following holds
\begin{equation}\label{Upsilon eigen_vals}
\text{eig}_{min}\left( \mathbf{S}\right) \le \text{eig}_{min}\left( \mathbf{\Phi}\right) \le \text{eig}_{max}\left(\mathbf{\Phi}\right) \le 
2 \text{eig}_{max}\left( \mathbf{S}\right).
\end{equation}
\end{Lemma}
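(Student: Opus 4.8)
The plan is to identify the quadratic form ${\boldsymbol\xi}'\mathbf{\Phi}{\boldsymbol\xi}$ in closed form and then sandwich it between appropriate multiples of $\text{eig}_{min}(\mathbf{S})\|{\boldsymbol\xi}\|^2$ and $\text{eig}_{max}(\mathbf{S})\|{\boldsymbol\xi}\|^2$. For ${\boldsymbol\xi}\in\mathbb{R}^{\binom{p}{2}}$ let $\mathbf{\Omega}_{\boldsymbol\xi}$ denote the symmetric $p\times p$ matrix with zero diagonal whose off-diagonal entries are the coordinates of ${\boldsymbol\xi}$ arranged as in (\ref{theta}). The algebraic rearrangement of $\text{tr}(\mathbf{\Omega}^2\mathbf{S})$ recorded in (\ref{posterior theta and delta}) reads $\text{tr}(\mathbf{\Omega}^2\mathbf{S})={\boldsymbol\xi}'\mathbf{\Phi}{\boldsymbol\xi}+2\,{\boldsymbol\xi}'\mathbf{A}{\boldsymbol\delta}+{\boldsymbol\delta}'\mathbf{D}{\boldsymbol\delta}$ and holds for all real $({\boldsymbol\xi},{\boldsymbol\delta})$; setting ${\boldsymbol\delta}=\mathbf{0}$ (so that $\mathbf{\Omega}=\mathbf{\Omega}_{\boldsymbol\xi}$) and invoking (\ref{eq6}) gives the first key step,
\[
{\boldsymbol\xi}'\mathbf{\Phi}{\boldsymbol\xi}\;=\;\text{tr}\!\big(\mathbf{\Omega}_{\boldsymbol\xi}^2\,\mathbf{S}\big)\;=\;\text{tr}\!\big(\mathbf{\Omega}_{\boldsymbol\xi}\,\mathbf{S}\,\mathbf{\Omega}_{\boldsymbol\xi}\big)\;=\;\sum_{i=1}^p \mathbf{m}_i'\,\mathbf{S}\,\mathbf{m}_i ,
\]
where $\mathbf{m}_i\in\mathbb{R}^p$ is the $i$-th row of $\mathbf{\Omega}_{\boldsymbol\xi}$ and the last equality is the standard identity $\text{tr}(\mathbf{M}\mathbf{S}\mathbf{M})=\sum_i\mathbf{m}_i'\mathbf{S}\mathbf{m}_i$ valid for symmetric $\mathbf{M}$. (Alternatively, one can obtain ${\boldsymbol\xi}'\mathbf{\Phi}{\boldsymbol\xi}=\text{tr}(\mathbf{\Omega}_{\boldsymbol\xi}^2\mathbf{S})$ by matching the coefficient of each product $\omega_{ab}\omega_{cd}$ against the entrywise description (\ref{upsilon}) of $\mathbf{\Phi}$.)

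With this in hand the bound is immediate. Since $\mathbf{S}$ is a sample covariance matrix, hence symmetric and positive semidefinite, $\text{eig}_{min}(\mathbf{S})\,\|\mathbf{m}_i\|^2\le \mathbf{m}_i'\mathbf{S}\mathbf{m}_i\le \text{eig}_{max}(\mathbf{S})\,\|\mathbf{m}_i\|^2$ for each $i$; summing over $i$ and using
\[
\sum_{i=1}^p\|\mathbf{m}_i\|^2\;=\;\text{tr}\!\big(\mathbf{\Omega}_{\boldsymbol\xi}^2\big)\;=\;\sum_{i\neq j}\omega_{ij}^2\;=\;2\,\|{\boldsymbol\xi}\|^2
\]
yields, for every ${\boldsymbol\xi}$,
\[
2\,\text{eig}_{min}(\mathbf{S})\,\|{\boldsymbol\xi}\|^2\;\le\;{\boldsymbol\xi}'\mathbf{\Phi}{\boldsymbol\xi}\;\le\;2\,\text{eig}_{max}(\mathbf{S})\,\|{\boldsymbol\xi}\|^2 .
\]
By the Rayleigh--Ritz characterization applied to the symmetric matrix $\mathbf{\Phi}$, this gives $\text{eig}_{max}(\mathbf{\Phi})\le 2\,\text{eig}_{max}(\mathbf{S})$ and $\text{eig}_{min}(\mathbf{\Phi})\ge 2\,\text{eig}_{min}(\mathbf{S})$; because $\mathbf{S}\succeq\mathbf{0}$ forces $\text{eig}_{min}(\mathbf{S})\ge 0$, we also have $\text{eig}_{min}(\mathbf{S})\le 2\,\text{eig}_{min}(\mathbf{S})$, and chaining the three inequalities produces exactly (\ref{Upsilon eigen_vals}).

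The only step requiring care is the opening identity ${\boldsymbol\xi}'\mathbf{\Phi}{\boldsymbol\xi}=\text{tr}(\mathbf{\Omega}_{\boldsymbol\xi}^2\mathbf{S})$ --- i.e., being sure that no diagonal term $\omega_{jj}$ leaks into the $\mathbf{\Phi}$-block (it does not: those contributions are absorbed into $\mathbf{A}$ and $\mathbf{D}$ in (\ref{posterior theta and delta})) and, if the coefficient-matching route is taken, that the entry pattern (\ref{upsilon}) is accounted for correctly. Everything afterward is a one-line eigenvalue sandwich, so I do not anticipate a genuine obstacle there. Note in particular that the statement is purely deterministic: it makes no use of the event $C_{1,n}\cap C_{2,n}$ or of any probabilistic ingredient, and it does not even require $\mathbf{S}$ to be nonsingular.
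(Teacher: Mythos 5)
Your proof is correct and follows essentially the same route as the paper's: both reduce the quadratic form of $\mathbf{\Phi}$ to $\mathrm{tr}(\mathbf{\Omega}^2\mathbf{S})$ and extract the factor of $2$ from each off-diagonal entry appearing twice in the symmetric matrix, then apply a Rayleigh-quotient sandwich. The only difference is presentational: you restrict directly to the zero-diagonal slice $\boldsymbol{\delta}=\mathbf{0}$, whereas the paper vectorizes all of $\mathbf{\Omega}$, identifies the full block matrix as $\mathbf{Q}'\tilde{\mathbf{P}}'\tilde{\mathbf{S}}\tilde{\mathbf{P}}\mathbf{Q}$, and then implicitly passes to the principal submatrix $\mathbf{\Phi}$ --- your version is a bit more direct and even yields the slightly sharper lower bound $\mathrm{eig}_{\min}(\mathbf{\Phi})\ge 2\,\mathrm{eig}_{\min}(\mathbf{S})$.
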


\begin{proof}
Let $\boldsymbol{y} = \boldsymbol{y}\left( \mathbf{\Omega}\right)$ be a vectorized version of $\mathbf{\Omega}$ obtained by shifting the corresponding diagonal entry at the bottom of each column of $\mathbf{\Omega}$ and then stacking the columns on top of each other. Let $\mathbf{P}^i$ be the $p\times p$ permutation matrix such that $\mathbf{P}^i \boldsymbol{z} = \left( z_1, ..., z_{i-1}, z_{i+1}, ..., z_p, z_i\right)$ for every $\boldsymbol{z}\in \mathbb{R}^p$. It follows by the definition of $\boldsymbol{y}$ that 
\begin{equation*}
\boldsymbol{y} = \boldsymbol{y}\left( \mathbf{\Omega}\right) = \left( \left(\mathbf{P}^1\mathbf{\Omega}_{:1} \right)', \left(\mathbf{P}^2\mathbf{\Omega}_{:2} \right)', ..., \left(\mathbf{P}^p\mathbf{\Omega}_{:p} \right)'\right)'.
\end{equation*}
Let $\boldsymbol{x} \in \mathbb{R}^{\frac{p(p+1)}{2}}$ be the symmetric version of $\boldsymbol{y}$ obtained by removing all $\omega_{ij}$ with $i > j$. More precisely,
\begin{equation*}
\boldsymbol{x}= \left( \omega_{11}, \omega_{12}, \omega_{22}, ..., \omega_{1p},..., \omega_{pp}\right)'.
\end{equation*}

Let $\tilde{\mathbf{P}}$ be the $p^2 \times \frac{p(p+1)}{2}$ matrix such that every entry of $\tilde{\mathbf{P}}$ is either zero or one, exactly one entry in each row of $\tilde{\mathbf{P}}$ is equal to 1, and $\boldsymbol{y} = \tilde{\mathbf{P}}\boldsymbol{x}$.  

Next, define $\boldsymbol{\xi}= \left( \omega_{12}, \omega_{13}, ..., \omega_{p-1p}\right)'$ and $\boldsymbol{\delta} = \left( \omega_{11}, \omega_{22}, ..., \omega_{pp}\right)'$ and let $\tilde{\mathbf{Q}}$ be the $\frac{p(p+1)}{2} \times \frac{p(p+1)}{2}$ permutation matrix for which 
\begin{equation*}
\boldsymbol{x} = \mathbf{Q}\left( {\begin{array}{*{20}{c}}
{\boldsymbol{\xi}} \\
{\boldsymbol{\delta}} 
\end{array} } \right) .
\end{equation*}

Let $\tilde{\mathbf{S}}$ be a $p^2 \times p^2$ block diagonal matrix with $p$ diagonal blocks, the $i^{\text{th}}$ block is equal to $\tilde{\mathbf{S}}^{i}:=\mathbf{P}^i\mathbf{S}{\mathbf{P}^i}'$. It follows that 

\begin{equation*}
\begin{split}
\text{tr}\left[ \mathbf{\Omega}^2 \mathbf{S}\right] &= \sum\limits_{i=1}^{p} {\mathbf{\Omega}_{:i}}' \mathbf{S} \mathbf{\Omega}_{:i} = \sum\limits_{i=1}^{p} {\mathbf{\Omega}_{:i}}' {\mathbf{P}^i }'\mathbf{P}^i \mathbf{S} {\mathbf{P}^i }'\mathbf{P}^i\mathbf{\Omega}_{:i} = \sum\limits_{i=1}^{p} {\mathbf{\Omega}_{:i}}' {\mathbf{P}^i }'\left( \mathbf{P}^i \mathbf{S} {\mathbf{P}^i }'\right)\mathbf{P}^i\mathbf{\Omega}_{:i}\\
& = \boldsymbol{y}'\tilde{\mathbf{\Sigma}}\boldsymbol{y} = \boldsymbol{x}'\tilde{\mathbf{P}}'\tilde{\mathbf{S}}\tilde{\mathbf{P}}\boldsymbol{x}= \left( {\boldsymbol{\xi}}', \boldsymbol{\delta}' \right) \mathbf{Q}' \tilde{\mathbf{P}}'\tilde{\mathbf{S}}\tilde{\mathbf{P}} \mathbf{Q} \left( {\begin{array}{*{20}{c}}
{\boldsymbol{\xi}} \\
{\boldsymbol{\delta}} 
\end{array} } \right). 
\end{split}
\end{equation*}
There also exist appropriate matrices $\mathbf{A}$ and $\mathbf{D}$ such that 
\begin{equation*}
\text{tr}\left[ \mathbf{\Omega}^2 \mathbf{S}\right] = \left( {\boldsymbol{\xi}}', \boldsymbol{\delta}' \right) \left( {\begin{array}{*{20}{c}}
{\mathbf{\Phi}} & {\mathbf{A}} \\
{\mathbf{A}} & {\mathbf{D}} \\
\end{array} } \right) \left( {\begin{array}{*{20}{c}}
{\boldsymbol{\xi}}\\
{\boldsymbol{\delta}} 
\end{array} } \right), 
\end{equation*}
therefore, we must have 
\begin{equation*}
 \mathbf{Q}' \tilde{\mathbf{P}}'\tilde{\mathbf{S}}\tilde{\mathbf{P}} \mathbf{Q} = \left( {\begin{array}{*{20}{c}}
{\mathbf{\Phi}} & {\mathbf{A}} \\
{\mathbf{A}} & {\mathbf{D}} \\
\end{array} } \right). 
\end{equation*}

\noindent
Note that since $\tilde{\mathbf{P}}$ has orthogonal columns with $\ell_2$-norm either $1$ or $2$, and the spectrum of 
$\tilde{\mathbf{S}}$ and $\mathbf{S}$ are identical, we have that
\begin{equation*}
\text{eig}_{min}\left( \mathbf{S}\right) = \text{eig}_{min}\left( \tilde{\mathbf{S}}\right) \le \text{eig}_{min}\left( \mathbf{\Phi}
\right) \le \text{eig}_{max}\left(\mathbf{\Phi}\right) \le 2 \text{eig}_{max}\left( \tilde{\mathbf{S}}\right) = 2 \text{eig}_{max}
\left( \mathbf{S}\right).
\end{equation*}
\end{proof}

\begin{Lemma}\label{Lemma 3}
Let ${\boldsymbol{l}} \in \mathcal{L}$ be any sparsity pattern/model with $d_{\boldsymbol{l}} < \tau_n = 
\frac{\tilde{\varepsilon}_0}{4c_1}\sqrt{\frac{n}{\log p}}$, then the sub matrix 
$\mathbf{\Phi}_{{\boldsymbol{l}}{\boldsymbol{l}}}$ of $\mathbf{\Phi}$, obtained by taking out all the rows and columns 
corresponding to the zero coordinates in $\boldsymbol{\xi} \in \mathcal{M}_{{\boldsymbol{l}}}$, is positive definite. 
Specifically,
\begin{equation}
\frac{3\tilde{\varepsilon}_0}{4} \leq \text{eig}_{\min}\left( \mathbf{\Phi}_{{\boldsymbol{l}} {\boldsymbol{l}}} \right) \leq \text{eig}_{\max}\left( \mathbf{\Phi}_{{\boldsymbol{l}} {\boldsymbol{l}}} \right) \leq \frac{5}{2\tilde{\varepsilon}_0}, \quad \forall \boldsymbol{l} \in \mathcal{L}.
\end{equation}

\end{Lemma}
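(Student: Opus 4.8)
The plan is to read the eigenvalue bounds for $\mathbf{\Phi}_{\boldsymbol{l}\boldsymbol{l}}$ off the quadratic form it induces on the model subspace $\mathcal{M}_{\boldsymbol{l}}$, and then to compare that form with its population analogue. Fix any $\boldsymbol{\xi}\in\mathcal{M}_{\boldsymbol{l}}$, write $\boldsymbol{\xi}_{\boldsymbol{l}}$ for its subvector of nonzero coordinates (so $\|\boldsymbol{\xi}_{\boldsymbol{l}}\|=\|\boldsymbol{\xi}\|$), and let $\mathbf{M}=\mathbf{M}(\boldsymbol{\xi})$ be the symmetric $p\times p$ matrix with zero diagonal whose strict upper triangle equals $\boldsymbol{\xi}$. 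Applying the identity $\text{tr}(\mathbf{\Omega}^2\mathbf{S})=(\boldsymbol{\xi}',\boldsymbol{\delta}')\,\mathbf{B}\,(\boldsymbol{\xi}',\boldsymbol{\delta}')'$ derived in the proof of Lemma~\ref{Lemma 2} (with $\mathbf{B}$ the block matrix whose leading block is $\mathbf{\Phi}$) to the matrix $\mathbf{M}$, for which $\boldsymbol{\delta}=\mathbf{0}$, gives
\[
\boldsymbol{\xi}_{\boldsymbol{l}}'\,\mathbf{\Phi}_{\boldsymbol{l}\boldsymbol{l}}\,\boldsymbol{\xi}_{\boldsymbol{l}}=\boldsymbol{\xi}'\mathbf{\Phi}\boldsymbol{\xi}=\text{tr}(\mathbf{M}^2\mathbf{S})=\sum_{i=1}^p\mathbf{M}_{:i}'\,\mathbf{S}\,\mathbf{M}_{:i}.
\]
Thus it suffices to sandwich $\text{tr}(\mathbf{M}^2\mathbf{S})$ between $\tfrac{3\tilde{\varepsilon}_0}{4}\|\boldsymbol{\xi}\|^2$ and $\tfrac{5}{2\tilde{\varepsilon}_0}\|\boldsymbol{\xi}\|^2$ uniformly over $\boldsymbol{\xi}\in\mathcal{M}_{\boldsymbol{l}}$. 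I will record two elementary facts about $\mathbf{M}$: $\|\mathbf{M}\|_F^2=2\|\boldsymbol{\xi}\|^2$, and each column $\mathbf{M}_{:i}$ has at most $d_{\boldsymbol{l}}$ nonzero entries (its degree in $\boldsymbol{l}$).

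Next I would bound the population term and the stochastic error separately. Since $\mathbf{M}^2\succeq 0$, $\text{eig}_{\min}(\mathbf{\Sigma}^0)\|\mathbf{M}\|_F^2\le \text{tr}(\mathbf{M}^2\mathbf{\Sigma}^0)\le \text{eig}_{\max}(\mathbf{\Sigma}^0)\|\mathbf{M}\|_F^2$, and Assumption~\ref{Assumption3: Bounded eigenvalues} applied to $\mathbf{\Sigma}^0=(\mathbf{\Omega}^0)^{-1}$ gives $\text{eig}_{\min}(\mathbf{\Sigma}^0)\ge\tilde{\varepsilon}_0$ and $\text{eig}_{\max}(\mathbf{\Sigma}^0)\le 1/\tilde{\varepsilon}_0$ (the same chain of inequalities also forces $\tilde{\varepsilon}_0\le 1$, which I use at the end). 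For the error, working on the event $C_{1,n}$ in force throughout this block of lemmas, on which $\|\mathbf{S}-\mathbf{\Sigma}^0\|_{\max}<c_1\sqrt{\log p/n}$,
\[
\bigl|\text{tr}(\mathbf{M}^2(\mathbf{S}-\mathbf{\Sigma}^0))\bigr|=\Bigl|\sum_{i=1}^p\mathbf{M}_{:i}'(\mathbf{S}-\mathbf{\Sigma}^0)\mathbf{M}_{:i}\Bigr|\le\|\mathbf{S}-\mathbf{\Sigma}^0\|_{\max}\sum_{i=1}^p\|\mathbf{M}_{:i}\|_1^2\le\|\mathbf{S}-\mathbf{\Sigma}^0\|_{\max}\,d_{\boldsymbol{l}}\,\|\mathbf{M}\|_F^2,
\]
where the last step uses $\|\mathbf{M}_{:i}\|_1^2\le(\text{number of nonzeros of }\mathbf{M}_{:i})\,\|\mathbf{M}_{:i}\|_2^2\le d_{\boldsymbol{l}}\|\mathbf{M}_{:i}\|_2^2$ and then sums. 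The hypothesis $d_{\boldsymbol{l}}<\tau_n=\tfrac{\tilde{\varepsilon}_0}{4c_1}\sqrt{n/\log p}$ then gives $\|\mathbf{S}-\mathbf{\Sigma}^0\|_{\max}\,d_{\boldsymbol{l}}<\tilde{\varepsilon}_0/4$.

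Combining the three displays, $\text{tr}(\mathbf{M}^2\mathbf{S})$ lies in $\bigl[(\tilde{\varepsilon}_0-\tilde{\varepsilon}_0/4)\|\mathbf{M}\|_F^2,\,(1/\tilde{\varepsilon}_0+\tilde{\varepsilon}_0/4)\|\mathbf{M}\|_F^2\bigr]$. For the lower bound, $\|\mathbf{M}\|_F^2\ge\|\boldsymbol{\xi}\|^2$ yields $\text{tr}(\mathbf{M}^2\mathbf{S})\ge\tfrac{3\tilde{\varepsilon}_0}{4}\|\boldsymbol{\xi}\|^2$; for the upper bound, $\|\mathbf{M}\|_F^2=2\|\boldsymbol{\xi}\|^2$ together with $\tilde{\varepsilon}_0\le 1$ yields $\text{tr}(\mathbf{M}^2\mathbf{S})\le(2/\tilde{\varepsilon}_0+\tilde{\varepsilon}_0/2)\|\boldsymbol{\xi}\|^2\le\tfrac{5}{2\tilde{\varepsilon}_0}\|\boldsymbol{\xi}\|^2$. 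Since $\boldsymbol{\xi}\in\mathcal{M}_{\boldsymbol{l}}$ was arbitrary, $\mathbf{\Phi}_{\boldsymbol{l}\boldsymbol{l}}$ is positive definite with the claimed eigenvalue bounds.

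The load-bearing step is the sparse perturbation bound: converting the entrywise $\ell_\infty$ control of $\mathbf{S}-\mathbf{\Sigma}^0$ into control on the difference of quadratic forms restricted to the $d_{\boldsymbol{l}}$-dimensional model is precisely where the cap $d_{\boldsymbol{l}}<\tau_n$ on realistic sparsity patterns is consumed, since the factor $d_{\boldsymbol{l}}$ produced by the Cauchy--Schwarz passage from $\ell_1$ to $\ell_2$ must be absorbable into $\sqrt{\log p/n}$. The identification $\boldsymbol{\xi}'\mathbf{\Phi}\boldsymbol{\xi}=\text{tr}(\mathbf{M}^2\mathbf{S})$ needs a little care as well, but it is a bookkeeping consequence of the permutation-matrix construction in Lemma~\ref{Lemma 2}, while the population and eigenvalue parts are routine.
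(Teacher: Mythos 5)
Your proof is correct, and it reaches the stated constants. The paper takes a slightly different, more matrix-theoretic route: it introduces the population matrix $\mathbf{\Phi}^0_{\boldsymbol{l}\boldsymbol{l}}$, bounds the operator norm of the perturbation by $\left\| \mathbf{\Phi}_{\boldsymbol{l}\boldsymbol{l}} - \mathbf{\Phi}^0_{\boldsymbol{l}\boldsymbol{l}} \right\| \leq c_1 d_{\boldsymbol{l}} \sqrt{\log p / n}$ using the entrywise control on $C_{1,n}$ together with the fact that the submatrix is $d_{\boldsymbol{l}} \times d_{\boldsymbol{l}}$, and then applies Weyl's inequality plus the eigenvalue sandwich of Lemma~\ref{Lemma 2} (and, implicitly, Cauchy interlacing to pass from $\mathbf{\Phi}^0$ to its principal submatrix) to get $\text{eig}_{\min}(\mathbf{\Phi}^0_{\boldsymbol{l}\boldsymbol{l}}) \geq \tilde{\varepsilon}_0$ and $\text{eig}_{\max}(\mathbf{\Phi}^0_{\boldsymbol{l}\boldsymbol{l}}) \leq 2/\tilde{\varepsilon}_0$. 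You instead never form the submatrix perturbation: you evaluate the Rayleigh quotient directly via the trace identity $\boldsymbol{\xi}'\mathbf{\Phi}\boldsymbol{\xi} = \text{tr}(\mathbf{M}^2\mathbf{S})$ (a legitimate specialization of the block decomposition in Lemma~\ref{Lemma 2} to $\boldsymbol{\delta} = \mathbf{0}$), and control the stochastic error by the column sparsity of $\mathbf{M}$ via Cauchy--Schwarz, $\|\mathbf{M}_{:i}\|_1^2 \leq d_{\boldsymbol{l}} \|\mathbf{M}_{:i}\|_2^2$. The two arguments consume the hypothesis $d_{\boldsymbol{l}} < \tau_n$ in the same way (a factor $d_{\boldsymbol{l}}$ multiplying $\sqrt{\log p/n}$), but your version makes the role of sparsity more transparent, needs only the event $C_{1,n}$, and sidesteps the interlacing step; it is also slightly cleaner on constants, since the paper's entrywise bound on the diagonal entries $s_{aa}+s_{bb}$ really carries a factor of $2$ that its displayed operator-norm bound elides. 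Your derivation of $\tilde{\varepsilon}_0 \leq 1$ from Assumption~\ref{Assumption3: Bounded eigenvalues}, needed to absorb $\tilde{\varepsilon}_0/2$ into $1/(2\tilde{\varepsilon}_0)$ for the upper bound, is a point the paper leaves implicit and is worth stating as you do.
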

\begin{proof}

Let $\mathbf{\Phi}_{{\boldsymbol{l}} {\boldsymbol{l}}}^{0}$ denote the population version of $\mathbf{\Phi}_{{\boldsymbol{l}} {\boldsymbol{l}}}$. Since, we are restricted to $C_{1,n}\cap C_{2,n}$,\\ $\| \mathbf{\Phi}_{{\boldsymbol{l}} {\boldsymbol{l}}} - \mathbf{\Phi}_{{\boldsymbol{l}} {\boldsymbol{l}}}^{0} \| \leq c_1 d_{\ell} \sqrt{\frac{\log p}{n}}$, hence 

\begin{equation*}
\begin{split}
\text{eig}_{min} \left( \mathbf{\Phi}_{{\boldsymbol{l}} {\boldsymbol{l}}}\right) = \inf\limits_{|\boldsymbol{x}|=1} \boldsymbol{x}'\mathbf{\Phi}_{{\boldsymbol{l}} {\boldsymbol{l}}}\boldsymbol{x}&\geq  \inf\limits_{|\boldsymbol{x}|=1} \boldsymbol{x}'\mathbf{\Phi}_{{\boldsymbol{l}} {\boldsymbol{l}}}^{0}\boldsymbol{x} - \inf\limits_{|\boldsymbol{x}|=1} \boldsymbol{x}'\left( \mathbf{\Phi}_{{\boldsymbol{l}} {\boldsymbol{l}}} - \mathbf{\Phi}_{{\boldsymbol{l}} {\boldsymbol{l}}}^{0}\right)\boldsymbol{x} \\
&\geq  \inf\limits_{|\boldsymbol{x}|=1} \boldsymbol{x}'\mathbf{\Phi}_{{\boldsymbol{l}} {\boldsymbol{l}}}^{0}\boldsymbol{x} - 
\|\mathbf{\Phi}_{{\boldsymbol{l}} {\boldsymbol{l}}} - \mathbf{\Phi}_{{\boldsymbol{l}} {\boldsymbol{l}}}^{0}\|_2 \\
&\geq  \inf\limits_{|\boldsymbol{x}|=1} \boldsymbol{x}'\mathbf{\Phi}^{0}_{{\boldsymbol{l}} {\boldsymbol{l}}}\boldsymbol{x} -  
c_1 d_{\boldsymbol{l}} \sqrt{\frac{\log p}{n}}
\end{split}
\end{equation*}
hence, by Lemma \ref{Lemma 2},
\begin{equation*}
\begin{split}
\text{eig}_{min} \left( \mathbf{\Phi}\right)_{{\boldsymbol{l}} {\boldsymbol{l}}} &\geq \tilde{\varepsilon}_0 - c_1 d_{\boldsymbol{l}} \sqrt{\frac{\log p}{n}}\\
& \geq \tilde{\varepsilon}_0 - c_1\tau_{n} \sqrt{\frac{\log p}{n}} = \frac{3\tilde{\varepsilon}_0}{4}.
\end{split}
\end{equation*}
Similarly one can show that
\begin{equation*}
\text{eig}_{\max}\left( \mathbf{\Phi}_{{\boldsymbol{l}} {\boldsymbol{l}}} \right) \leq\frac{5}{2\tilde{\varepsilon}_0}.
\end{equation*}
\end{proof}

\noindent
By Lemma \ref{Lemma 3}, the value of the threshold $\tau_n$ which we used in building our hierarchical prior in 
\ref{hierarchical prior 2} is given as $\tau_n = \frac{\tilde{\varepsilon}_0}{4c_1}\sqrt{\frac{n}{\log p}}$. Hence by 
Assumption \ref{Assumption2: d_t}, we can write $d_t \leq \tau_n$, for any sufficiently large $n$.

\begin{Lemma}\label{Lemma 4}
Let $\boldsymbol{\xi}^0, \boldsymbol{\delta}^0$ be the true values of $\boldsymbol{\xi}, \boldsymbol{\delta}$, $\mathbf{\Phi}$ 
and $\boldsymbol{a} = \mathbf{A} \boldsymbol{\delta}^0$ be according to (\ref{upsilon}) and (\ref{a}), and 
$\hat{\boldsymbol{a}} = \mathbf{A} \hat{\boldsymbol{\delta}}$ be the estimate of $\boldsymbol{a}$ obtained by replacing 
$\boldsymbol{\delta}^0$ by the accurate diagonal estimates $\hat{\boldsymbol{\delta}}$. Then for large enough $n$, there 
exists a constant $c_0$ such that 
\begin{equation}
\|\mathbf{\Phi} \boldsymbol{\xi}^0 + \hat{\boldsymbol{a}} \|_{\max} \leq c_0 \sqrt{\frac{\log p}{n}}. 
\end{equation}

\end{Lemma}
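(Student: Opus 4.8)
The plan is to recognize $\mathbf{\Phi}\boldsymbol{\xi}+\boldsymbol{a}$, with $\boldsymbol{a}=\mathbf{A}\boldsymbol{\delta}$, as the off-diagonal part of the gradient of $\tfrac12\,\mathrm{tr}(\mathbf{\Omega}^2\mathbf{S})$ in the entries of $\mathbf{\Omega}$, to use the fact that the corresponding \emph{population} gradient vanishes exactly at $\mathbf{\Omega}^0$, and then to control the resulting sampling error using the concentration events $C_{1,n}$ and $C_{2,n}$ already in force. Concretely, writing $\boldsymbol{a}^0:=\mathbf{A}\boldsymbol{\delta}^0$, I would first establish the exact identity
\[
\bigl(\mathbf{\Phi}\boldsymbol{\xi}^0+\boldsymbol{a}^0\bigr)_{jk}=\bigl(\mathbf{S}\mathbf{\Omega}^0+\mathbf{\Omega}^0\mathbf{S}\bigr)_{jk},\qquad 1\le j<k\le p.
\]
This is a direct expansion: using (\ref{upsilon}) one gets $(\mathbf{\Phi}\boldsymbol{\xi}^0)_{jk}=(s_{jj}+s_{kk})\omega^0_{jk}+\sum_{c\neq j,k}s_{jc}\omega^0_{ck}+\sum_{d\neq j,k}s_{kd}\omega^0_{jd}$, and by (\ref{a}) $\boldsymbol{a}^0_{jk}=s_{jk}(\omega^0_{jj}+\omega^0_{kk})$; adding them reassembles $\sum_l s_{jl}\omega^0_{lk}+\sum_l \omega^0_{jl}s_{lk}=(\mathbf{S}\mathbf{\Omega}^0)_{jk}+(\mathbf{\Omega}^0\mathbf{S})_{jk}$.

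Since $\mathbf{\Omega}^0=(\mathbf{\Sigma}^0)^{-1}$, we have $\mathbf{\Sigma}^0\mathbf{\Omega}^0+\mathbf{\Omega}^0\mathbf{\Sigma}^0=2I_p$, whose off-diagonal entries all vanish. Subtracting this from the identity above, for $j\ne k$,
\[
\bigl(\mathbf{\Phi}\boldsymbol{\xi}^0+\boldsymbol{a}^0\bigr)_{jk}=\bigl((\mathbf{S}-\mathbf{\Sigma}^0)\mathbf{\Omega}^0+\mathbf{\Omega}^0(\mathbf{S}-\mathbf{\Sigma}^0)\bigr)_{jk}.
\]
Using $\mathbf{\Omega}^0\mathbf{\Sigma}^0=I_p$ and symmetry of $\mathbf{S},\mathbf{\Sigma}^0$, for $j\ne k$ the two summands equal ${\mathbf{\Omega}_{:k}^{0}}'\mathbf{S}_{:j}$ and ${\mathbf{\Omega}_{:j}^{0}}'\mathbf{S}_{:k}$ respectively, each of which is $<c_1\sqrt{\log p/n}$ on the event $C_{2,n}$ of (\ref{C2n}). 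Hence $\|\mathbf{\Phi}\boldsymbol{\xi}^0+\boldsymbol{a}^0\|_{\max}\le 2c_1\sqrt{\log p/n}$ on $C_{1,n}\cap C_{2,n}$.

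It then remains to pass from $\boldsymbol{a}^0$ to $\hat{\boldsymbol{a}}=\mathbf{A}\hat{\boldsymbol{\delta}}$ by a triangle inequality. The $(j,k)$ entry of $\hat{\boldsymbol{a}}-\boldsymbol{a}^0=\mathbf{A}(\hat{\boldsymbol{\delta}}-\boldsymbol{\delta}^0)$ is $s_{jk}\bigl[(\hat\omega_{jj}-\omega^0_{jj})+(\hat\omega_{kk}-\omega^0_{kk})\bigr]$; on $C_{1,n}$ and by Assumption \ref{Assumption3: Bounded eigenvalues}, $|s_{jk}|\le|\sigma^0_{jk}|+c_1\sqrt{\log p/n}\le 1/\tilde{\varepsilon}_0+c_1\sqrt{\log p/n}$, while the accurate-diagonal-estimate bound (\ref{accurate-estimation}) gives $|\hat\omega_{ii}-\omega^0_{ii}|\le C\sqrt{\log p/n}$, so $\|\hat{\boldsymbol{a}}-\boldsymbol{a}^0\|_{\max}\le c'\sqrt{\log p/n}$ for $n$ large. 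Combining, $\|\mathbf{\Phi}\boldsymbol{\xi}^0+\hat{\boldsymbol{a}}\|_{\max}\le(2c_1+c')\sqrt{\log p/n}=:c_0\sqrt{\log p/n}$, with $c_0$ depending only on $c_1,C,\tilde{\varepsilon}_0$; this is the constant $c_0$ later invoked in Assumption \ref{Assumption6: Decay rate}.

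I expect the only non-routine part to be the bookkeeping in the first step: verifying that the four-case block pattern of $\mathbf{\Phi}$ in (\ref{upsilon}), together with the diagonal-weighted vector $\boldsymbol{a}$ in (\ref{a}), sums precisely to the symmetrized product $\mathbf{S}\mathbf{\Omega}^0+\mathbf{\Omega}^0\mathbf{S}$ on off-diagonal positions. Once this identity is recorded, the remaining steps are immediate from the concentration events already assumed and the definition of the $\|\cdot\|_{\max}$ norm; in particular, the whole argument may be carried out on $C_{1,n}\cap C_{2,n}$, to which the surrounding proof has already restricted.
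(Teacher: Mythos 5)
Your proposal is correct and follows essentially the same route as the paper: a triangle-inequality split into $\|\mathbf{\Phi}\boldsymbol{\xi}^0+\boldsymbol{a}\|_{\max}$ and $\|\hat{\boldsymbol{a}}-\boldsymbol{a}\|_{\max}$, identification of the entries of $\mathbf{\Phi}\boldsymbol{\xi}^0+\boldsymbol{a}$ with ${\mathbf{\Omega}^{0}_{:j}}'\mathbf{S}_{:k}+{\mathbf{\Omega}^{0}_{:k}}'\mathbf{S}_{:j}$, concentration of these via the event $C_{2,n}$ (equivalently the vanishing of the off-diagonal entries of $\mathbf{\Sigma}^0\mathbf{\Omega}^0+\mathbf{\Omega}^0\mathbf{\Sigma}^0$), and control of $\|\hat{\boldsymbol{a}}-\boldsymbol{a}\|_{\max}$ through the accurate-diagonal-estimate bound together with the boundedness of $\|\mathbf{S}\|_{\max}$ on $C_{1,n}$. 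Your write-up in fact makes explicit the population identity that the paper leaves implicit, but the argument and the resulting constant $c_0$ are the same.
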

\begin{proof}
Note that by the triangular inequality,
\begin{equation}\label{Lemma 4 proof 1}
\|\mathbf{\Phi} \boldsymbol{\xi}^0 + \hat{\boldsymbol{a}} \|_{\max} \leq \|\mathbf{\Phi} \boldsymbol{\xi}^0 + \boldsymbol{a} \|_{\max} + \| \hat{\boldsymbol{a}} - \boldsymbol{a}\|_{\max},
\end{equation}
where,  provided by Assumption 1.

Next, in view of (\ref{upsilon}), (\ref{a}), and (3.1),
\begin{equation*}
\mathbf{\Phi} \boldsymbol{\xi}^{0} + \boldsymbol{a} = \left( {\begin{array}{*{20}{c}}
{{\mathbf{\Omega}^{0}_{:1}}' {\mathbf{S}}_{:2} + {\mathbf{\Omega}^{0}_{:2}}' {\mathbf{S}}_{:1}}\\
{{\mathbf{\Omega}^{0}_{:1}}' {\mathbf{S}}_{:3} + {\mathbf{\Omega}^{0}_{:3}}' {\mathbf{S}}_{:1}}\\
{\vdots}\\
{{\mathbf{\Omega}^{0}_{:p-1}}' {\mathbf{S}}_{:p} + {\mathbf{\Omega}^{0}_{:p}}' {\mathbf{S}}_{:p-1}}
\end{array} } \right),
\end{equation*}
hence, by restricting to the event $C_{1,n} \cap C_{2,n}$, we have that

\begin{equation}\label{Lemma 4 proof 2}
\begin{split}
\| \mathbf{\Phi} \boldsymbol{\xi}^0 + \boldsymbol{a} \|_{\max} 
& \le \sqrt{ \mathop{\max}_{1 \le i < j \le p} \left( {\mathbf{\Omega}^{0}_{:i}}' {\mathbf{S}}_{:j}\right)^2}\\ 
& \le \mathop{\max}_{1 \le i < j \le p} | {\mathbf{\Omega}^{0}_{:i}}' \left( {\mathbf{S}}_{:j} - \mathbf{\Sigma}_{:j}\right) | \\
& \le 2c_1 \sqrt{\frac{\log p}{n}}.
\end{split}
\end{equation}
Moreover, by (\ref{a}) and (\ref{accurate-estimation}), it is easy to see that
\begin{equation*}
\begin{split}
\| \hat{\boldsymbol{a}} - \boldsymbol{a}\|_{\max} \leq & 2 C \| \mathbf{S}\|_{\max} \sqrt{\frac{\log p}{n}}. 
\end{split}
\end{equation*}

\noindent
Since we are restricting to the event $C_{1,n}$, it follows by Assumption \ref{Assumption3: Bounded eigenvalues} that
\begin{equation}\label{Lemma 4 proof 3}
\begin{split}
\| \hat{\boldsymbol{a}} - \boldsymbol{a}\|_{\max} & \leq  \frac{5C}{\tilde{\varepsilon}_0}\sqrt{\frac{\log p}{n}}
\end{split}
\end{equation}

\noindent
By combining (\ref{Lemma 4 proof 1}), (\ref{Lemma 4 proof 2}), and (\ref{Lemma 4 proof 3}), we get that 
\begin{equation*}
\|\mathbf{\Phi} \boldsymbol{\xi}^0 + \hat{\boldsymbol{a}} \|_{\max} \leq \left( 2c_1 + \frac{5C}{\tilde{\varepsilon}_0} \right) 
\sqrt{\frac{\log p}{n}}. 
\end{equation*}

\noindent
The result follows by letting $c_0 = 2c_1 + \frac{5C}{\tilde{\varepsilon}_0}$.
\end{proof}

\noindent
For ease of presentation, we denote the ratio of the posterior probabilities of any sparsity pattern/model 
$\boldsymbol{{\boldsymbol{l}}}$ and the true sparsity pattern/model $\boldsymbol{t}$,  by $PR\left( \boldsymbol{l}, 
\boldsymbol{t}\right)$, i.e.
\begin{equation}
PR\left( \boldsymbol{l}, \boldsymbol{t}\right) = \frac{P \left\{ \boldsymbol{{\boldsymbol{l}}} | \hat{\boldsymbol{\delta}}, 
\mathcal{Y}\right\}}{P \left\{ \boldsymbol{t} | \hat{\boldsymbol{\delta}}, \mathcal{Y}\right\}}, \quad \quad \text{for any sparsity 
pattern}\quad  \boldsymbol{l} \neq \boldsymbol{t}.
\end{equation}

\begin{Lemma} \label{Lemma 1 PR}
The ratio of the posterior probabilities of any sparsity pattern/model $\boldsymbol{{\boldsymbol{l}}}$ and 
the true sparsity pattern/model $\boldsymbol{t}$ satisfies:

\begin{equation}\label{PR}
\begin{split}
PR(\boldsymbol{l}, \boldsymbol{t}) &= \frac{\pi \left\{ \boldsymbol{{\boldsymbol{l}}} | \hat{\boldsymbol{\delta}}, \mathcal{Y} \right\}}{\pi \left\{ \boldsymbol{t} | \hat{\boldsymbol{\delta}}, \mathcal{Y} \right\}} = \frac{ q^{d_{\boldsymbol{{\boldsymbol{l}}}}}(1-q)^{\binom{p}{2} - d_{\boldsymbol{{\boldsymbol{l}}}}}}{q^{d_{\boldsymbol{t}}}(1-q)^{\binom{p}{2} - d_{\boldsymbol{t}}}}\\
&\times \frac{|\mathbf{\Lambda}_{\boldsymbol{l}\boldsymbol{l}}|^{\frac{1}{2}}}{|\mathbf{\Lambda}_{\boldsymbol{l}\boldsymbol{t}}|^{\frac{1}{2}}} \frac{| \left( n\mathbf{\Phi} + \mathbf{\Lambda}\right)_{tt}|^{\frac{1}{2}}}{| \left( n\mathbf{\Phi} + \mathbf{\Lambda}\right)_{{\boldsymbol{l}}{\boldsymbol{l}}}|^{\frac{1}{2}}} \frac{\exp\left\{ \frac{n^2}{2}\hat{\boldsymbol{a}}_{\boldsymbol{l}}' \left( n\mathbf{\Phi} + \mathbf{\Lambda}\right)_{{\boldsymbol{l}}{\boldsymbol{l}}}^{-1}\hat{\boldsymbol{a}}_{\boldsymbol{l}}\right\}}{\exp\left\{ \frac{n^2}{2}\hat{\boldsymbol{a}}_t' \left( n\mathbf{\Phi} + \mathbf{\Lambda}\right)_{tt}^{-1}\hat{\boldsymbol{a}}_t\right\}}.
\end{split}
\end{equation}
\end{Lemma}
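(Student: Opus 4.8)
The plan is to derive (\ref{PR}) directly from the closed-form expression (\ref{eqsparsity}) for the induced posterior on the space of sparsity patterns, since once (\ref{eqsparsity}) is in hand the lemma is pure bookkeeping. So the first (and only substantive) step is to establish (\ref{eqsparsity}) itself. Fix a sparsity pattern $\boldsymbol{l} \in \mathcal{L}$. Starting from the conditional posterior $\pi\{\boldsymbol{\xi} \mid \hat{\boldsymbol{\delta}}, \mathcal{Y}\}$ in (\ref{posterior theta given delta}), I would integrate the $\boldsymbol{l}$-th summand over $\boldsymbol{\xi}$. On $\mathcal{M}_{\boldsymbol{l}}$ the only free coordinates are the $d_{\boldsymbol{l}}$ entries indexed by the support of $\boldsymbol{l}$, which I denote $\boldsymbol{\xi}_{\boldsymbol{l}}$, and the quadratic form in the exponent collapses to $-\tfrac12\big[\boldsymbol{\xi}_{\boldsymbol{l}}'(n\mathbf{\Phi}+\mathbf{\Lambda})_{\boldsymbol{l}\boldsymbol{l}}\boldsymbol{\xi}_{\boldsymbol{l}} + 2n\,\boldsymbol{\xi}_{\boldsymbol{l}}'\hat{\boldsymbol{a}}_{\boldsymbol{l}}\big]$, multiplied by the $\boldsymbol{l}$-dependent constants $q^{d_{\boldsymbol{l}}}(1-q)^{\binom{p}{2}-d_{\boldsymbol{l}}}$ and $|\mathbf{\Lambda}_{\boldsymbol{l}\boldsymbol{l}}|^{1/2}/(2\pi)^{d_{\boldsymbol{l}}/2}$. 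Completing the square and applying the Gaussian integral identity $\int_{\mathbb{R}^{d}}\exp\big(-\tfrac12 \mathbf{u}'\mathbf{M}\mathbf{u} - \mathbf{v}'\mathbf{u}\big)\,d\mathbf{u} = (2\pi)^{d/2}|\mathbf{M}|^{-1/2}\exp\big(\tfrac12\mathbf{v}'\mathbf{M}^{-1}\mathbf{v}\big)$ with $\mathbf{M} = (n\mathbf{\Phi}+\mathbf{\Lambda})_{\boldsymbol{l}\boldsymbol{l}}$ and $\mathbf{v} = n\hat{\boldsymbol{a}}_{\boldsymbol{l}}$, the $(2\pi)^{d_{\boldsymbol{l}}/2}$ factors cancel and one is left with
\begin{equation*}
\pi\{\boldsymbol{l}\mid\hat{\boldsymbol{\delta}},\mathcal{Y}\} \propto q^{d_{\boldsymbol{l}}}(1-q)^{\binom{p}{2}-d_{\boldsymbol{l}}}\,\frac{|\mathbf{\Lambda}_{\boldsymbol{l}\boldsymbol{l}}|^{\frac{1}{2}}}{|(n\mathbf{\Phi}+\mathbf{\Lambda})_{\boldsymbol{l}\boldsymbol{l}}|^{\frac{1}{2}}}\,\exp\left\{\frac{n^2}{2}\hat{\boldsymbol{a}}_{\boldsymbol{l}}'(n\mathbf{\Phi}+\mathbf{\Lambda})_{\boldsymbol{l}\boldsymbol{l}}^{-1}\hat{\boldsymbol{a}}_{\boldsymbol{l}}\right\},
\end{equation*}
which is exactly (\ref{eqsparsity}); the suppressed proportionality constant is the normalizer $\sum_{\boldsymbol{m}\in\mathcal{L}}(\cdots)$, which does not depend on $\boldsymbol{l}$.

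The second step is immediate: form $PR(\boldsymbol{l},\boldsymbol{t}) = \pi\{\boldsymbol{l}\mid\hat{\boldsymbol{\delta}},\mathcal{Y}\}\big/\pi\{\boldsymbol{t}\mid\hat{\boldsymbol{\delta}},\mathcal{Y}\}$. Because the normalizer is common to numerator and denominator it cancels, and collecting the surviving factors into (i) the prior-odds ratio $q^{d_{\boldsymbol{l}}}(1-q)^{\binom{p}{2}-d_{\boldsymbol{l}}}\big/\big(q^{d_{\boldsymbol{t}}}(1-q)^{\binom{p}{2}-d_{\boldsymbol{t}}}\big)$, (ii) the $\mathbf{\Lambda}$-determinant ratio, (iii) the $(n\mathbf{\Phi}+\mathbf{\Lambda})$-determinant ratio, and (iv) the ratio of the two exponential terms, reproduces (\ref{PR}) verbatim.

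The only point that deserves attention is the validity of the Gaussian integral in the first step, which requires $(n\mathbf{\Phi}+\mathbf{\Lambda})_{\boldsymbol{l}\boldsymbol{l}}$ to be positive definite so that the integral converges and the determinant identity applies. Restricted to realistic sparsity patterns and to the event $C_{1,n}\cap C_{2,n}$, this follows from Lemma \ref{Lemma 3}, which gives $\mathbf{\Phi}_{\boldsymbol{l}\boldsymbol{l}}$ positive definite, together with the fact that $\mathbf{\Lambda}$ is a positive diagonal matrix, so that $(n\mathbf{\Phi}+\mathbf{\Lambda})_{\boldsymbol{l}\boldsymbol{l}} = n\mathbf{\Phi}_{\boldsymbol{l}\boldsymbol{l}} + \mathbf{\Lambda}_{\boldsymbol{l}\boldsymbol{l}}$ is positive definite as well. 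Beyond this the argument is entirely algebraic, so I do not anticipate any real obstacle.
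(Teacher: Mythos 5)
Your proposal is correct and follows the same route as the paper's proof: both integrate the conditional posterior (\ref{posterior theta given delta}) over $\mathcal{M}_{\boldsymbol{l}}$ and evaluate the resulting Gaussian integral to obtain (\ref{eqsparsity}), after which the ratio is immediate since the normalizing constant cancels. Your additional remark on the positive definiteness of $(n\mathbf{\Phi}+\mathbf{\Lambda})_{\boldsymbol{l}\boldsymbol{l}}$ (via Lemma \ref{Lemma 3}) is a detail the paper leaves implicit but is a welcome clarification.
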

\begin{proof} 
We note that
\begin{equation*}
\begin{split}
\pi& \left\{ \boldsymbol{l} | \hat{\boldsymbol{\delta}}, \mathcal{Y}\right\} = \pi \left\{ \boldsymbol{\xi}\in \mathcal{M}_{\boldsymbol{l}}| \hat{\boldsymbol{\delta}}, \mathcal{Y} \right\} = \int_{\mathcal{M}_{\boldsymbol{l}}} \pi \left(\boldsymbol{\xi}| \hat{\boldsymbol{\delta}}, \mathcal{Y}\right) d\boldsymbol{\xi}, 
\end{split}
\end{equation*}

\noindent
Hence, in view of (\ref{posterior theta given delta}), 
\begin{equation*}
\begin{split}
\pi& \left\{ \boldsymbol{l} | \hat{\boldsymbol{\delta}}, \mathcal{Y} \right\} =C_0 q^{d_{\boldsymbol{l}}}(1-q)^{\binom{p}{2} - d_{\boldsymbol{l}}} \frac{|\mathbf{\Lambda}_{\boldsymbol{l}\boldsymbol{l}}|^{\frac{1}{2}}}{| \left( n\mathbf{\Phi} + \mathbf{\Lambda}\right)_{{\boldsymbol{l}}{\boldsymbol{l}}}|^{\frac{1}{2}}} \exp\left\{ \frac{n^2}{2}\hat{\boldsymbol{a}}_{\boldsymbol{l}}' \left( n\mathbf{\Phi} + \mathbf{\Lambda}\right)_{{\boldsymbol{l}}{\boldsymbol{l}}}^{-1}\hat{\boldsymbol{a}}_{\boldsymbol{l}}\right\},\\
\end{split}
\end{equation*}
where the last equality is achieved using the properties of the multivariate normal distribution. 
\end{proof}
In the next series of lemmas, we will show that for any sparsity pattern $\boldsymbol{l} \in \mathcal{L}$, the posterior 
probability ratio $PR(\boldsymbol{l}, \boldsymbol{t})$ is approaching zero, as $n$ goes to $\infty$. Specifically, we consider 
four cases of underfitted (${\boldsymbol{l}} \subset \boldsymbol{t}$), overfitted (${\boldsymbol{t}} \subset \boldsymbol{l}$ 
with $d_{\boldsymbol{l}} < \tau_n$), and non-inclusive ( ${\boldsymbol{t}} \not \subseteq \boldsymbol{l}$ and 
${\boldsymbol{l}} \not \subseteq \boldsymbol{t}$ ) models.

\begin{Lemma} \label{Lemma 5}
Suppose ${\boldsymbol{l}} \subset \boldsymbol{t}$ then, under 
Assumptions \ref{Assumption2: d_t} - \ref{Assumption6: Decay rate} 
\begin{equation}
PR({\boldsymbol{l}}, \boldsymbol{t}) \to 0, \quad \quad \quad \text{as} \quad n \to \infty . 
\end{equation}
\end{Lemma}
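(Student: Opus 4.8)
\noindent
\emph{Proof plan.} I would start from the closed-form ratio in Lemma~\ref{Lemma 1 PR}, take logarithms, and treat the prior-odds term, the two determinant ratios, and the exponential ratio separately (working throughout on the event $C_{1,n}\cap C_{2,n}$). The prior-odds term equals $\big((1-q)/q\big)^{d_{\boldsymbol t}-d_{\boldsymbol l}} < q^{-(d_{\boldsymbol t}-d_{\boldsymbol l})} = p^{a_2 d_{\boldsymbol t}(d_{\boldsymbol t}-d_{\boldsymbol l})}$ by Assumption~\ref{Assumption6: Decay rate}. Since the entries of $\mathbf\Lambda$ are fixed, $|\mathbf\Lambda_{\boldsymbol l\boldsymbol l}|/|\mathbf\Lambda_{\boldsymbol t\boldsymbol t}| = \prod_{(j,k)\in\boldsymbol t\setminus\boldsymbol l}\lambda_{jk}^{-1}\le C_1^{\,d_{\boldsymbol t}-d_{\boldsymbol l}}$. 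Writing $(n\mathbf\Phi+\mathbf\Lambda)_{\boldsymbol t\boldsymbol t}$ in block form with principal block $(n\mathbf\Phi+\mathbf\Lambda)_{\boldsymbol l\boldsymbol l}$, the ratio $|(n\mathbf\Phi+\mathbf\Lambda)_{\boldsymbol t\boldsymbol t}|/|(n\mathbf\Phi+\mathbf\Lambda)_{\boldsymbol l\boldsymbol l}|$ equals the determinant of the corresponding Schur complement, a $(d_{\boldsymbol t}-d_{\boldsymbol l})$-dimensional positive definite matrix whose eigenvalues are bounded above by $\mathrm{eig}_{\max}\big((n\mathbf\Phi+\mathbf\Lambda)_{\boldsymbol t\boldsymbol t}\big)\le Cn$ for large $n$ (by Lemma~\ref{Lemma 3}, applicable since $d_{\boldsymbol t}<\tau_n$ by Assumption~\ref{Assumption2: d_t}, together with $\mathbf\Lambda\succeq 0$). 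Hence the combined logarithm of these three factors is at most $(d_{\boldsymbol t}-d_{\boldsymbol l})\big(a_2 d_{\boldsymbol t}\log p + \tfrac12\log(Cn) + C_2\big)$.

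\noindent
The exponential factor is the crux. For a pattern $\boldsymbol S$, let $\boldsymbol\xi_{\boldsymbol S}:=-n\big((n\mathbf\Phi+\mathbf\Lambda)_{\boldsymbol S\boldsymbol S}\big)^{-1}\hat{\boldsymbol a}_{\boldsymbol S}$ (extended by zeros off $\boldsymbol S$), the minimizer over $\mathcal M_{\boldsymbol S}$ of the convex quadratic $f(\boldsymbol\xi):=\tfrac12\boldsymbol\xi'(n\mathbf\Phi+\mathbf\Lambda)\boldsymbol\xi + n\boldsymbol\xi'\hat{\boldsymbol a}$; a direct computation gives $\tfrac{n^2}{2}\hat{\boldsymbol a}_{\boldsymbol S}'\big((n\mathbf\Phi+\mathbf\Lambda)_{\boldsymbol S\boldsymbol S}\big)^{-1}\hat{\boldsymbol a}_{\boldsymbol S} = -f(\boldsymbol\xi_{\boldsymbol S})$. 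Because $\boldsymbol l\subset\boldsymbol t$ gives $\mathcal M_{\boldsymbol l}\subseteq\mathcal M_{\boldsymbol t}$ and $\boldsymbol\xi_{\boldsymbol l}-\boldsymbol\xi_{\boldsymbol t}\in\mathcal M_{\boldsymbol t}$, first-order optimality of $\boldsymbol\xi_{\boldsymbol t}$ on $\mathcal M_{\boldsymbol t}$ yields the extra-sum-of-squares identity $f(\boldsymbol\xi_{\boldsymbol l}) - f(\boldsymbol\xi_{\boldsymbol t}) = \tfrac12(\boldsymbol\xi_{\boldsymbol l}-\boldsymbol\xi_{\boldsymbol t})'(n\mathbf\Phi+\mathbf\Lambda)(\boldsymbol\xi_{\boldsymbol l}-\boldsymbol\xi_{\boldsymbol t})$, so the logarithm of the exponential factor in $PR(\boldsymbol l,\boldsymbol t)$ equals $-\tfrac12(\boldsymbol\xi_{\boldsymbol l}-\boldsymbol\xi_{\boldsymbol t})'(n\mathbf\Phi+\mathbf\Lambda)(\boldsymbol\xi_{\boldsymbol l}-\boldsymbol\xi_{\boldsymbol t})\le -\tfrac{3n\tilde\varepsilon_0}{8}\|\boldsymbol\xi_{\boldsymbol l}-\boldsymbol\xi_{\boldsymbol t}\|^2$, using Lemma~\ref{Lemma 3} on $(n\mathbf\Phi+\mathbf\Lambda)_{\boldsymbol t\boldsymbol t}$ (the difference vector being supported on $\boldsymbol t$).

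\noindent
It then remains to show $\|\boldsymbol\xi_{\boldsymbol l}-\boldsymbol\xi_{\boldsymbol t}\|^2\ge \tfrac14(d_{\boldsymbol t}-d_{\boldsymbol l})s_n^2$, which I would obtain by proving that $\boldsymbol\xi_{\boldsymbol t}$ tracks the truth $\boldsymbol\xi^0$. From the normal equations, $(n\mathbf\Phi+\mathbf\Lambda)_{\boldsymbol t\boldsymbol t}(\boldsymbol\xi_{\boldsymbol t}-\boldsymbol\xi^0_{\boldsymbol t}) = -n(\mathbf\Phi\boldsymbol\xi^0+\hat{\boldsymbol a})_{\boldsymbol t} - \mathbf\Lambda_{\boldsymbol t\boldsymbol t}\boldsymbol\xi^0_{\boldsymbol t}$ (using that $\boldsymbol\xi^0$ is supported on $\boldsymbol t$, so $(\mathbf\Phi\boldsymbol\xi^0)_{\boldsymbol t}=\mathbf\Phi_{\boldsymbol t\boldsymbol t}\boldsymbol\xi^0_{\boldsymbol t}$); Lemma~\ref{Lemma 4} bounds the first term by $c_0\sqrt{n d_{\boldsymbol t}\log p}$ and Assumption~\ref{Assumption3: Bounded eigenvalues} bounds $\|\mathbf\Lambda_{\boldsymbol t\boldsymbol t}\boldsymbol\xi^0_{\boldsymbol t}\|$ by $O(\sqrt{d_{\boldsymbol t}})$, so with $\mathrm{eig}_{\min}\big((n\mathbf\Phi+\mathbf\Lambda)_{\boldsymbol t\boldsymbol t}\big)\ge\tfrac{3n\tilde\varepsilon_0}{4}$ one gets $\|\boldsymbol\xi_{\boldsymbol t}-\boldsymbol\xi^0_{\boldsymbol t}\| = O\big(\sqrt{d_{\boldsymbol t}\log p/n}\big) = o(s_n)$ by Assumption~\ref{Assumption5: Signal Strength}. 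Since $\boldsymbol\xi_{\boldsymbol l}$ vanishes on the $d_{\boldsymbol t}-d_{\boldsymbol l}\ge 1$ coordinates in $\boldsymbol t\setminus\boldsymbol l$, each of which $\boldsymbol\xi^0$ has in magnitude at least $s_n$, the triangle inequality gives $\|\boldsymbol\xi_{\boldsymbol l}-\boldsymbol\xi_{\boldsymbol t}\|\ge \|(\boldsymbol\xi_{\boldsymbol t})_{\boldsymbol t\setminus\boldsymbol l}\|\ge \sqrt{d_{\boldsymbol t}-d_{\boldsymbol l}}\,s_n - \|\boldsymbol\xi_{\boldsymbol t}-\boldsymbol\xi^0\| \ge \tfrac12\sqrt{d_{\boldsymbol t}-d_{\boldsymbol l}}\,s_n$ for $n$ large. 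Putting the pieces together, $\log PR(\boldsymbol l,\boldsymbol t)\le (d_{\boldsymbol t}-d_{\boldsymbol l})\big[a_2 d_{\boldsymbol t}\log p + \tfrac12\log(Cn) + C_2 - \tfrac{3\tilde\varepsilon_0}{32}\,n s_n^2\big]$, and since $d_{\boldsymbol t}-d_{\boldsymbol l}\ge 1$ while $d_{\boldsymbol t}\log p + \log n = o(n s_n^2)$ by Assumption~\ref{Assumption5: Signal Strength}, the bracket $\to-\infty$, hence $PR(\boldsymbol l,\boldsymbol t)\to 0$.

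\noindent
The main obstacle is making the exponential factor work in our favor: one must re-express the ratio of marginal-likelihood quadratic forms as the ``gain from fitting the true coordinates'' through the nested-minimization identity, and then show that the constrained minimizer $\boldsymbol\xi_{\boldsymbol t}$ is within $o(s_n)$ of $\boldsymbol\xi^0$, so that even a single omitted signal coordinate produces a penalty of order $n s_n^2$ — large enough to overcome the unfavorable prior-odds factor $p^{a_2 d_{\boldsymbol t}(d_{\boldsymbol t}-d_{\boldsymbol l})}$. Assumption~\ref{Assumption5: Signal Strength} is precisely calibrated to make this comparison succeed.
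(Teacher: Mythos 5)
Your proposal is correct and reaches the paper's conclusion with the same rate; the skeleton is the same (the closed-form ratio from Lemma~\ref{Lemma 1 PR}, the prior-odds and determinant factors handled exactly as in the paper, Lemma~\ref{Lemma 3} for eigenvalue control, Lemma~\ref{Lemma 4} for the noise, and Assumption~\ref{Assumption5: Signal Strength} to let the $ns_n^2$ penalty dominate $d_{\boldsymbol t}\log p + \log n$). Where you genuinely diverge is in how the exponential factor is lower-bounded. The paper applies the partitioned-inverse identity directly to the difference of quadratic forms, obtaining the Schur-complement form $\left[\hat{\boldsymbol a}_{\boldsymbol l^c} - n\mathbf\Phi_{\boldsymbol l^c\boldsymbol l}(n\mathbf\Phi+\mathbf\Lambda)_{\boldsymbol l\boldsymbol l}^{-1}\hat{\boldsymbol a}_{\boldsymbol l}\right]'(n\mathbf\Phi+\mathbf\Lambda)_{\boldsymbol t|\boldsymbol l}^{-1}\left[\cdots\right]$, and then runs a three-term triangle inequality on that residual vector, isolating the signal term $\frac{1}{n}(n\mathbf\Phi+\mathbf\Lambda)_{\boldsymbol t|\boldsymbol l}\boldsymbol\xi^0_{\boldsymbol l^c}$ from the Lemma~\ref{Lemma 4} noise and the $\hat{\boldsymbol a}-\boldsymbol a$ estimation error. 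You instead recast the exponent as the extra-sum-of-squares $\tfrac12(\boldsymbol\xi_{\boldsymbol l}-\boldsymbol\xi_{\boldsymbol t})'(n\mathbf\Phi+\mathbf\Lambda)(\boldsymbol\xi_{\boldsymbol l}-\boldsymbol\xi_{\boldsymbol t})$ between the two constrained minimizers and prove a consistency statement $\|\boldsymbol\xi_{\boldsymbol t}-\boldsymbol\xi^0\| = O(\sqrt{d_{\boldsymbol t}\log p/n}) = o(s_n)$ via the normal equations, so that the omitted coordinates each contribute $\approx s_n^2$. The two computations are algebraically equivalent (your residual is, up to the Schur complement, the gradient of $f$ at $\boldsymbol\xi_{\boldsymbol l}$ restricted to $\boldsymbol t\setminus\boldsymbol l$), but your ``estimator-space'' version is more transparent about why underfitting is penalized and yields the intermediate consistency of $\boldsymbol\xi_{\boldsymbol t}$ as a reusable byproduct, whereas the paper's ``score-space'' version avoids introducing the minimizers and plugs directly into the expression from Lemma~\ref{Lemma 1 PR}. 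One small point of care if you write this up fully: the bound $|\mathbf\Lambda_{\boldsymbol l\boldsymbol l}|/|\mathbf\Lambda_{\boldsymbol t\boldsymbol t}|\le C_1^{d_{\boldsymbol t}-d_{\boldsymbol l}}$ implicitly requires the fixed $\lambda_{jk}$ to be bounded away from zero, the same implicit assumption the paper makes when it absorbs this ratio into a constant.
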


\begin{proof}
By Assumption 2, $d_{\boldsymbol{t}} < \tau$, hence $d_{\boldsymbol{l}} < d_{\boldsymbol{t}} < \tau$. Now,

\begin{equation*}
\begin{split}
PR\left({\boldsymbol{l}}, \boldsymbol{t}\right) =& \frac{\| \mathbf{\Lambda}_{{\boldsymbol{l}} {\boldsymbol{l}}} \|^{\frac{1}{2}}}{\| \mathbf{\Lambda}_{\boldsymbol{t}\boldsymbol{t}} \|^{\frac{1}{2}}} \left( \frac{q}{1-q}\right)^{d_{\boldsymbol{l}} - d_{\boldsymbol{t}}} \frac{\|\left( n\mathbf{\Phi} + \mathbf{\Lambda}\right)_{\boldsymbol{t}\boldsymbol{t}} \|^{\frac{1}{2}}}{\|\left( n\mathbf{\Phi} + \mathbf{\Lambda}\right)_{{\boldsymbol{l}} {\boldsymbol{l}}} \|^{\frac{1}{2}}} \frac{\exp\left\{ \frac{n^2}{2} {\hat{\boldsymbol{a}}_{{\boldsymbol{l}}}}' \left( n\mathbf{\Phi} + \mathbf{\Lambda}\right)^{-1}_{{\boldsymbol{l}} {\boldsymbol{l}}} \hat{\boldsymbol{a}}_{\boldsymbol{l}} \right\}}{\exp\left\{ \frac{n^2}{2} {\hat{\boldsymbol{a}}_{\boldsymbol{t}}}' \left( n\mathbf{\Phi} + \mathbf{\Lambda}\right)^{-1}_{\boldsymbol{t}\boldsymbol{t}} \hat{\boldsymbol{a}}_{\boldsymbol{t}}\right\}}\\
= & \frac{\| \mathbf{\Lambda}_{{\boldsymbol{l}} {\boldsymbol{l}}} \|^{\frac{1}{2}}}{\| \mathbf{\Lambda}_{\boldsymbol{t}\boldsymbol{t}} \|^{\frac{1}{2}}} \left( \frac{q}{1-q}\right)^{d_{\boldsymbol{l}} - d_{\boldsymbol{t}}} \frac{\|\left( n\mathbf{\Phi} + \mathbf{\Lambda}\right)_{\boldsymbol{t}\boldsymbol{t}} \|^{\frac{1}{2}}}{\|\left( n\mathbf{\Phi} + \mathbf{\Lambda}\right)_{{\boldsymbol{l}} {\boldsymbol{l}}} \|^{\frac{1}{2}}}\\
&\exp\left\{- \frac{n^2}{2}\left[\hat{\boldsymbol{a}}_{{\boldsymbol{l}}^c} - n \mathbf{\Phi}_{{\boldsymbol{l}}^c{\boldsymbol{l}}}\left( n\mathbf{\Phi} + \mathbf{\Lambda}\right)_{{\boldsymbol{l}}{\boldsymbol{l}}}^{-1}\hat{\boldsymbol{a}}_{\boldsymbol{l}}\right]'\left( n\mathbf{\Phi} + \mathbf{\Lambda}\right)_{\boldsymbol{t}|{\boldsymbol{l}}}^{-1} \left[\hat{\boldsymbol{a}}_{{\boldsymbol{l}}^c} - n \mathbf{\Phi}_{{\boldsymbol{l}}^c{\boldsymbol{l}}}\left( n\mathbf{\Phi} + \mathbf{\Lambda}\right)_{{\boldsymbol{l}}{\boldsymbol{l}}}^{-1}\hat{\boldsymbol{a}}_{\boldsymbol{l}}\right]\right\},
\end{split}
\end{equation*}

\noindent
where $\boldsymbol{l}^c = \boldsymbol{t} \setminus \boldsymbol{l}$. It follows that 
\begin{equation*}
\begin{split}
PR\left({\boldsymbol{l}}, \boldsymbol{t}\right) \leq& \frac{\| \mathbf{\Lambda}_{{\boldsymbol{l}} {\boldsymbol{l}}} \|^{\frac{1}{2}}}{\| \mathbf{\Lambda}_{\boldsymbol{t}\boldsymbol{t}} \|^{\frac{1}{2}}} \left( \frac{q}{1-q}\right)^{d_{\boldsymbol{l}} - d_{\boldsymbol{t}}} \frac{\|\left( n\mathbf{\Phi} + \mathbf{\Lambda}\right)_{\boldsymbol{t}\boldsymbol{t}} \|^{\frac{1}{2}}}{\|\left( n\mathbf{\Phi} + \mathbf{\Lambda}\right)_{{\boldsymbol{l}} {\boldsymbol{l}}} \|^{\frac{1}{2}}} \exp \left\{ - \frac{n^2 \| \hat{\boldsymbol{a}}_{{\boldsymbol{l}}^c} - n \mathbf{\Phi}_{{\boldsymbol{l}}^c{\boldsymbol{l}}}\left( n\mathbf{\Phi} + \mathbf{\Lambda}\right)_{{\boldsymbol{l}}{\boldsymbol{l}}}^{-1}\hat{\boldsymbol{a}}_{\boldsymbol{l}}\|^2}{2 \text{eig}_{max}\left(n\mathbf{\Phi} + \mathbf{\Lambda} \right)_{\boldsymbol{t}\boldsymbol{t}}}\right\},
\end{split}
\end{equation*}

Now, by the triangular inequality,
\begin{equation}\label{Lemma 5.1}
\begin{split}
\| \hat{\boldsymbol{a}}_{{\boldsymbol{l}}^c} - n \mathbf{\Phi}_{{\boldsymbol{l}}^c{\boldsymbol{l}}}&\left( n\mathbf{\Phi} + \mathbf{\Lambda}\right)_{{\boldsymbol{l}}{\boldsymbol{l}}}^{-1}\hat{\boldsymbol{a}}_{\boldsymbol{l}}\| \\
\geq &\| \boldsymbol{a}_{{\boldsymbol{l}}^c} - n \mathbf{\Phi}_{{\boldsymbol{l}}^c{\boldsymbol{l}}}\left( n\mathbf{\Phi} + \mathbf{\Lambda}\right)_{{\boldsymbol{l}}{\boldsymbol{l}}}^{-1}\boldsymbol{a}_{\boldsymbol{l}}\|\\
&-\|\left( \hat{\boldsymbol{a}}_{{\boldsymbol{l}}^c} - \boldsymbol{a}_{{\boldsymbol{l}}^c}\right) - n \mathbf{\Phi}_{{\boldsymbol{l}}^c{\boldsymbol{l}}}\left( n\mathbf{\Phi} + \mathbf{\Lambda}\right)_{{\boldsymbol{l}}{\boldsymbol{l}}}^{-1}\left( \hat{\boldsymbol{a}}_{{\boldsymbol{l}}} - \boldsymbol{a}_{{\boldsymbol{l}}} \right) \|\\
=&\| \left( \pm \mathbf{\Phi} \boldsymbol{\xi}^0 + \boldsymbol{a}\right)_{{\boldsymbol{l}}^c} - n \mathbf{\Phi}_{{\boldsymbol{l}}^c{\boldsymbol{l}}}\left( n\mathbf{\Phi} + \mathbf{\Lambda}\right)_{{\boldsymbol{l}}{\boldsymbol{l}}}^{-1}\left( \pm \mathbf{\Phi} \boldsymbol{\xi}^0 + \boldsymbol{a} \right)_{{\boldsymbol{l}}}\| \\
&-\|\left( \hat{\boldsymbol{a}}_{{\boldsymbol{l}}^c} - \boldsymbol{a}_{{\boldsymbol{l}}^c}\right) - n \mathbf{\Phi}_{{\boldsymbol{l}}^c{\boldsymbol{l}}}\left( n\mathbf{\Phi} + \mathbf{\Lambda}\right)_{{\boldsymbol{l}}{\boldsymbol{l}}}^{-1}\left( \hat{\boldsymbol{a}}_{{\boldsymbol{l}}} - \boldsymbol{a}_{{\boldsymbol{l}}} \right) \|\\
\ge & \| \left( \mathbf{\Phi} \boldsymbol{\xi}^0\right)_{{\boldsymbol{l}}^c} - n \mathbf{\Phi}_{{\boldsymbol{l}}^c{\boldsymbol{l}}}\left( n\mathbf{\Phi} + \mathbf{\Lambda}\right)_{{\boldsymbol{l}}{\boldsymbol{l}}}^{-1}\left( \mathbf{\Phi} \boldsymbol{\xi}^0 \right)_{{\boldsymbol{l}}}\| \\
&- \| \left( \mathbf{\Phi} \boldsymbol{\xi}^0 + \boldsymbol{a}\right)_{{\boldsymbol{l}}^c} - n \mathbf{\Phi}_{{\boldsymbol{l}}^c{\boldsymbol{l}}}\left( n\mathbf{\Phi} + \mathbf{\Lambda}\right)_{{\boldsymbol{l}}{\boldsymbol{l}}}^{-1}\left( \mathbf{\Phi} \boldsymbol{\xi}^0 + \boldsymbol{a}\right)_{{\boldsymbol{l}}}\| \\
&-\|\left( \hat{\boldsymbol{a}}_{{\boldsymbol{l}}^c} - \boldsymbol{a}_{{\boldsymbol{l}}^c}\right) - n \mathbf{\Phi}_{{\boldsymbol{l}}^c{\boldsymbol{l}}}\left( n\mathbf{\Phi} + \mathbf{\Lambda}\right)_{{\boldsymbol{l}}{\boldsymbol{l}}}^{-1}\left( \hat{\boldsymbol{a}}_{{\boldsymbol{l}}} - \boldsymbol{a}_{{\boldsymbol{l}}} \right) \|.
\end{split}
\end{equation}
Now, by appropriately partitioning $\mathbf{\Phi}$, we can write $\left( \mathbf{\Phi} \boldsymbol{\xi}^0\right)_{{\boldsymbol{l}}^c} = \mathbf{\Phi}_{{\boldsymbol{l}}^c{\boldsymbol{l}}} \boldsymbol{\xi}^0_{{\boldsymbol{l}}} + \mathbf{\Phi}_{{\boldsymbol{l}}^c{\boldsymbol{l}}^c} \boldsymbol{\xi}^0_{{\boldsymbol{l}}^c}$ and $\left( \mathbf{\Phi} \boldsymbol{\xi}^0\right)_{{\boldsymbol{l}}} = \mathbf{\Phi}_{{\boldsymbol{l}}{\boldsymbol{l}}} \boldsymbol{\xi}^0_{{\boldsymbol{l}}} + \mathbf{\Phi}_{{\boldsymbol{l}}{\boldsymbol{l}}^c} \boldsymbol{\xi}^0_{{\boldsymbol{l}}^c}$. Hence, for large enough $n$,

\begin{equation}\label{Lemma 5.2}
\begin{split}
\| \left(\mathbf{\Phi} \boldsymbol{\xi}^0\right)_{{\boldsymbol{l}}^c} - n \mathbf{\Phi}_{{\boldsymbol{l}}^c{\boldsymbol{l}}}\left( n\mathbf{\Phi} + \mathbf{\Lambda}\right)_{{\boldsymbol{l}}{\boldsymbol{l}}}^{-1}\left( \mathbf{\Phi} \boldsymbol{\xi}^0 \right)_{{\boldsymbol{l}}}\| &= \| \frac{1}{n}\left( n\mathbf{\Phi} + \mathbf{\Lambda}\right)_{\boldsymbol{t}|{\boldsymbol{l}}}\boldsymbol{\xi}_{{\boldsymbol{l}}^c}^0 - \mathbf{\Phi}_{{\boldsymbol{l}}^c{\boldsymbol{l}}}\left( n\mathbf{\Phi} + \mathbf{\Lambda}\right)_{{\boldsymbol{l}}{\boldsymbol{l}}}^{-1}\mathbf{\Lambda}_{{\boldsymbol{l}}{\boldsymbol{l}}}\boldsymbol{\xi}_{{\boldsymbol{l}}}^0\| \\
&\geq \| \frac{1}{n}\left( n\mathbf{\Phi} + \mathbf{\Lambda}\right)_{\boldsymbol{t}|{\boldsymbol{l}}}\boldsymbol{\xi}_{{\boldsymbol{l}}^c}^0 \| - \| \mathbf{\Phi}_{{\boldsymbol{l}}^c{\boldsymbol{l}}}\left( n\mathbf{\Phi} + \mathbf{\Lambda}\right)_{{\boldsymbol{l}}{\boldsymbol{l}}}^{-1}\mathbf{\Lambda}_{{\boldsymbol{l}}{\boldsymbol{l}}}\boldsymbol{\xi}_{{\boldsymbol{l}}}^0\| \\
&\geq \| \frac{1}{n}\left( n\mathbf{\Phi} + \mathbf{\Lambda}\right)_{\boldsymbol{t}|{\boldsymbol{l}}}\boldsymbol{\xi}_{{\boldsymbol{l}}^c}^0 \| - \frac{\text{eig}_{\min}\left( \mathbf{\Phi}_{{\boldsymbol{l}}^c{\boldsymbol{l}}}\right) \| \mathbf{\Lambda}_{{\boldsymbol{l}}{\boldsymbol{l}}}\boldsymbol{\xi}_{{\boldsymbol{l}}}^0\|}{\text{eig}_{\min}\left( n\mathbf{\Phi} + \mathbf{\Lambda}\right)_{{\boldsymbol{l}} {\boldsymbol{l}}}} \\
&\geq \| \frac{1}{n}\left( n\mathbf{\Phi} + \mathbf{\Lambda}\right)_{\boldsymbol{t}|{\boldsymbol{l}}}\boldsymbol{\xi}_{{\boldsymbol{l}}^c}^0 \| - \frac{2 \| \mathbf{\Lambda}_{{\boldsymbol{l}}{\boldsymbol{l}}}\boldsymbol{\xi}_{{\boldsymbol{l}}}^0\|}{n\tilde{\varepsilon}_0^2} \\
& \geq \frac{1}{2} \| \frac{1}{n}\left( n\mathbf{\Phi} + \mathbf{\Lambda}\right)_{\boldsymbol{t}|{\boldsymbol{l}}}\boldsymbol{\xi}_{{\boldsymbol{l}}^c}^0 \| \\
&\ge \frac{1}{2} \frac{1}{n} \text{eig}_{\text{min}}\left( n\mathbf{\Phi} + \mathbf{\Lambda}\right)_{\boldsymbol{t}\boldsymbol{t}}s_n\sqrt{(d_{\boldsymbol{t}} - d_{\boldsymbol{l}})} \\
&\ge \frac{1}{2} \frac{1}{n} n\text{eig}_{\text{min}}\left(\mathbf{\Phi} \right)_{tt}s_n\sqrt{(d_{\boldsymbol{t}} - d_{\boldsymbol{l}})} \\
&\ge \frac{3}{8} \tilde{\varepsilon}_0 s_n \sqrt{(d_{\boldsymbol{t}} - d_{\boldsymbol{l}})}
\end{split}
\end{equation} 
Moving onto the second term in the right hand side of (\ref{Lemma 5.1}),
\begin{equation}\label{Lemma 5.3}
\begin{split}
& \| \left( \mathbf{\Phi} \boldsymbol{\xi}^0 + \boldsymbol{a}\right)_{{\boldsymbol{l}}^c} - n \mathbf{\Phi}_{{\boldsymbol{l}}^c{\boldsymbol{l}}}\left( n\mathbf{\Phi} + \mathbf{\Lambda}\right)_{{\boldsymbol{l}}{\boldsymbol{l}}}^{-1}\left( \mathbf{\Phi} \boldsymbol{\xi}^0 + \boldsymbol{a}\right)_{{\boldsymbol{l}}}\| \\
& \le \| \left( \mathbf{\Phi} \boldsymbol{\xi}^0 + \boldsymbol{a}\right)_{{\boldsymbol{l}}^c} \| + \| n \mathbf{\Phi}_{{\boldsymbol{l}}^c{\boldsymbol{l}}}\left( n\mathbf{\Phi} + \mathbf{\Lambda}\right)_{{\boldsymbol{l}}{\boldsymbol{l}}}^{-1}\left( \mathbf{\Phi} \boldsymbol{\xi}^0 + \boldsymbol{a}\right)_{{\boldsymbol{l}}}\|\\
& \le \| \left( \mathbf{\Phi} \boldsymbol{\xi}^0 + \boldsymbol{a}\right)_{{\boldsymbol{l}}^c} \| + \frac{n\text{eig}_{\max} \left(\mathbf{\Phi}_{\boldsymbol{l}^c \boldsymbol{l}}\right) \|\left( \mathbf{\Phi} \boldsymbol{\xi}^0 + \boldsymbol{a}\right)_{{\boldsymbol{l}}}\| }{\text{eig}_{\min} \left( n\mathbf{\Phi} + \mathbf{\Lambda}\right)_{{\boldsymbol{l}} {\boldsymbol{l}}} }\\
& \le \| \left( \mathbf{\Phi} \boldsymbol{\xi}^0 + \boldsymbol{a}\right)_{{\boldsymbol{l}}^c} \| + \frac{2\|\left( \mathbf{\Phi} \boldsymbol{\xi}^0 + \boldsymbol{a}\right)_{{\boldsymbol{l}}}\|}{\tilde{\varepsilon}_0^2} \\
& \leq c_0\sqrt{\frac{\log p}{n}} \left(\sqrt{d_{\boldsymbol{t}} - d_{\boldsymbol{l}}} + \frac{2\sqrt{d_{\boldsymbol{l}}}}{\tilde{\varepsilon}_0^2} \right),
\end{split}
\end{equation}
where the last equality was achieved by Lemma \ref{Lemma 4}. 
Further, regarding the third term in the right hand side of (\ref{Lemma 5.1}) we can express it as 
\begin{equation}\label{Lemma 5.4}
\begin{split}
\|\left( \hat{\boldsymbol{a}}_{{\boldsymbol{l}}^c} - \boldsymbol{a}_{{\boldsymbol{l}}^c}\right) - n \mathbf{\Phi}_{{\boldsymbol{l}}^c{\boldsymbol{l}}}\left( n\mathbf{\Phi} + \mathbf{\Lambda}\right)_{{\boldsymbol{l}}{\boldsymbol{l}}}^{-1}\left( \hat{\boldsymbol{a}}_{{\boldsymbol{l}}} - \boldsymbol{a}_{{\boldsymbol{l}}} \right) \| & \leq \| \hat{\boldsymbol{a}}_{{\boldsymbol{l}}^c} - \boldsymbol{a}_{{\boldsymbol{l}}^c} \| + \frac{n\text{eig}_{\max} \left(\mathbf{\Phi}_{\boldsymbol{l}^c \boldsymbol{l}}\right) \|\left( \hat{\boldsymbol{a}}_{{\boldsymbol{l}}} - \boldsymbol{a}_{{\boldsymbol{l}}} \right) \| }{\text{eig}_{\min} \left( n\mathbf{\Phi} + \mathbf{\Lambda}\right)_{{\boldsymbol{l}} {\boldsymbol{l}}} }\\
& \leq \frac{3C}{\tilde{\varepsilon}_0} \sqrt{\frac{\log p}{n}}  \left(\sqrt{d_{\boldsymbol{t}} - d_{\boldsymbol{l}}} + \frac{2\sqrt{d_{\boldsymbol{l}}}}{\tilde{\varepsilon}_0^2} \right),
\end{split}
\end{equation}

\noindent
Hence, by combining (\ref{Lemma 5.1}), (\ref{Lemma 5.2}), (\ref{Lemma 5.3}), and (\ref{Lemma 5.4}), for sufficiently large 
$n$, we have that
\begin{equation*}
\begin{split}
\| \hat{\boldsymbol{a}}_{{\boldsymbol{l}}^c} - n \mathbf{\Phi}_{{\boldsymbol{l}}^c{\boldsymbol{l}}}\left( n\mathbf{\Phi} + \mathbf{\Lambda}\right)_{{\boldsymbol{l}}{\boldsymbol{l}}}^{-1}\hat{\boldsymbol{a}}_{\boldsymbol{l}}\| \geq & \frac{3}{8} \tilde{\varepsilon}_0 s_n \sqrt{(d_{\boldsymbol{t}} - d_{\boldsymbol{l}})} \\
&- c_0\sqrt{\frac{\log p}{n}}  \left(\sqrt{d_{\boldsymbol{t}} - d_{\boldsymbol{l}}} + \frac{2\sqrt{d_{\boldsymbol{l}}}}{\tilde{\varepsilon}_0^2} \right) \\
&- \frac{3C}{\tilde{\varepsilon}_0} \sqrt{\frac{\log p}{n}}  \left(\sqrt{d_{\boldsymbol{t}} - d_{\boldsymbol{l}}} + \frac{2\sqrt{d_{\boldsymbol{l}}}}{\tilde{\varepsilon}_0^2} \right)\\
\geq & \frac{1}{2} \tilde{\varepsilon}_0 s_n \sqrt{(d_{\boldsymbol{t}} - d_{\boldsymbol{l}})} \\
& - \left(c_0 +  \frac{3C}{\tilde{\varepsilon}_0} \right) \sqrt{\frac{\log p}{n}}  \left(\sqrt{d_{\boldsymbol{t}} - d_{\boldsymbol{l}}} + \frac{2\sqrt{d_{\boldsymbol{l}}}}{\tilde{\varepsilon}_0^2} \right)\\
\geq & \frac{1}{2} \tilde{\varepsilon}_0 s_n - \left(c_0 +  \frac{3C}{\tilde{\varepsilon}_0} \right) \sqrt{\frac{\log p}{n}}  \left(\frac{2\sqrt{d_{\boldsymbol{t}}}}{\tilde{\varepsilon}_0^2} \right), \\
\end{split}
\end{equation*}
in view of Assumption 5, $\frac{ \frac{3}{8} \tilde{\varepsilon}_0 s_n }{ \left(c_0 +  \frac{3C}{\tilde{\varepsilon}_0} \right) \sqrt{\frac{\log p}{n}}  \left(\frac{2\sqrt{d_{\boldsymbol{t}}}}{\tilde{\varepsilon}_0^2} \right)} \to \infty$, as $n \to \infty$, hence, for all large $n$, we can write,
\begin{equation*}
\| \hat{\boldsymbol{a}}_{{\boldsymbol{l}}^c} - n \mathbf{\Phi}_{{\boldsymbol{l}}^c{\boldsymbol{l}}}\left( n\mathbf{\Phi} + \mathbf{\Lambda}\right)_{{\boldsymbol{l}}{\boldsymbol{l}}}^{-1}\hat{\boldsymbol{a}}_{\boldsymbol{l}}\| \geq \frac{1}{4} \tilde{\varepsilon}_0 s_n 
\end{equation*}
Now, once again by Lemma \ref{Lemma 4}
\begin{equation*}
\begin{split}
PR({\boldsymbol{l}}, \boldsymbol{t}) & \leq \frac{\| \mathbf{\Lambda}_{{\boldsymbol{l}} {\boldsymbol{l}}}\|^{\frac{1}{2}}}{\| \mathbf{\Lambda}_{\boldsymbol{t}\boldsymbol{t}}\|^{\frac{1}{2}}} (2q)^{d_{\boldsymbol{l}} - d_{\boldsymbol{t}}} n^{\frac{d_{\boldsymbol{t}} - d_{\boldsymbol{l}} }{2}} \exp \left\{-\frac{n^2 \frac{1}{64} \tilde{\varepsilon}_0^2s_n^2}{6n\tilde{\varepsilon}_0^{-1}} \right\} \\
& = \frac{\| \mathbf{\Lambda}_{{\boldsymbol{l}} {\boldsymbol{l}}}\|^{\frac{1}{2}}}{\| \mathbf{\Lambda}_{\boldsymbol{t}\boldsymbol{t}}\|^{\frac{1}{2}}} 2^{d_{{\boldsymbol{l}}} - d_{\boldsymbol{t}}} \left( \frac{\sqrt{n}}{q} \exp \left\{ -\frac{n\tilde{\varepsilon}_0^3 s_n^2}{384}\right\} \right)^{d_{\boldsymbol{t}} - d_{\boldsymbol{l}} }. 
\end{split}
\end{equation*}

\noindent
Since the diagonal entries of $\mathbf{\Lambda}$ are uniformly bounded, it follows by 
Assumption \ref{Assumption5: Signal Strength} that for large enough $n$ 
\begin{equation*}
\begin{split}
PR({\boldsymbol{l}}, \boldsymbol{t}) & \leq \frac{\| \mathbf{\Lambda}_{{\boldsymbol{l}} {\boldsymbol{l}}}\|^{\frac{1}{2}}}{\| \mathbf{\Lambda}_{\boldsymbol{t}\boldsymbol{t}}\|^{\frac{1}{2}}} 2^{d_{{\boldsymbol{l}}} - d_{\boldsymbol{t}}} \left( \frac{\sqrt{n}}{q} \exp \left\{ -\log n -2a_2d_{\boldsymbol{t}} \log p \right\} \right)^{d_{\boldsymbol{t}} - d_{\boldsymbol{l}} }\\
&= \frac{\| \mathbf{\Lambda}_{{\boldsymbol{l}} {\boldsymbol{l}}}\|^{\frac{1}{2}}}{\| \mathbf{\Lambda}_{\boldsymbol{t}\boldsymbol{t}}\|^{\frac{1}{2}}} 2^{d_{{\boldsymbol{l}}} - d_{\boldsymbol{t}}} \left( \frac{p^{-2a_2 d_{\boldsymbol{t}}}}{\sqrt{n}q}  \right)^{d_{\boldsymbol{t}} - d_{\boldsymbol{l}} }\\
&= \frac{\| \mathbf{\Lambda}_{{\boldsymbol{l}} {\boldsymbol{l}}}\|^{\frac{1}{2}}}{\| \mathbf{\Lambda}_{\boldsymbol{t}\boldsymbol{t}}\|^{\frac{1}{2}}} 2^{d_{{\boldsymbol{l}}} - d_{\boldsymbol{t}}} \left( \frac{p^{-a_2 d_{\boldsymbol{t}}}}{\sqrt{n}}  \right)^{d_{\boldsymbol{t}} - d_{\boldsymbol{l}} }\\
&= \left( \frac{ 2 C_1 p^{-a_2 d_{\boldsymbol{t}}}}{\sqrt{n}} \right)^{d_{\boldsymbol{t}} - 
d_{\boldsymbol{l}}} 
\end{split}
\end{equation*}

\noindent
where $C_1$ is an appropriate constant. 
\end{proof}

\begin{Lemma} \label{Lemma 6}
Suppose ${\boldsymbol{l}} \supset \boldsymbol{t}$, and $d_{\boldsymbol{l}} < \tau_n$ then, under 
Assumptions \ref{Assumption2: d_t} - \ref{Assumption6: Decay rate} 
\begin{equation*}
PR({\boldsymbol{l}}, \boldsymbol{t}) \to 0, \quad \quad \quad \text{as} \quad n \to \infty . 
\end{equation*}
\end{Lemma}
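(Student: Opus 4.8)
The starting point is the closed form for $PR(\boldsymbol{l},\boldsymbol{t})$ in Lemma~\ref{Lemma 1 PR}. Since now $\boldsymbol{t}\subset\boldsymbol{l}$, write $\boldsymbol{l}=\boldsymbol{t}\sqcup\boldsymbol{r}$ with $\boldsymbol{r}=\boldsymbol{l}\setminus\boldsymbol{t}$ and $m:=d_{\boldsymbol{l}}-d_{\boldsymbol{t}}=|\boldsymbol{r}|\geq 1$, and apply the partitioned-quadratic-form (Schur complement) identity to factor the two exponential terms as $\exp\{\frac{n^2}{2}\boldsymbol{u}'(n\mathbf{\Phi}+\mathbf{\Lambda})_{\boldsymbol{l}\mid\boldsymbol{t}}^{-1}\boldsymbol{u}\}$, where $(n\mathbf{\Phi}+\mathbf{\Lambda})_{\boldsymbol{l}\mid\boldsymbol{t}}$ denotes the Schur complement of the $\boldsymbol{t}$-block of $(n\mathbf{\Phi}+\mathbf{\Lambda})_{\boldsymbol{l}\boldsymbol{l}}$ and $\boldsymbol{u}=\hat{\boldsymbol{a}}_{\boldsymbol{r}}-(n\mathbf{\Phi}+\mathbf{\Lambda})_{\boldsymbol{r}\boldsymbol{t}}(n\mathbf{\Phi}+\mathbf{\Lambda})_{\boldsymbol{t}\boldsymbol{t}}^{-1}\hat{\boldsymbol{a}}_{\boldsymbol{t}}$ is the ``conditional score'' for the extra coordinates. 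Simultaneously the determinant factor collapses to $|\mathbf{\Lambda}_{\boldsymbol{r}\boldsymbol{r}}|^{1/2}\,|(n\mathbf{\Phi}+\mathbf{\Lambda})_{\boldsymbol{l}\mid\boldsymbol{t}}|^{-1/2}$, and the prior factor is $(q/(1-q))^{m}$. Because both $\boldsymbol{l}$ and $\boldsymbol{t}$ obey $d<\tau_n$, Lemma~\ref{Lemma 3} governs every principal submatrix of $\mathbf{\Phi}$ that appears; in particular $\text{eig}_{\min}(n\mathbf{\Phi}+\mathbf{\Lambda})_{\boldsymbol{l}\mid\boldsymbol{t}}\geq \text{eig}_{\min}(n\mathbf{\Phi}+\mathbf{\Lambda})_{\boldsymbol{l}\boldsymbol{l}}\geq 3n\tilde{\varepsilon}_0/4$, whence $|(n\mathbf{\Phi}+\mathbf{\Lambda})_{\boldsymbol{l}\mid\boldsymbol{t}}|^{-1/2}\leq (4/(3n\tilde{\varepsilon}_0))^{m/2}$ and the exponent is at most $\frac{2n}{3\tilde{\varepsilon}_0}\|\boldsymbol{u}\|^2$.

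The crux is that $\boldsymbol{u}$ is small. I would decompose $\hat{\boldsymbol{a}}=-\mathbf{\Phi}\boldsymbol{\xi}^0+\boldsymbol{w}$ with $\|\boldsymbol{w}\|_{\max}\leq c_0\sqrt{\log p/n}$ by Lemma~\ref{Lemma 4}, and exploit that $\boldsymbol{\xi}^0$ is supported on $\boldsymbol{t}$: the $-\mathbf{\Phi}\boldsymbol{\xi}^0$ part of $\boldsymbol{u}$ telescopes to $-\mathbf{\Phi}_{\boldsymbol{r}\boldsymbol{t}}(n\mathbf{\Phi}_{\boldsymbol{t}\boldsymbol{t}}+\mathbf{\Lambda}_{\boldsymbol{t}\boldsymbol{t}})^{-1}\mathbf{\Lambda}_{\boldsymbol{t}\boldsymbol{t}}\boldsymbol{\xi}^0_{\boldsymbol{t}}$ — the same cancellation that drives the manipulations in the proof of Lemma~\ref{Lemma 5}, except we now want an upper bound — which has norm $O(\sqrt{d_{\boldsymbol{t}}}/n)$ by Lemma~\ref{Lemma 3} together with the boundedness of $\mathbf{\Lambda}$ and of $\|\boldsymbol{\xi}^0\|\leq \sqrt{d_{\boldsymbol{t}}}/\tilde{\varepsilon}_0$; the $\boldsymbol{w}$ part equals $\boldsymbol{w}_{\boldsymbol{r}}-n\mathbf{\Phi}_{\boldsymbol{r}\boldsymbol{t}}(n\mathbf{\Phi}_{\boldsymbol{t}\boldsymbol{t}}+\mathbf{\Lambda}_{\boldsymbol{t}\boldsymbol{t}})^{-1}\boldsymbol{w}_{\boldsymbol{t}}$ and is controlled, again via Lemma~\ref{Lemma 3}, by a constant times $\sqrt{d_{\boldsymbol{l}}\log p/n}$. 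Hence for large $n$ one obtains $\|\boldsymbol{u}\|^2\leq C' d_{\boldsymbol{l}}(\log p)/n$, so the exponent is at most $C'' d_{\boldsymbol{l}}\log p$, with $C''$ depending only on $c_0,\tilde{\varepsilon}_0$ and the fixed entries of $\mathbf{\Lambda}$.

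Assembling the pieces (using $q\leq 1/2$, hence $1/(1-q)\leq 2$) yields
\begin{equation*}
PR(\boldsymbol{l},\boldsymbol{t})\;\leq\;\Big(\frac{C_3\,q}{\sqrt{n}}\Big)^{m}p^{\,C''(d_{\boldsymbol{t}}+m)}\;=\;\Big(\frac{C_3\,p^{-a_2 d_{\boldsymbol{t}}}}{\sqrt{n}}\Big)^{m}p^{\,C''(d_{\boldsymbol{t}}+m)},
\end{equation*}
so $\log PR(\boldsymbol{l},\boldsymbol{t})\leq [C''(d_{\boldsymbol{t}}+m)-a_2 d_{\boldsymbol{t}}m]\log p+m(\log C_3-\tfrac12\log n)$. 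Since $d_{\boldsymbol{t}},m\geq 1$ we have $d_{\boldsymbol{t}}m\geq\tfrac12(d_{\boldsymbol{t}}+m)$, so the bracket is $\leq-(\tfrac{a_2}{2}-C'')(d_{\boldsymbol{t}}+m)$, a fixed negative quantity once $a_2>2C''$ — which is precisely why $a_2$ is taken in Assumption~\ref{Assumption6: Decay rate} to be a large explicit multiple of $\max(1,c_0)^2/\min(1,\tilde{\varepsilon}_0)$. For $n$ large the remaining term $m(\log C_3-\tfrac12\log n)$ is nonpositive, so $\log PR(\boldsymbol{l},\boldsymbol{t})\to-\infty$, i.e.\ $PR(\boldsymbol{l},\boldsymbol{t})\to 0$.

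The step I expect to be the main obstacle is the second paragraph: carrying out the block algebra so that the leading $\mathbf{\Phi}\boldsymbol{\xi}^0$-terms in $\boldsymbol{u}$ cancel exactly, and then tracking the $c_0$- and $\tilde{\varepsilon}_0$-dependence through Lemmas~\ref{Lemma 2}--\ref{Lemma 4} closely enough to certify that the exponent constant $C''$ really is below $a_2/2$. The remaining ingredients — the determinant factor, the $\mathbf{\Lambda}$-ratio, and the elementary inequality $d_{\boldsymbol{t}}m\geq\tfrac12(d_{\boldsymbol{t}}+m)$ — are routine.
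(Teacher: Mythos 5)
Your overall architecture is the same as the paper's: start from the closed form in Lemma~\ref{Lemma 1 PR}, collapse the determinant ratio to the Schur complement $(n\mathbf{\Phi}+\mathbf{\Lambda})_{\boldsymbol{l}\mid\boldsymbol{t}}$, use Lemma~\ref{Lemma 3} for the eigenvalue bounds and Lemma~\ref{Lemma 4} to show the relevant ``score'' is $O(\sqrt{\log p/n})$ entrywise, and then let the prior factor $q^{m}=p^{-a_2 d_{\boldsymbol{t}}m}$ absorb a $p^{O(d_{\boldsymbol{l}})}$ exponent. The partitioned-quadratic-form identity you invoke is exactly (\ref{partition:quadratic:form}), which the paper itself uses in Lemma~\ref{Lemma 7}. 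Your cancellation $\hat{\boldsymbol{a}}=-\mathbf{\Phi}\boldsymbol{\xi}^0+\boldsymbol{w}$ with the $\mathbf{\Phi}\boldsymbol{\xi}^0$ part telescoping to $-\mathbf{\Phi}_{\boldsymbol{r}\boldsymbol{t}}(n\mathbf{\Phi}_{\boldsymbol{t}\boldsymbol{t}}+\mathbf{\Lambda}_{\boldsymbol{t}\boldsymbol{t}})^{-1}\mathbf{\Lambda}_{\boldsymbol{t}\boldsymbol{t}}\boldsymbol{\xi}^0_{\boldsymbol{t}}$ is algebraically correct.

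However, the obstacle you flagged in your second paragraph is a genuine gap, and it does not close as written. Bounding the exponent by $\frac{2n}{3\tilde{\varepsilon}_0}\|\boldsymbol{u}\|^2$ and then bounding $\|\boldsymbol{u}\|$ via the operator norm of $n\mathbf{\Phi}_{\boldsymbol{r}\boldsymbol{t}}(n\mathbf{\Phi}_{\boldsymbol{t}\boldsymbol{t}}+\mathbf{\Lambda}_{\boldsymbol{t}\boldsymbol{t}})^{-1}$ costs you a condition-number factor: Lemma~\ref{Lemma 3} only gives $\|\mathbf{\Phi}_{\boldsymbol{r}\boldsymbol{t}}\|\le 5/(2\tilde{\varepsilon}_0)$ and $\lambda_{\min}(\mathbf{\Phi}_{\boldsymbol{t}\boldsymbol{t}})\ge 3\tilde{\varepsilon}_0/4$, so $\|n\mathbf{\Phi}_{\boldsymbol{r}\boldsymbol{t}}(n\mathbf{\Phi}_{\boldsymbol{t}\boldsymbol{t}}+\mathbf{\Lambda}_{\boldsymbol{t}\boldsymbol{t}})^{-1}\boldsymbol{w}_{\boldsymbol{t}}\|\lesssim \tilde{\varepsilon}_0^{-2}\sqrt{d_{\boldsymbol{t}}}\,c_0\sqrt{\log p/n}$, which makes your $C''$ of order $c_0^2/\tilde{\varepsilon}_0^{5}$. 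Assumption~\ref{Assumption6: Decay rate} only guarantees $a_2=16\max(1,c_0)^2/\min(1,\tilde{\varepsilon}_0)$, i.e.\ a single power of $1/\tilde{\varepsilon}_0$, so the inequality $a_2>2C''$ you need in the last step fails whenever $\tilde{\varepsilon}_0$ is small. The paper avoids this by not bounding $\boldsymbol{u}$ at all: it writes the same difference of quadratic forms as $\frac{1}{n^2}\bigl[\boldsymbol{v}_{\boldsymbol{l}}'(n\mathbf{\Phi}+\mathbf{\Lambda})_{\boldsymbol{l}\boldsymbol{l}}^{-1}\boldsymbol{v}_{\boldsymbol{l}}-\boldsymbol{v}_{\boldsymbol{t}}'(n\mathbf{\Phi}+\mathbf{\Lambda})_{\boldsymbol{t}\boldsymbol{t}}^{-1}\boldsymbol{v}_{\boldsymbol{t}}\bigr]$ with $\boldsymbol{v}=(n\mathbf{\Phi}+\mathbf{\Lambda})\boldsymbol{\xi}^0+n\hat{\boldsymbol{a}}$ restricted to the respective index sets (the cross terms $2n(\boldsymbol{\xi}^0)'\hat{\boldsymbol{a}}+(\boldsymbol{\xi}^0)'(n\mathbf{\Phi}+\mathbf{\Lambda})\boldsymbol{\xi}^0$ cancel because $\boldsymbol{\xi}^0$ is supported on $\boldsymbol{t}$), drops the nonnegative $\boldsymbol{t}$-term, and bounds $\boldsymbol{v}_{\boldsymbol{l}}$ entrywise by $\tfrac{3}{2}nc_0\sqrt{\log p/n}$ via Lemma~\ref{Lemma 4}. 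That yields an exponent of $\frac{2c_0^2}{\tilde{\varepsilon}_0}d_{\boldsymbol{l}}\log p\le \frac{a_2}{4}d_{\boldsymbol{l}}\log p$, which the given $a_2$ does dominate. Replacing your direct bound on $\|\boldsymbol{u}\|$ with this completion-of-squares at the level of the full $\boldsymbol{l}\boldsymbol{l}$ block repairs the argument; the rest of your assembly (the $\mathbf{\Lambda}$-ratio, the $(2q/\sqrt{n})^{m}$ factor, and the inequality $d_{\boldsymbol{t}}m\ge\tfrac12(d_{\boldsymbol{t}}+m)$) matches the paper and is fine.
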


\begin{proof}
In this case,
\begin{equation*}
PR(\boldsymbol{l}, \boldsymbol{t}) = \frac{\|\mathbf{\Lambda}_{{\boldsymbol{l}} {\boldsymbol{l}}} \|^{\frac{1}{2}}}{\| \mathbf{\Lambda}_{\boldsymbol{t}\boldsymbol{t}}\|^{\frac{1}{2}}} \frac{\left(2q\right)^{d_{\boldsymbol{l}} - d_{\boldsymbol{t}}}}{\| \left( n\mathbf{\Phi} + \mathbf{\Lambda}\right)_{{\boldsymbol{l}} | \boldsymbol{t}}\|^{\frac{1}{2}}} \exp \left\{ \frac{n^2}{2}\hat{\boldsymbol{a}}_{\boldsymbol{l}}' \left( n\mathbf{\Phi} + \mathbf{\Lambda}\right)_{{\boldsymbol{l}}{\boldsymbol{l}}}^{-1}\hat{\boldsymbol{a}}_{\boldsymbol{l}} - \frac{n^2}{2} \hat{\boldsymbol{a}}_{\boldsymbol{t}}' \left( n\mathbf{\Phi} + \mathbf{\Lambda}\right)_{{\boldsymbol{t}}{\boldsymbol{t}}}^{-1}\hat{\boldsymbol{a}}_{\boldsymbol{t}} \right\}.
\end{equation*}

\noindent
Using the fact that $\boldsymbol{\xi}_{{\boldsymbol{l} \setminus \boldsymbol{t}}}^0 = {\bf 0}$, we get 
\begin{eqnarray*}
& & \left[ \left( n\mathbf{\Phi + \mathbf{\Lambda}}\right)_{{\boldsymbol{l}} {\boldsymbol{l}}} \boldsymbol{\xi}_{{\boldsymbol{l}}}^0 + n{\hat{\boldsymbol{a}}}_{\boldsymbol{l}} \right]' \left(n\mathbf{\Phi} + \mathbf{\Lambda} \right)_{{\boldsymbol{l}} {\boldsymbol{l}}}^{-1} \left[ \left( n\mathbf{\Phi + \mathbf{\Lambda}}\right)_{{\boldsymbol{l}} {\boldsymbol{l}}} \boldsymbol{\xi}_{{\boldsymbol{l}}}^0 + n{\hat{\boldsymbol{a}}}_{\boldsymbol{l}} \right] - n^2 \hat{\boldsymbol{a}}_{\boldsymbol{l}}' \left( n\mathbf{\Phi} + \mathbf{\Lambda}\right)_{{\boldsymbol{l}}{\boldsymbol{l}}}^{-1}\hat{\boldsymbol{a}}_{\boldsymbol{l}}\\
&=& 2 n \left( \boldsymbol{\xi}_{{\boldsymbol{l}}}^0 \right)' \hat{\boldsymbol{a}}_{\boldsymbol{l}} + 
\left( \boldsymbol{\xi}_{{\boldsymbol{l}}}^0 \right)' \left(n\mathbf{\Phi} + 
\mathbf{\Lambda} \right)_{{\boldsymbol{l}} {\boldsymbol{l}}} \left( \boldsymbol{\xi}_{{\boldsymbol{l}}}^0 \right)\\
&=& 2 n \left( \boldsymbol{\xi}_{{\boldsymbol{t}}}^0 \right)' \hat{\boldsymbol{a}}_{\boldsymbol{t}} + 
\left( \boldsymbol{\xi}_{{\boldsymbol{t}}}^0 \right)' \left(n\mathbf{\Phi} + 
\mathbf{\Lambda} \right)_{{\boldsymbol{t}} {\boldsymbol{t}}} \left( \boldsymbol{\xi}_{{\boldsymbol{t}}}^0 \right)\\
&=& \left[ \left( n\mathbf{\Phi + \mathbf{\Lambda}}\right)_{{\boldsymbol{t}} {\boldsymbol{t}}} \boldsymbol{\xi}_{{\boldsymbol{t}}}^0 + n{\hat{\boldsymbol{a}}}_{\boldsymbol{t}} \right]' \left(n\mathbf{\Phi} + \mathbf{\Lambda} \right)_{{\boldsymbol{t}} {\boldsymbol{t}}}^{-1} \left[ \left( n\mathbf{\Phi + \mathbf{\Lambda}}\right)_{{\boldsymbol{t}} {\boldsymbol{t}}} \boldsymbol{\xi}_{{\boldsymbol{t}}}^0 + n{\hat{\boldsymbol{a}}}_{\boldsymbol{t}} \right] - n^2 \hat{\boldsymbol{a}}_{\boldsymbol{t}}' \left( n\mathbf{\Phi} + \mathbf{\Lambda}\right)_{{\boldsymbol{t}}{\boldsymbol{t}}}^{-1}\hat{\boldsymbol{a}}_{\boldsymbol{t}}\\
&\geq& - n^2 \hat{\boldsymbol{a}}_{\boldsymbol{t}}' \left( n\mathbf{\Phi} + \mathbf{\Lambda}\right)_{{\boldsymbol{t}}{\boldsymbol{t}}}^{-1}\hat{\boldsymbol{a}}_{\boldsymbol{t}}. 
\end{eqnarray*}

\noindent
It follows that 
\begin{equation*}
PR(\boldsymbol{l}, \boldsymbol{t}) \leq \frac{\|\mathbf{\Lambda}_{{\boldsymbol{l}} {\boldsymbol{l}}} \|^{\frac{1}{2}}}{\| \mathbf{\Lambda}_{\boldsymbol{t}\boldsymbol{t}}\|^{\frac{1}{2}}} \frac{\left(2q\right)^{d_{\boldsymbol{l}} - d_{\boldsymbol{t}}}}{\| \left( n\mathbf{\Phi} + \mathbf{\Lambda}\right)_{{\boldsymbol{l}} | \boldsymbol{t}}\|^{\frac{1}{2}}}\\
\exp \left\{ \frac{1}{2} \left[ \left( n\mathbf{\Phi + \mathbf{\Lambda}}\right)_{{\boldsymbol{l}} {\boldsymbol{l}}} \boldsymbol{\xi}_{{\boldsymbol{l}}}^0 + n{\hat{\boldsymbol{a}}}_{\boldsymbol{l}} \right]' \left(n\mathbf{\Phi} + \mathbf{\Lambda} \right)_{{\boldsymbol{l}} {\boldsymbol{l}}}^{-1} \left[ \left( n\mathbf{\Phi + \mathbf{\Lambda}}\right)_{{\boldsymbol{l}} {\boldsymbol{l}}} \boldsymbol{\xi}_{{\boldsymbol{l}}}^0 + n{\hat{\boldsymbol{a}}}_{\boldsymbol{l}} \right] \right\}. 
\end{equation*}

\noindent
Now, we note by Lemma \ref{Lemma 4} that each entry of 
\begin{equation*}
\left( n\mathbf{\Phi + \mathbf{\Lambda}}\right)_{{\boldsymbol{l}} {\boldsymbol{l}}} \boldsymbol{\xi}_{{\boldsymbol{l}}}^0 + n{\hat{\boldsymbol{a}}}_{\boldsymbol{l}} = n\left( \mathbf{\Phi} \boldsymbol{\xi}^0 + {{\boldsymbol{a}}}\right)_{\boldsymbol{l}} + \mathbf{\Lambda}_{{\boldsymbol{l}} {\boldsymbol{l}}} \boldsymbol{\xi}_{{\boldsymbol{l}}}^0, + n \left( {\hat{\boldsymbol{a}}} - {{\boldsymbol{a}}}\right)_{\boldsymbol{l}}
\end{equation*}

\noindent
is smaller in absolute value than 
\begin{equation*}
nc_0 \sqrt{\frac{\log p}{n}} + \frac{\| \mathbf{\Lambda} \|_{\max}}{\tilde{\varepsilon}_0} + \frac{3C}{\tilde{\varepsilon}_0} \sqrt{\frac{\log p}{n}}\leq \frac{3nc_0}{2} \sqrt{\frac{\log p}{n}},
\end{equation*}
hence, by Lemma \ref{Lemma 3},
\begin{equation*}
\begin{split}
&\left[ \left( n\mathbf{\Phi + \mathbf{\Lambda}}\right)_{{\boldsymbol{l}} {\boldsymbol{l}}} \boldsymbol{\xi}_{{\boldsymbol{l}}}^0 + n{\hat{\boldsymbol{a}}}_{\boldsymbol{l}} \right]' \left(n\mathbf{\Phi} + \mathbf{\Lambda} \right)_{{\boldsymbol{l}} {\boldsymbol{l}}}^{-1} \left[ \left( n\mathbf{\Phi + \mathbf{\Lambda}}\right)_{{\boldsymbol{l}} {\boldsymbol{l}}} \boldsymbol{\xi}_{{\boldsymbol{l}}}^0 + n{\hat{\boldsymbol{a}}}_{\boldsymbol{l}} \right] \\
& \leq \frac{1}{n\tilde{\varepsilon}_0} d_{\boldsymbol{l}} \frac{4n^2 c_0^2 \log p}{n} = \frac{4c_0^2 d_{\boldsymbol{l}} \log p}{\tilde{\varepsilon}_0}.
\end{split}
\end{equation*}
Hence,
\begin{eqnarray}
PR(\boldsymbol{l}, \boldsymbol{t}) 
&\leq& \frac{\|\mathbf{\Lambda}_{{\boldsymbol{l}} {\boldsymbol{l}}} \|^{\frac{1}{2}}}{\| \mathbf{\Lambda}_{\boldsymbol{t}\boldsymbol{t}}\|^{\frac{1}{2}}} \frac{\left(2q\right)^{d_{\boldsymbol{l}} - d_t}}{\left(\frac{n\tilde{\varepsilon}_0}{2}\right)^{\frac{d_{\boldsymbol{l}} - d_{\boldsymbol{t}}}{2}}} \exp \left\{\frac{2 c_0^2}{\tilde{\varepsilon}_0}d_{\boldsymbol{l}} \log p \right\} 
\nonumber\\
&=& \frac{2^{d_{\boldsymbol{l}} - d_{\boldsymbol{t}}}\|\mathbf{\Lambda}_{{\boldsymbol{l}} {\boldsymbol{l}}} \|^{\frac{1}{2}}}{\| \mathbf{\Lambda}_{\boldsymbol{t}\boldsymbol{t}}\|^{\frac{1}{2}}} \frac{q^{d_{\boldsymbol{l}} - d_t} p^{\frac{2 c_0^2 d_{\boldsymbol{l}}}{\tilde{\varepsilon}_0}}}{\left(\frac{n\tilde{\varepsilon}_0}{2}\right)^{\frac{d_{\boldsymbol{l}} - d_{\boldsymbol{t}}}{2}}} \nonumber\\
&\leq& \frac{2^{d_{\boldsymbol{l}} - d_{\boldsymbol{t}}}\|\mathbf{\Lambda}_{{\boldsymbol{l}} {\boldsymbol{l}}} \|^{\frac{1}{2}}}{\| \mathbf{\Lambda}_{\boldsymbol{t}\boldsymbol{t}}\|^{\frac{1}{2}} \left(\frac{n\tilde{\varepsilon}_0}{2}\right)^{\frac{d_{\boldsymbol{l}} - d_{\boldsymbol{t}}}{2}}} p^{-a_2 d_{\boldsymbol{t}} (d_{\boldsymbol{l}} - 
d_{\boldsymbol{t}}) + a_2 d_{\boldsymbol{l}}/4} \nonumber\\
&=& \frac{2^{d_{\boldsymbol{l}} - d_{\boldsymbol{t}}}\|\mathbf{\Lambda}_{{\boldsymbol{l}} {\boldsymbol{l}}} \|^{\frac{1}{2}}}{\| \mathbf{\Lambda}_{\boldsymbol{t}\boldsymbol{t}}\|^{\frac{1}{2}} \left(\frac{n\tilde{\varepsilon}_0}{2}\right)^{\frac{d_{\boldsymbol{l}} - d_{\boldsymbol{t}}}{2}}} \left( p^{-a_2 d_{\boldsymbol{t}} + a_2 \frac{d_{\boldsymbol{l}}}
{4 (d_{\boldsymbol{l}} - d_{\boldsymbol{t}})}} \right)^{d_{\boldsymbol{l}} - d_{\boldsymbol{t}}} \nonumber\\
&=& \frac{2^{d_{\boldsymbol{l}} - d_{\boldsymbol{t}}}\|\mathbf{\Lambda}_{{\boldsymbol{l}} {\boldsymbol{l}}} \|^{\frac{1}{2}}}{\| \mathbf{\Lambda}_{\boldsymbol{t}\boldsymbol{t}}\|^{\frac{1}{2}} \left(\frac{n\tilde{\varepsilon}_0}{2}\right)^{\frac{d_{\boldsymbol{l}} - d_{\boldsymbol{t}}}{2}}} \left( p^{-a_2 d_{\boldsymbol{t}} + a_2 \frac{d_{\boldsymbol{t}}}
{2}} \right)^{d_{\boldsymbol{l}} - d_{\boldsymbol{t}}} \nonumber\\
&\leq& \left( \frac{4 C_1 p^{\frac{-a_2 d_{\boldsymbol{t}}}{2}}}{\sqrt{n\tilde{\varepsilon}_0}} \right)^{d_{\boldsymbol{l}} - d_{\boldsymbol{t}}}
\end{eqnarray}

\noindent
for an appropriate constant $C_2$. 
\end{proof}

\begin{Lemma}\label{Lemma 7}
Let $\boldsymbol{l} \in \mathcal{L}$ such that $\boldsymbol{l} \not\subset \boldsymbol{t}$, $\boldsymbol{t} \not\subset 
\boldsymbol{l}$, $\boldsymbol{l} \neq \boldsymbol{t}$ and $d_{\boldsymbol{l}} \leq \tau_n$, then under 
Assumptions \ref{Assumption2: d_t} - \ref{Assumption6: Decay rate} for sufficiently large n, 
\begin{equation*}
PR({\boldsymbol{l}}, \boldsymbol{t}) \to 0, \quad \quad \quad \text{as} \quad n \to \infty . 
\end{equation*}
\end{Lemma}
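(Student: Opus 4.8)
The plan is to reduce the non-inclusive case to the two nested cases already settled in Lemmas~\ref{Lemma 5} and~\ref{Lemma 6} by routing the comparison between $\boldsymbol{l}$ and $\boldsymbol{t}$ through their intersection. Set $\boldsymbol{m}=\boldsymbol{l}\cap\boldsymbol{t}$. Since $\boldsymbol{l}\not\subset\boldsymbol{t}$ and $\boldsymbol{t}\not\subset\boldsymbol{l}$, both inclusions $\boldsymbol{m}\subsetneq\boldsymbol{l}$ and $\boldsymbol{m}\subsetneq\boldsymbol{t}$ are strict, and $d_{\boldsymbol{m}}\le\min(d_{\boldsymbol{l}},d_{\boldsymbol{t}})<\tau_n$, so $\boldsymbol{m}$, $\boldsymbol{l}$ and $\boldsymbol{t}$ are all realistic sparsity patterns and Lemmas~\ref{Lemma 3} and~\ref{Lemma 4} are available for each. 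Using that $PR$ is multiplicative,
\[
PR(\boldsymbol{l},\boldsymbol{t})\;=\;PR(\boldsymbol{l},\boldsymbol{m})\,\cdot\,PR(\boldsymbol{m},\boldsymbol{t}),
\]
so it suffices to send each factor to $0$. (Routing through the union $\boldsymbol{l}\cup\boldsymbol{t}$ would work as well, but would require eigenvalue control at densities up to about $2\tau_n$; the intersection avoids this, as all three patterns stay below $\tau_n$.)

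The factor $PR(\boldsymbol{m},\boldsymbol{t})$ is an underfitted comparison, $\boldsymbol{m}\subsetneq\boldsymbol{t}$, so Lemma~\ref{Lemma 5} applies directly and gives $PR(\boldsymbol{m},\boldsymbol{t})\le\bigl(2C_1\,p^{-a_2 d_{\boldsymbol{t}}}/\sqrt{n}\bigr)^{d_{\boldsymbol{t}}-d_{\boldsymbol{m}}}\to 0$ since $d_{\boldsymbol{t}}-d_{\boldsymbol{m}}\ge 1$. For the factor $PR(\boldsymbol{l},\boldsymbol{m})$ the comparison is $\boldsymbol{m}\subsetneq\boldsymbol{l}$ with $\boldsymbol{l}\setminus\boldsymbol{m}=\boldsymbol{l}\setminus\boldsymbol{t}$ consisting only of structural zeros of $\mathbf{\Omega}^0$; equivalently, $\boldsymbol{\xi}^0$ restricted to $\boldsymbol{l}$ lies in $\mathcal{M}_{\boldsymbol{m}}$. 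This is exactly the feature that the proof of Lemma~\ref{Lemma 6} uses (there it appears as $\boldsymbol{\xi}^0_{\boldsymbol{l}\setminus\boldsymbol{t}}=\mathbf{0}$), so I would rerun that argument with the base model $\boldsymbol{t}$ replaced by $\boldsymbol{m}$: the same quadratic-form identity bounds $\tfrac{n^2}{2}\hat{\boldsymbol{a}}_{\boldsymbol{l}}'(n\mathbf{\Phi}+\mathbf{\Lambda})_{\boldsymbol{l}\boldsymbol{l}}^{-1}\hat{\boldsymbol{a}}_{\boldsymbol{l}}-\tfrac{n^2}{2}\hat{\boldsymbol{a}}_{\boldsymbol{m}}'(n\mathbf{\Phi}+\mathbf{\Lambda})_{\boldsymbol{m}\boldsymbol{m}}^{-1}\hat{\boldsymbol{a}}_{\boldsymbol{m}}$ from above by $4c_0^2 d_{\boldsymbol{l}}\log p/\tilde{\varepsilon}_0$ through Lemmas~\ref{Lemma 3} and~\ref{Lemma 4}, and then, combining this with the prior ratio $(2q)^{d_{\boldsymbol{l}}-d_{\boldsymbol{m}}}$, the Schur-complement determinant factor bounded by $(n\tilde{\varepsilon}_0/2)^{-(d_{\boldsymbol{l}}-d_{\boldsymbol{m}})/2}$, the bounded factor $|\mathbf{\Lambda}_{\boldsymbol{l}\boldsymbol{l}}|^{1/2}/|\mathbf{\Lambda}_{\boldsymbol{m}\boldsymbol{m}}|^{1/2}$, and the choice $q=p^{-a_2 d_{\boldsymbol{t}}}$ of Assumption~\ref{Assumption6: Decay rate}, one obtains $PR(\boldsymbol{l},\boldsymbol{m})\le\bigl(C\,p^{-a_2 d_{\boldsymbol{t}}/2}/\sqrt{n}\bigr)^{d_{\boldsymbol{l}}-d_{\boldsymbol{m}}}\to 0$.

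The only genuinely new computation, and the only place where non-inclusiveness actually enters, is the exponent accounting in that last step: substituting $q=p^{-a_2 d_{\boldsymbol{t}}}$ produces the power $p^{-a_2 d_{\boldsymbol{t}}(d_{\boldsymbol{l}}-d_{\boldsymbol{m}})+(2c_0^2/\tilde{\varepsilon}_0)d_{\boldsymbol{l}}}$, and I must verify it is at most $p^{-(a_2 d_{\boldsymbol{t}}/2)(d_{\boldsymbol{l}}-d_{\boldsymbol{m}})}$. Because $\boldsymbol{m}\subset\boldsymbol{t}$ gives $d_{\boldsymbol{m}}\le d_{\boldsymbol{t}}$, hence $d_{\boldsymbol{l}}\le d_{\boldsymbol{t}}+(d_{\boldsymbol{l}}-d_{\boldsymbol{m}})$, and because $2c_0^2/\tilde{\varepsilon}_0\le a_2/4$ by the definition of $a_2$, this reduces to the elementary inequality $d_{\boldsymbol{t}}+r\le 2d_{\boldsymbol{t}}r$ for integers $r=d_{\boldsymbol{l}}-d_{\boldsymbol{m}}\ge 1$ and $d_{\boldsymbol{t}}\ge 1$, which holds. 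With both factors shown to vanish, $PR(\boldsymbol{l},\boldsymbol{t})\to 0$. I expect the remainder to be a routine transcription of the proofs of Lemmas~\ref{Lemma 5} and~\ref{Lemma 6}; the one point needing care is confirming that the patterns $\boldsymbol{m}$ and $\boldsymbol{t}$, rather than the true model, are legitimate reference points in those arguments---which they are, precisely because both are realistic and because $\boldsymbol{\xi}^0$ vanishes on $\boldsymbol{l}\setminus\boldsymbol{m}$.
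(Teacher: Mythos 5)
Your factorization $PR(\boldsymbol{l},\boldsymbol{t})=PR(\boldsymbol{l},\boldsymbol{m})\,PR(\boldsymbol{m},\boldsymbol{t})$ with $\boldsymbol{m}=\boldsymbol{l}\cap\boldsymbol{t}$ is where the argument breaks, and it is not the route the paper takes. The second factor is indeed handled by Lemma \ref{Lemma 5}. The problem is the first factor. You propose to rerun the proof of Lemma \ref{Lemma 6} with $\boldsymbol{t}$ replaced by $\boldsymbol{m}$, but that proof uses the containment $\boldsymbol{t}\subset\boldsymbol{l}$ in an essential way at exactly the step you need: the bound of each entry of $\left( n\mathbf{\Phi}+\mathbf{\Lambda}\right)_{\boldsymbol{l}\boldsymbol{l}}\boldsymbol{\xi}^0_{\boldsymbol{l}}+n\hat{\boldsymbol{a}}_{\boldsymbol{l}}$ by $\tfrac{3}{2}nc_0\sqrt{\log p/n}$ rests on the identity $\mathbf{\Phi}_{\boldsymbol{l}\boldsymbol{l}}\boldsymbol{\xi}^0_{\boldsymbol{l}}=\left(\mathbf{\Phi}\boldsymbol{\xi}^0\right)_{\boldsymbol{l}}$, which holds only when the support of $\boldsymbol{\xi}^0$ is contained in $\boldsymbol{l}$, so that Lemma \ref{Lemma 4} can be invoked. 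In the non-inclusive case $\boldsymbol{t}\setminus\boldsymbol{l}\neq\emptyset$, there is an additional term $n\mathbf{\Phi}_{\boldsymbol{l},\boldsymbol{t}\setminus\boldsymbol{l}}\boldsymbol{\xi}^0_{\boldsymbol{t}\setminus\boldsymbol{l}}$ whose entries are of order $n$ times the omitted signal rather than $\sqrt{n\log p}$; Lemma \ref{Lemma 4} does not control it, and the resulting quadratic form is of order $nd_{\boldsymbol{l}}$ rather than $d_{\boldsymbol{l}}\log p$, which destroys the bound you claim.

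This is not merely a reparable technical slip: the assertion $PR(\boldsymbol{l},\boldsymbol{m})\to 0$ is false in general. Take $\boldsymbol{t}=\{(1,2)\}$ and $\boldsymbol{l}=\{(1,3)\}$, so $\boldsymbol{m}=\emptyset$, with variable $3$ strongly correlated with variable $2$. Then $\hat{\boldsymbol{a}}$ restricted to $\boldsymbol{l}$ picks up an order-$s_n$ fraction of the omitted signal, the exponent in $\pi\{\boldsymbol{l}\mid\hat{\boldsymbol{\delta}},\mathcal{Y}\}$ gains a term of order $ns_n^2$, and Assumption \ref{Assumption5: Signal Strength} only guarantees $ns_n^2\gg\tfrac{1}{2}\log n+d_{\boldsymbol{t}}\log p$, which is precisely the size of the prior-plus-Occam penalty for one extra coordinate; hence $PR(\boldsymbol{l},\boldsymbol{m})$ can diverge. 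The lemma remains true only because $PR(\boldsymbol{m},\boldsymbol{t})$ decays faster than $PR(\boldsymbol{l},\boldsymbol{m})$ can grow, a cancellation that a factor-by-factor argument cannot see. The paper instead splits on $d_{\boldsymbol{l}}>d_{\boldsymbol{t}}$ versus $d_{\boldsymbol{l}}\leq d_{\boldsymbol{t}}$: in the first case it passes to the \emph{union} $\tilde{\boldsymbol{l}}=\boldsymbol{l}\cup\boldsymbol{t}\supset\boldsymbol{t}$ and uses monotonicity of the quadratic form in the index set (display (\ref{partition:quadratic:form})) so that the Lemma \ref{Lemma 6} machinery applies legitimately --- the density $d_{\tilde{\boldsymbol{l}}}\leq\tau_n+d_{\boldsymbol{t}}$ that you sought to avoid is harmless since $d_{\boldsymbol{t}}\sqrt{\log p/n}\to 0$; in the second case it bounds the difference of the two quadratic forms jointly by $-\tfrac{3}{4}d_{\boldsymbol{t}\cap\boldsymbol{l}^c}\tilde{\varepsilon}_0 s_n^2$ via a Schur-complement computation, which is exactly the joint comparison your decomposition discards.
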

\begin{proof}
By using the formula for the inverse of a partitioned matrix, it can be shown that 
\begin{equation} \label{partition:quadratic:form}
\left( \begin{matrix} {\bf x}_1' & {\bf x}_2' \end{matrix} \right) \left( \begin{matrix} {\bf A}_{11} & {\bf A}_{12} \cr 
{\bf A}_{12}' & {\bf A}_{22} \end{matrix} \right)^{-1} \left( \begin{matrix} {\bf x}_1 \cr {\bf x}_2 \end{matrix} \right) - {\bf x}_1' 
{\bf A}_{11}^{-1} {\bf x}_1 = \left( {\bf x}_2 - {\bf A}_{12}' {\bf A}_{11}^{-1} {\bf x}_1 \right)' {\bf F}^{-1} \left( {\bf x}_2 - 
{\bf A}_{12}' {\bf A}_{11}^{-1} {\bf x}_1 \right) \geq 0, 
\end{equation}

\noindent
where ${\bf F} = {\bf A}_{22} - {\bf A}_{12}' {\bf A}_{11}^{-1} {\bf A}_{12}$. 

Suppose $\boldsymbol{l}$ is such that $d_{\boldsymbol{l}} > d_{\boldsymbol{t}}$. Let $\tilde{\boldsymbol{l}}$ denote the 
union of $\boldsymbol{l}$ and $\boldsymbol{t}$. Then $\tilde{\boldsymbol{l}} \supset \boldsymbol{t}$ and 
$$
d_{\tilde{\boldsymbol{l}}} \leq d_{\boldsymbol{l}} + d_{\boldsymbol{t}} \leq \tau_n + d_{\boldsymbol{t}}. 
$$

\noindent
As in the proof of Lemma \ref{Lemma 6}, using the fact ${\boldsymbol \xi}_{\tilde{\boldsymbol{l}} \setminus \boldsymbol{t}} = 
{\bf 0}$ and (\ref{partition:quadratic:form}), we get that 
\begin{eqnarray*}
& & n^2 \hat{\boldsymbol{a}}_{\boldsymbol{l}}' \left( n\mathbf{\Phi} + \mathbf{\Lambda}\right)_{\boldsymbol{l} \boldsymbol{l}}^{-1}\hat{\boldsymbol{a}}_{\boldsymbol{l}} - n^2 \hat{\boldsymbol{a}}_{\boldsymbol{t}}' \left( n\mathbf{\Phi} + \mathbf{\Lambda}\right)_{{\boldsymbol{t}}{\boldsymbol{t}}}^{-1}\hat{\boldsymbol{a}}_{\boldsymbol{t}}\\
&\leq& n^2 \hat{\boldsymbol{a}}_{\tilde{\boldsymbol{l}}}' \left( n\mathbf{\Phi} + \mathbf{\Lambda}\right)_{\tilde{\boldsymbol{l}}\tilde{\boldsymbol{l}}}^{-1}\hat{\boldsymbol{a}}_{\tilde{\boldsymbol{l}}} - n^2 \hat{\boldsymbol{a}}_{\boldsymbol{t}}' \left( n\mathbf{\Phi} + \mathbf{\Lambda}\right)_{{\boldsymbol{t}}{\boldsymbol{t}}}^{-1}\hat{\boldsymbol{a}}_{\boldsymbol{t}}\\
&\leq& \left[ \left( n\mathbf{\Phi + \mathbf{\Lambda}}\right)_{\tilde{\boldsymbol{l}} \tilde{\boldsymbol{l}}} \boldsymbol{\xi}_{\tilde{\boldsymbol{l}}}^0 + n{\hat{\boldsymbol{a}}}_{\tilde{\boldsymbol{l}}} \right]' \left(n\mathbf{\Phi} + \mathbf{\Lambda} \right)_{\tilde{\boldsymbol{l}} \tilde{\boldsymbol{l}}}^{-1} \left[ \left( n\mathbf{\Phi + \mathbf{\Lambda}}\right)_{\tilde{\boldsymbol{l}} \tilde{\boldsymbol{l}}} \boldsymbol{\xi}_{\tilde{\boldsymbol{l}}}^0 + n{\hat{\boldsymbol{a}}}_{\tilde{\boldsymbol{l}}} \right] - \\
& & \left[ \left( n\mathbf{\Phi + \mathbf{\Lambda}}\right)_{{\boldsymbol{t}} {\boldsymbol{t}}} \boldsymbol{\xi}_{{\boldsymbol{t}}}^0 + n{\hat{\boldsymbol{a}}}_{\boldsymbol{t}} \right]' \left(n\mathbf{\Phi} + \mathbf{\Lambda} \right)_{{\boldsymbol{t}} {\boldsymbol{t}}}^{-1} \left[ \left( n\mathbf{\Phi + \mathbf{\Lambda}}\right)_{{\boldsymbol{t}} {\boldsymbol{t}}} \boldsymbol{\xi}_{{\boldsymbol{t}}}^0 + n{\hat{\boldsymbol{a}}}_{\boldsymbol{t}} \right]\\
&\leq& \frac{4 c_0^2 d_{\tilde{\boldsymbol{l}}} \log p}{\tilde{\varepsilon_0}} \leq \frac{4 c_0^2 (d_{\boldsymbol{l}} + 
d_{\boldsymbol{t}}) \log p}{\tilde{\varepsilon_0}}
\end{eqnarray*}

\noindent
It follows that 
\begin{eqnarray}
PR(\boldsymbol{l}, \boldsymbol{t}) 
&\leq& \frac{\|\mathbf{\Lambda}_{{\boldsymbol{l}} {\boldsymbol{l}}} \|^{\frac{1}{2}}}{\| \mathbf{\Lambda}_{\boldsymbol{t}\boldsymbol{t}}\|^{\frac{1}{2}}} \frac{\left(2q\right)^{d_{\boldsymbol{l}} - d_t}}{\left(\frac{n\tilde{\varepsilon}_0}{2}\right)^{\frac{d_{\boldsymbol{l}} - d_{\boldsymbol{t}}}{2}}} \exp \left\{\frac{2 c_0^2}{\tilde{\varepsilon}_0} (d_{\boldsymbol{l}} + 
d_{\boldsymbol{t}}) \log p \right\} \nonumber\\
&=& \frac{2^{d_{\boldsymbol{l}} - d_{\boldsymbol{t}}}\|\mathbf{\Lambda}_{{\boldsymbol{l}} {\boldsymbol{l}}} \|^{\frac{1}{2}}}{\| \mathbf{\Lambda}_{\boldsymbol{t}\boldsymbol{t}}\|^{\frac{1}{2}}} \frac{q^{d_{\boldsymbol{l}} - d_t} p^{\frac{2 c_0^2 (d_{\boldsymbol{l}} + d_{\boldsymbol{t}})}{\tilde{\varepsilon}_0}}}{\left(\frac{n\tilde{\varepsilon}_0}{2}\right)^{\frac{d_{\boldsymbol{l}} - d_{\boldsymbol{t}}}{2}}} \nonumber\\
&=& \frac{2^{d_{\boldsymbol{l}} - d_{\boldsymbol{t}}}\|\mathbf{\Lambda}_{{\boldsymbol{l}} {\boldsymbol{l}}} \|^{\frac{1}{2}}}{\| \mathbf{\Lambda}_{\boldsymbol{t}\boldsymbol{t}}\|^{\frac{1}{2}} \left(\frac{n\tilde{\varepsilon}_0}{2}\right)^{\frac{d_{\boldsymbol{l}} - d_{\boldsymbol{t}}}{2}}} \left( p^{-a_2 d_{\boldsymbol{t}} + a_2 
\frac{d_{\boldsymbol{l}} + d_{\boldsymbol{t}}}{4 (d_{\boldsymbol{l}} - d_{\boldsymbol{t}})}} \right)^{d_{\boldsymbol{l}} - d_{\boldsymbol{t}}} \nonumber\\
&=& \frac{2^{d_{\boldsymbol{l}} - d_{\boldsymbol{t}}}\|\mathbf{\Lambda}_{{\boldsymbol{l}} {\boldsymbol{l}}} \|^{\frac{1}{2}}}{\| \mathbf{\Lambda}_{\boldsymbol{t}\boldsymbol{t}}\|^{\frac{1}{2}} \left(\frac{n\tilde{\varepsilon}_0}{2}\right)^{\frac{d_{\boldsymbol{l}} - d_{\boldsymbol{t}}}{2}}} \left( p^{-a_2 d_{\boldsymbol{t}} + a_2 \frac{3 d_{\boldsymbol{t}}}
{4}} \right)^{d_{\boldsymbol{l}} - d_{\boldsymbol{t}}} \nonumber\\
&\leq& \left( \frac{4 C_2 p^{\frac{-a_2 d_{\boldsymbol{t}}}{4}}}{\sqrt{n\tilde{\varepsilon}_0}} \right)^{d_{\boldsymbol{l}} - 
d_{\boldsymbol{t}}}
\end{eqnarray}

\noindent
where $C_2$ is as in the proof of Lemma \ref{Lemma 6}. Let $D(\boldsymbol{l}, \boldsymbol{t})$ denotes the total number of 
disagreements between $\boldsymbol{l}$ and $\boldsymbol{t}$. Note that if $d_{\boldsymbol{l}} > d_{\boldsymbol{t}}$, then 
$$
D(\boldsymbol{l}, \boldsymbol{t}) \leq 2 d_{\boldsymbol{t}} (d_{\boldsymbol{l}} - d_{\boldsymbol{t}}). 
$$

\noindent
Hence 
$$
PR(\boldsymbol{l}, \boldsymbol{t}) \leq \left( \frac{4 C_2 p^{\frac{-a_2}{8}}}{\sqrt{n\tilde{\varepsilon}_0}} 
\right)^{D(\boldsymbol{l}, \boldsymbol{t})}. 
$$

Suppose $\boldsymbol{l}$ is such that $d_{\boldsymbol{l}} \leq d_{\boldsymbol{t}}$. Note that 
\begin{eqnarray*}
& & \hat{\boldsymbol{a}}_{\boldsymbol{l}}' \left( n\mathbf{\Phi} + \mathbf{\Lambda}\right)_{\boldsymbol{l} \boldsymbol{l}}^{-1}\hat{\boldsymbol{a}}_{\boldsymbol{l}} - \hat{\boldsymbol{a}}_{\boldsymbol{t}}' \left( n\mathbf{\Phi} + \mathbf{\Lambda}\right)_{{\boldsymbol{t}}{\boldsymbol{t}}}^{-1}\hat{\boldsymbol{a}}_{\boldsymbol{t}}\\
&=& \hat{\boldsymbol{a}}_{\boldsymbol{l}}' \left( \left( n\mathbf{\Phi} + \mathbf{\Lambda}\right)_{\boldsymbol{l} \boldsymbol{l}}^{-1} - \left( n\mathbf{\Phi} \right)_{\boldsymbol{l} \boldsymbol{l}}^{-1} \right) \hat{\boldsymbol{a}}_{\boldsymbol{l}} - \hat{\boldsymbol{a}}_{\boldsymbol{t}}' \left( \left( n\mathbf{\Phi} + \mathbf{\Lambda}\right)_{{\boldsymbol{t}}{\boldsymbol{t}}}^{-1} - \left( n\mathbf{\Phi} \right)_{{\boldsymbol{t}}{\boldsymbol{t}}}^{-1} \right) \hat{\boldsymbol{a}}_{\boldsymbol{t}} +\\
& & \frac{1}{n} \hat{\boldsymbol{a}}_{\boldsymbol{l}}' \left( \mathbf{\Phi} \right)_{\boldsymbol{l} \boldsymbol{l}}^{-1}\hat{\boldsymbol{a}}_{\boldsymbol{l}} - \frac{1}{n} \hat{\boldsymbol{a}}_{\boldsymbol{t}}' \left( \mathbf{\Phi}\right)_{{\boldsymbol{t}}{\boldsymbol{t}}}^{-1}\hat{\boldsymbol{a}}_{\boldsymbol{t}}\\
&=& O \left( \frac{d_{\boldsymbol{t}}}{n^2} \right) + \frac{1}{n} \hat{\boldsymbol{a}}_{\boldsymbol{l}}' \left( \mathbf{\Phi} \right)_{\boldsymbol{l} \boldsymbol{l}}^{-1}\hat{\boldsymbol{a}}_{\boldsymbol{l}} - \frac{1}{n} \hat{\boldsymbol{a}}_{\boldsymbol{t}}' \left( \mathbf{\Phi}\right)_{{\boldsymbol{t}}{\boldsymbol{t}}}^{-1}\hat{\boldsymbol{a}}_{\boldsymbol{t}} 
\end{eqnarray*}

\noindent
and by Lemma \ref{Lemma 4} 
$$
\left( \mathbf{\Phi} \boldsymbol{\xi}^0 + \hat{\boldsymbol{a}} \right)_{{\boldsymbol{l}}}' \left( \mathbf{\Phi} 
\right)_{\boldsymbol{l} \boldsymbol{l}}^{-1} \left( \mathbf{\Phi} \boldsymbol{\xi}^0 + \hat{\boldsymbol{a}}
\right)_{{\boldsymbol{l}}} + \left( \mathbf{\Phi} \boldsymbol{\xi}^0 + \hat{\boldsymbol{a}} \right)_{{\boldsymbol{t}}}' 
\left( \mathbf{\Phi} \right)_{\boldsymbol{t} \boldsymbol{t}}^{-1} \left( \mathbf{\Phi} \boldsymbol{\xi}^0 + \hat{\boldsymbol{a}} 
\right)_{{\boldsymbol{t}}} = O \left( \frac{d_{\boldsymbol{t}} \log p}{n} \right) 
$$

\noindent
on $C_{1,n}$. Let $\boldsymbol{l}^c$ denote the sparsity pattern which has a zero/one whenever the corresponding entry 
in $\boldsymbol{l}$ is one/zero. Using $\boldsymbol{\xi}^0_{\boldsymbol{t}^c} = {\bf 0}$ and Lemma \ref{Lemma 3}, it 
follows that 
\begin{eqnarray*}
& & \left( \mathbf{\Phi} \boldsymbol{\xi}^0 \right)_{{\boldsymbol{l}}}' \left( \mathbf{\Phi} 
\right)_{\boldsymbol{l} \boldsymbol{l}}^{-1} \left( \mathbf{\Phi} \boldsymbol{\xi}^0 \right)_{{\boldsymbol{l}}} - \left( \mathbf{\Phi} 
\boldsymbol{\xi}^0 \right)_{{\boldsymbol{t}}}' \left( \mathbf{\Phi} \right)_{\boldsymbol{t} \boldsymbol{t}}^{-1} \left( \mathbf{\Phi} 
\boldsymbol{\xi}^0 \right)_{{\boldsymbol{t}}}\\ 
&=& {\boldsymbol{\xi}^0_{\boldsymbol{l}}}' \mathbf{\Phi}_{\boldsymbol{l} \boldsymbol{l}} \boldsymbol{\xi}^0_{\boldsymbol{l}} 
+ {\boldsymbol{\xi}^0_{\boldsymbol{l}}}' \mathbf{\Phi}_{\boldsymbol{l} \boldsymbol{l}^c} \boldsymbol{\xi}^0_{\boldsymbol{l}^c} 
+ {\boldsymbol{\xi}^0_{\boldsymbol{l}^c}}' \mathbf{\Phi}_{\boldsymbol{l}^c \boldsymbol{l}} \mathbf{\Phi}_{\boldsymbol{l} 
\boldsymbol{l}}^{-1} \mathbf{\Phi}_{\boldsymbol{l} \boldsymbol{l}^c} \boldsymbol{\xi}^0_{\boldsymbol{l}^c} - 
{\boldsymbol{\xi}^0_{\boldsymbol{t}}}' \mathbf{\Phi}_{\boldsymbol{t} \boldsymbol{t}} \boldsymbol{\xi}^0_{\boldsymbol{t}}\\
&=& {\boldsymbol{\xi}^0}' \mathbf{\Phi} \boldsymbol{\xi}^0 - {\boldsymbol{\xi}^0}' \mathbf{\Phi} \boldsymbol{\xi}^0 
- {\boldsymbol{\xi}^0_{\boldsymbol{l}^c}}' \left( \mathbf{\Phi}_{\boldsymbol{l}^c \boldsymbol{l}^c} - 
\mathbf{\Phi}_{\boldsymbol{l}^c \boldsymbol{l}} \mathbf{\Phi}_{\boldsymbol{l} \boldsymbol{l}}^{-1} 
\mathbf{\Phi}_{\boldsymbol{l} \boldsymbol{l}^c} \right) \boldsymbol{\xi}^0_{\boldsymbol{l}^c}\\
&\leq& - \frac{3 d_{\boldsymbol{t} \cap \boldsymbol{l}^c} \tilde{\varepsilon}_0 s_n^2}{4, }
\end{eqnarray*}

\noindent
since exactly $d_{\boldsymbol{t} \cap \boldsymbol{l}^c}$ entries in $\boldsymbol{\xi}^0_{\boldsymbol{l}^c}$ are non-zero. 
Since $d_{\boldsymbol{t} \cap \boldsymbol{l}^c} \geq d_{\boldsymbol{t}} - d_{\boldsymbol{l}}$ and $D(\boldsymbol{l}, 
\boldsymbol{t}) \leq 2 d_{\boldsymbol{t}}$ similar arguments to those at the end of Lemma \ref{Lemma 5} can be used to 
obtain 
$$
PR(\boldsymbol{l}, \boldsymbol{t}) \leq \left( \frac{2 C_1 p^{-a_2 d_{\boldsymbol{t}}}}{\sqrt{n}} \right)^{d_{\boldsymbol{t} \cap 
\boldsymbol{l}^c}} \leq \left( \frac{2 C_1 p^{-a_2/2}}{\sqrt{n}} \right)^{D(\boldsymbol{l}, \boldsymbol{t})}. 
$$

\end{proof}

\noindent
It follows by Lemmas \ref{Lemma 5}, \ref{Lemma 6} and \ref{Lemma 7} that for every $\boldsymbol{l} \neq \boldsymbol{t}$ 
with $d_{\boldsymbol{l}} \leq \tau_n$, 
$$
PR(\boldsymbol{l}, \boldsymbol{t}) \leq f_n^{D(\boldsymbol{l}, \boldsymbol{t})} 
$$

\noindent
where 
$$
f_n = \max \left\{  \left( \frac{2 C_1 p^{-a_2/2}}{\sqrt{n}}  \right), \quad  \left( \frac{4 C_2 p^{\frac{-a_2}{8}}}
{\sqrt{n\tilde{\varepsilon}_0}} \right) \right\}. 
$$

The first part of Theorem \ref{theorem (strong consistency)} a is straightforward application of Lemmas \ref{Lemma 5}, 
\ref{Lemma 6}, and \ref{Lemma 7}. Note that $a_2 \geq 16$, which implies $p^2 f_n \to 0$ as $n \to \infty$. It follows that 
\begin{equation}
\begin{split}
\frac{1 - P \left\{ \boldsymbol{\xi}\in\mathcal{M}_{\boldsymbol{t}} | \hat{\boldsymbol{\delta}}, \mathcal{Y}\right\}}{P \left\{ \boldsymbol{\xi}\in\mathcal{M}_{\boldsymbol{t}} | \hat{\boldsymbol{\delta}}, \mathcal{Y}\right\}}& = \sum\limits_{\boldsymbol{l}\neq \boldsymbol{t}} PR(\boldsymbol{l}, \boldsymbol{t}) \\
&= \sum\limits_{\boldsymbol{l}\neq \boldsymbol{t}} \sum\limits_{j=1}^{\binom{p}{2}} PR(\boldsymbol{l}, \boldsymbol{t})I_{\{D(\boldsymbol{l}, \boldsymbol{t})=j\}}\\
&\leq \sum\limits_{j=1}^{\binom{p}{2}} \binom{\binom{p}{2}}{j} f_n^{j}\\
& \leq  \sum\limits_{j=1}^{\binom{p}{2}} {\binom{p}{2}}^j f_n^{j}\\
& \leq \sum\limits_{j=1}^{p^2} \left(p^2 f_n\right)^j \\
& \leq \frac{p^2 f_n}{1 - p^2 f_n} \to 0 \quad \text{as} \quad n \to \infty.
\end{split}
\end{equation}

\noindent
on $C_{1,n}$. Since $\mathbb{P}_0 (C_{1,n}) \rightarrow 1$ as $n \rightarrow \infty$, we get that 
$$
\pi \left\{ \boldsymbol{t} | \hat{\boldsymbol{\delta}},\mathcal{Y} \right\} \xrightarrow{\text{$\mathbb{P}_0$}} 1, 
\quad \quad \text{as} \quad n \to \infty. 
$$

\noindent
We now prove the second part of Theorem \ref{theorem (strong consistency)}. For every pair $(j,k)$, let $\pi_{jk}$ 
denote the posterior probability that $\omega_{jk}$ is non-zero. Note that 
$$
\pi_{jk} \geq \pi \left\{ \boldsymbol{t} | \hat{\boldsymbol{\delta}},\mathcal{Y}\right\} 
$$

\noindent
for $(j,k) \in E^0$, and 
$$
\pi_{jk} \leq 1 - \pi \left\{ \boldsymbol{t} | \hat{\boldsymbol{\delta}},\mathcal{Y}\right\} 
$$

\noindent
for $(j,k) \notin E^0$. It follows by the first part of Theorem \ref{theorem (strong consistency)} that 
\begin{eqnarray*}
\mathbb{P}_0 \left( \hat{\boldsymbol{l}}_{\upsilon, BSSC} = \boldsymbol{t} \right) 
&=& \mathbb{P}_0 \left( \left\{ \cap_{(j,k): (j,k) \in E^0} \{\pi_{jk} \geq v\} \right\} \cap \left\{ 
\cap_{(j,k): (j,k) \notin E^0} \{\pi_{jk} < v\} \right\} \right)\\
&\geq& \mathbb{P}_0 \left( \pi \left\{ \boldsymbol{t} | \hat{\boldsymbol{\delta}},\mathcal{Y} \right\} \geq 
\max \left( v, 1- \frac{v}{2} \right) \right) \to 1 \quad \text{as} \quad n \to \infty. 
\end{eqnarray*}

\section{Appendix B: Proofs of Lemmas 1 and 2} \label{Proofs of Lemmas 1 and 2}

\subsection*{Proof of Lemma 1}

\noindent
Note that any principal submatrix of $\mathbf{S}$ of size less than $n$ is positive definite. Let $\lambda > 0$ be the smallest 
number in the collection of eigenvalues of all principal submatrices of $S$ of size less than $n$. By assumption, if 
$\mathbf{\Omega} \in \mathbb{M}_{\hat{G}}$, then the $i^{th}$ column of $\mathbf{\Omega}$, denoted by 
$\mathbf{\Omega}_{\cdot i}$, has at most $n-1$ zeros. It follows that 
\begin{eqnarray*}
& & \int_{\mathbb{M}_{\hat{G}}} \exp \left\{ n tr(\mathbf{\Omega}) - \frac{n}{2} tr(\mathbf{\Omega}^2 \mathbf{S}) \right\} 
d \mathbf{\Omega}\\
&=& \int_{\mathbb{M}_{\hat{G}}} \exp \left\{ n tr(\mathbf{\Omega}) - \frac{n}{2} \sum_{i=1}^n \mathbf{\Omega}_{\cdot i}^t  
\mathbf{S} \mathbf{\Omega}_{\cdot i} \right\} d \mathbf{\Omega}\\
&\leq& \int_{\mathbb{M}_{\hat{G}}} \exp \left\{ n tr(\mathbf{\Omega}) - \frac{n \lambda}{2} \sum_{i=1}^n 
\mathbf{\Omega}_{\cdot i}^t \mathbf{\Omega}_{\cdot i} \right\} d \mathbf{\Omega}\\
&=& \int_{\mathbb{R}^{|\hat{E}|} \times \mathbb{R}_+^p} \exp \left\{ n \sum_{i=1}^p \omega_{ii} - \frac{n \lambda}{2} 
\sum_{i=1}^p \sum_{j=1}^p \omega_{ij}^2 \right\} \prod_{i < j, (i,j) \in \hat{E}} d \omega_{ij}\\
&=& \left( \prod_{i=1}^p \int_{\mathbb{R}_+} \exp \left\{ n \omega_{ii} - \frac{n \lambda}{2} \omega_{ii}^2 \right\} d \omega_{ii} 
\right) \left( \prod_{i<j, (i,j) \in \hat{E}} \int_\mathbb{R} \exp \left\{ - n \lambda \omega_{ij}^2 \right\} d \omega_{ij} \right)\\
&<& \infty. 
\end{eqnarray*}

\subsection*{Proof of Lemma 2}

\noindent
Let $\boldsymbol{\omega}_{\hat{G}}$ denote the vectorized version of the non-zero entries in $\mathbf{\Omega} \in 
\mathbb{M}_{\hat{G}}$. It follows that 
$$
h(\boldsymbol{\omega}_{\hat{G}}) = \frac{n}{2} tr(\mathbf{\Omega}^2 \mathbf{K}^{-1}) - n tr(\mathbf{\Omega}) = \frac{n}{2} 
\boldsymbol{\omega}_{\hat{G}}' \tilde{K} \boldsymbol{\omega}_{\hat{G}} - n \boldsymbol{\omega}_{\hat{G}}' {\bf u}, 
$$

\noindent
for an appropriate matrix $\tilde{K}$ and an appropriate vector ${\bf u}$. By a similar analysis as in the proof of 
Lemma \ref{Lemma 2}, it can be shown that the eigenvalues of $\tilde{K}$ are bounded below by the smallest eigenvalue of 
$K^{-1}$. It follows that $\tilde{K}$ is invertible and $h(\boldsymbol{\omega}_{\hat{G}})$ is uniquely minimized at $\tilde{K}^{-1} 
{\bf u}$. Note that for $(i,j) \in \hat{E}$, 
$$
\frac{\partial}{\partial \omega_{ij}} h(\boldsymbol{\omega}_{\hat{G}}) = n \sum_{i'=1}^p \omega_{i'j} K^{-1}_{ii'} + n \sum_{j'=1}^p 
\omega_{ij'} K^{-1}_{jj'}, 
$$

\noindent
and 
$$
\frac{\partial}{\partial \omega_{ii}} h(\boldsymbol{\omega}_{\hat{G}}) = n \sum_{i'=1}^p \omega_{ii'} K^{-1}_{ii'} - n 
$$

\noindent
for $1 \leq i \leq p$. It follows that the vectorized version of the non-zero entries (corresponding to entries in $\hat{E}$) 
of the matrix $\mathbf{K}$ satisfies the above first derivative equations, and must coincide with the unique minimum 
$\tilde{K}^{-1} {\bf u}$. Since $\mathbf{K} \in \mathbb{M}_{\hat{G}}$, the result follows.

\section{Appendix C: Proof of  Theorem 2}\label{Proof of Theorem 2}

\noindent
Note by the definition of $\hat{G}$ in (\ref{estimated:edge:set}) and Theorem \ref{theorem (strong consistency)} that 
$$
\mathbb{P}_0 (\hat{G} = G^0) = \mathbb{P}_0 \left( \hat{\boldsymbol{l}}_{\upsilon, BSSC} = \boldsymbol{t} \right) \rightarrow 
1
$$

\noindent
as $n \rightarrow \infty$. For ease of presentation, let $\epsilon_n = \sqrt{\frac{(p + d_{\boldsymbol{t}} \log p}{n}}$.  First note that for any constant $K'$, 
\begin{eqnarray}
& & \mathbb{E}_0 \left[ \pi_{refitted}  \left(\| \boldsymbol{\Omega}- \boldsymbol{\Omega}^0 \|_{\max} > K' \nu_{\max} 
\sqrt{d_{\boldsymbol{t}} \frac{\log p}{n}} \mid \mathcal{Y} \right)\right] \nonumber\\
&\leq& \mathbb{E}_0 \left[ \pi_{refitted}  \left(\| \boldsymbol{\Omega}- \boldsymbol{\Omega}^0 \|_{\max} > K' \nu_{\max} 
\sqrt{\frac{d_{\boldsymbol{t}} \log p}{n}} \mid \mathcal{Y} \right) 1_{\{\hat{G} = G^0\}} \right] + \mathbb{P}_0 (G \neq G^0). 
\label{refitted:posterior:bound}
\end{eqnarray}

\noindent
Hence, it is sufficient to prove that $\mathbb{E}_0 \left[ \pi_{refitted}  \left( 
\|\boldsymbol{\Omega}- \boldsymbol{\Omega}^0\|_{FM} > K \sqrt{\frac{\log p}{n}} \mid \mathcal{Y} \right) 1_{\{\hat{G} = 
G^0\}} \right]$ as $n \to \infty$. As in the proof of Lemma \ref{optimizer}, let $\boldsymbol{\omega}$ denote the vectorized 
version of the non-zero entries in $\mathbf{\Omega} \in \mathbb{M}_{G^0}$, such that the first $p$ entries of 
$\mathbf{\omega}$ correspond to the diagonal entries of $\mathbf{\Omega}$, and the last $d_{\boldsymbol{t}}$ entries of 
$\mathbf{\omega}$ correspond to the (structurally) non-zero off-diagonal entries of $\mathbf{\Omega}$. Similarly, let 
$\boldsymbol{\omega}^0$ denote the vectorized version of the non-zero entries in $\mathbf{\Omega} \in 
\mathbb{M}_{G^0}$. It follows that 
$$
h(\boldsymbol{\omega}) = \frac{n}{2} tr(\mathbf{\Omega}^2 S) - n tr(\mathbf{\Omega}) = \frac{n}{2} 
\boldsymbol{\omega}' \tilde{\mathbf{K}} \boldsymbol{\omega} - n \boldsymbol{\omega}' {\bf u}, 
$$

\noindent
for an appropriate matrix $\tilde{K}$ and an appropriate vector ${\bf u}$. It can be shown by straightforward calculations that 
the first $p$ entries of ${\bf u}$ (corresponding to the diagonals) are $1$, and the rest are $0$. By a similar analysis as in the 
proof of Lemma \ref{Lemma 2}, it can be shown that 
\begin{equation} \label{refitted:covariance}
\tilde{\mathbf{K}} = \mathbf{Q}_t' \tilde{\mathbf{P}}_t' \tilde{\mathbf{S}}_t \tilde{\mathbf{P}}_t \mathbf{Q}_t, 
\end{equation}

\noindent
wherein $\tilde{\mathbf{S}}_t$ is a block diagonal matrix with $p$ blocks. For every $1 \leq i \leq p$, the $i^{th}$ block is a 
principal sub-matrix of the sample covariance matrix $S$ corresponding to indices which are neighbors of $i$ in $G^0$, 
$\tilde{\mathbf{P}}_t$ is an appropriate $(p+2 d_{\boldsymbol{t}}) \times (p+d_{\boldsymbol{t}})$ matrix of zeros and ones 
such that each row has exactly one entry equal to $1$, and each column has at most $2$ entries equal to $1$, and 
$\mathbf{Q}_t$ is an appropriate $(p+d_{\boldsymbol{t}}) \times (p+d_{\boldsymbol{t}})$ permutation matrix. It follows 
from (\ref{refitted:covariance}) and the structure of $\tilde{\mathbf{P}}_t$ that $\lambda_{min} (\tilde{\mathbf{K}}) \geq 
\lambda_{min} (\tilde{\mathbf{S}}_t)$. Note that $\tilde{\mathbf{S}}_t$ is a block diagonal matrix, and each diagonal block 
is a sub-matrix of the sample covariance matrix $\mathbf{S}$ of size less than $d_{\boldsymbol{t}}$. It follows by 
Assumption \ref{Assumption3: Bounded eigenvalues} that on $C_{1,n}$, $\lambda_{\min} (\tilde{\mathbf{S}}_t) \geq 
\tilde{\varepsilon}_0 - d_{\boldsymbol{t}} \sqrt{\frac{\log p}{n}} \geq \tilde{\varepsilon}_0/2$ for large enough $n$. Hence, 
$\tilde{\mathbf{K}}$ is invertible, and the function $h(\boldsymbol{\omega})$ is uniquely minimized at 
$\hat{\boldsymbol{\omega}} = \tilde{\mathbf{K}}^{-1} {\bf u}$. 

Note that by Lemma \ref{optimizer}, the function $h_0 (\boldsymbol{\omega})$ defined by 
$$
h_0 (\boldsymbol{\omega}) = \frac{n}{2} tr(\mathbf{\Omega}^2 \mathbf{\Sigma}^0) - n tr(\mathbf{\Omega}) = \frac{n}{2} 
\boldsymbol{\omega}' \tilde{\mathbf{K}}^0 \boldsymbol{\omega} - n \boldsymbol{\omega}' {\bf u} 
$$

\noindent
is uniquely minimized at $\boldsymbol{\omega}^0$. It can be shown that $\boldsymbol{\omega}^0 = 
(\tilde{\mathbf{K}}^0)^{-1} {\bf u}$, where $\tilde{\mathbf{K}} = \mathbf{Q}_t' \tilde{\mathbf{P}}_t' \tilde{\mathbf{\Sigma}}_t 
\tilde{\mathbf{P}}_t \mathbf{Q}_t$, and $\tilde{\mathbf{\Sigma}}_t$ is a block diagonal matrix with $p$ blocks. For every $1 
\leq i \leq p$, the $i^{th}$ block is a principal sub-matrix of the true covariance matrix $\mathbf{\Sigma}^0$ corresponding to 
indices which are neighbors of $i$ in $G^0$. 

Next, we show that on $C_{1,n}$, $\| \hat{\boldsymbol{\omega}} - \boldsymbol{\omega}^0 \|_{max} \leq K' 
\sqrt{d_{\boldsymbol{t}} \frac{\log p}{n}}$ for a large enough constant $K'$ (not depending on $n$). Let $d = \{(i,j): 1 \leq i=j 
\leq p\}$, $o = \{(i,j): i < j, (i,j) \in E^0\}$ and $\bar{o} = \{(i,j): i > j, (i,j) \in E^0\}$. Note that $|d| = p$, and $|o| = |\bar{o}| = 
d_{\boldsymbol{t}}$. After straightforward calculations, it can be shown that 
$$
\tilde{\mathbf{K}} = \left( \begin{matrix}
\tilde{\mathbf{S}}_{t,dd} & \tilde{\mathbf{S}}_{t,do} + \tilde{\mathbf{S}}_{t,d \bar{o}} \cr
\tilde{\mathbf{S}}_{t,od} + \tilde{\mathbf{S}}_{t,\bar{o} d} & \tilde{\mathbf{S}}_{t,oo} + \tilde{\mathbf{S}}_{t, \bar{o} \bar{o}} 
\end{matrix} \right) \mbox{ and } \tilde{\mathbf{K}}^0 = \left( \begin{matrix}
\tilde{\mathbf{\Sigma}}_{t,dd} & \tilde{\mathbf{\Sigma}}_{t,do} + \tilde{\mathbf{\Sigma}}_{t,d \bar{o}} \cr
\tilde{\mathbf{\Sigma}}_{t,od} + \tilde{\mathbf{\Sigma}}_{t,\bar{o} d} & \tilde{\mathbf{\Sigma}}_{t,oo} + 
\tilde{\mathbf{\Sigma}}_{t, \bar{o} \bar{o}} 
\end{matrix} \right),
$$

\noindent
wherein $\tilde{\mathbf{S}}_{t,do}$ denotes the sub-matrix of $\tilde{\mathbf{S}}$ corresponding to the rows in $d$ and 
columns in $o$. Other sub-matrices are similarly defined. 

Using the form of the inverse of a partitioned matrix, along with $\hat{\boldsymbol{\omega}} = \tilde{\mathbf{K}}^{-1} {\bf u}$ 
and $\boldsymbol{\omega}^0 = (\tilde{\mathbf{K}}^0)^{-1} {\bf u}$, it follows that for every $1 \leq i \leq p$, we have 
$$
\hat{\omega}_{ii} = {\bf e}_i' \tilde{\mathbf{S}}_{t,dd}^{-1} {\bf 1}_p  + {\bf e}_i' \tilde{\mathbf{S}}_{t,dd}^{-1} 
(\tilde{\mathbf{S}}_{t,do} + \tilde{\mathbf{S}}_{t,d \bar{o}}) \tilde{\mathbf{S}}_{Schur} (\tilde{\mathbf{S}}_{t,od} + 
\tilde{\mathbf{S}}_{t,\bar{o} d}) \tilde{\mathbf{S}}_{t,dd}^{-1} {\bf 1}_p, 
$$

\noindent
and 
$$
\omega^0_{ii} = {\bf e}_i' \tilde{\mathbf{\Sigma}}_{t,dd}^{-1} {\bf 1}_p  + {\bf e}_i' \tilde{\mathbf{\Sigma}}_{t,dd}^{-1} 
(\tilde{\mathbf{\Sigma}}_{t,do} + \tilde{\mathbf{\Sigma}}_{t,d \bar{o}}) \tilde{\mathbf{\Sigma}}_{Schur} 
(\tilde{\mathbf{\Sigma}}_{t,od} + \tilde{\mathbf{\Sigma}}_{t,\bar{o} d}) \tilde{\mathbf{\Sigma}}_{t,dd}^{-1} {\bf 1}_p. 
$$

\noindent
Here ${\bf e}_i$ is the $i^{th}$ unit vector in $\mathbb{R}^p$, ${\bf 1}_p \in \mathbb{R}^p$ has all entries equal to $1$, and 
$\tilde{\mathbf{S}}_{Schur}$ and $\tilde{\mathbf{\Sigma}}_{Schur}$ are the lower principal $d_{\boldsymbol{t}} \times 
d_{\boldsymbol{t}}$ submatrices of $\tilde{\mathbf{K}}^{-1}$ and $(\tilde{\mathbf{K}}^0)^{-1}$ respectively. We now make 
the following observations using the structure of $\tilde{\mathbf{S}}$ and $\tilde{\mathbf{\Sigma}}$. 
\begin{enumerate}
\item Since $\tilde{\mathbf{S}}_{t,dd}$ and $\tilde{\mathbf{\Sigma}}_{t,dd}$ are diagonal matrices with diagonal entries 
$\{S_{ii}\}_{i=1}^p$ and $\{\Sigma^0_{ii}\}_{i=1}^p$ respectively, it follows that 
$$
\left| {\bf e}_i' \tilde{\mathbf{S}}_{t,dd}^{-1} {\bf 1}_p - {\bf e}_i' \tilde{\mathbf{\Sigma}}_{t,dd}^{-1} {\bf 1}_p \right| 
= \left| \frac{1}{S_{ii}} - \frac{1}{\Sigma^0_{ii}} \right| \leq K_1' \sqrt{\frac{\log p}{n}} 
$$

\noindent
and 
$$
\left\| {\bf e}_i' \tilde{\mathbf{S}}_{t,dd}^{-1} - {\bf e}_i' \tilde{\mathbf{\Sigma}}_{t,dd}^{-1} \right\| = \left| \frac{1}{S_{ii}} - 
\frac{1}{\Sigma^0_{ii}} \right| \leq K_1' \sqrt{\frac{\log p}{n}} 
$$

\noindent
on $C_{1,n}$ for a large enough constant $K_1'$. 
\item Using Assumption \ref{Assumption3: Bounded eigenvalues} along with the structure of $\tilde{\mathbf{P}}_t$, 
$\tilde{\mathbf{S}}_t$ and $\tilde{\mathbf{\Sigma}}_t$, we get that for large enough $n$ 
$$
\|\tilde{\mathbf{S}}_{t,do} + \tilde{\mathbf{S}}_{t,d \bar{o}}\| \leq \|\tilde{K}\| \leq 2 \lambda_{\max} 
(\tilde{\mathbf{S}}_t) \leq \frac{4}{\tilde{\varepsilon}_0} 
$$

\noindent
and 
$$
\|\tilde{\mathbf{\Sigma}}_{t,do} + \tilde{\mathbf{\Sigma}}_{t,d \bar{o}}\| \leq \|\tilde{K}^0\| \leq 2 \lambda_{\max} 
(\tilde{\mathbf{\Sigma}}_t) \leq \frac{2}{\tilde{\varepsilon}_0}, 
$$

\noindent
on $C_{1,n}$. Also, since each structurally non-zero entry of $\tilde{\mathbf{S}}$ and $\tilde{\mathbf{\Sigma}}$ is  an 
appropriate entry of $\mathbf{S}$ and $\mathbf{\Sigma}^0$ respectively, each row of $\tilde{\mathbf{S}}_{t,do} + 
\tilde{\mathbf{S}}_{t,d \bar{o}}$ and $\tilde{\mathbf{\Sigma}}_{t,do} + \tilde{\mathbf{\Sigma}}_{t,d \bar{o}}$ has at most 
$\nu_{\max}$ non-zero entries, and each column of $\tilde{\mathbf{S}}_{t,do} + \tilde{\mathbf{S}}_{t,d \bar{o}}$ and 
$\tilde{\mathbf{\Sigma}}_{t,do} + \tilde{\mathbf{\Sigma}}_{t,d \bar{o}}$ has at most $2$ non-zero entries, it follows that 
for large enough $n$ and a large enough constant $K_2'$ 
$$
\|\tilde{\mathbf{S}}_{t,do} + \tilde{\mathbf{S}}_{t,d \bar{o}} - \tilde{\mathbf{\Sigma}}_{t,do} - 
\tilde{\mathbf{\Sigma}}_{t,d \bar{o}} \| \leq K_2' \sqrt{\frac{\nu_{\max} \log p}{n}} 
$$

\noindent
on $C_{1,n}$. 
\item It can be shown using the structure of $\tilde{\mathbf{S}}_t$ and $\tilde{\mathbf{\Sigma}}_t$ that 
$(\tilde{\mathbf{S}}_{t,od} + \tilde{\mathbf{S}}_{t,\bar{o} d}) \tilde{\mathbf{S}}_{t,dd}^{-1} {\bf 1}_p$ and 
$(\tilde{\mathbf{\Sigma}}_{t,od} + \tilde{\mathbf{\Sigma}}_{t,\bar{o} d}) \tilde{\mathbf{\Sigma}}_{t,dd}^{-1} {\bf 1}_p$ 
are both $d_{\boldsymbol{t}}$-dimensional vectors with the entry corresponding to $(i,j) \in E^0$ given by 
$2 S_{ij} \left( S_{ii}^{-1} + S_{jj}^{-1} \right)$ and $2 \Sigma^0_{ij} \left( (\Sigma^0_{ii})^{-1} + (\Sigma^0_{jj})^{-1} \right)$ 
respectively. It follows that for large enough $n$ and a large enough constant $K_3'$ 
$$
\left\| (\tilde{\mathbf{S}}_{t,od} + \tilde{\mathbf{S}}_{t,\bar{o} d}) \tilde{\mathbf{S}}_{t,dd}^{-1} {\bf 1}_p \right\| \leq K_3' 
\sqrt{d_{\boldsymbol{t}}}, \; \left\| (\tilde{\mathbf{\Sigma}}_{t,od} + \tilde{\mathbf{\Sigma}}_{t,\bar{o} d}) 
\tilde{\mathbf{\Sigma}}_{t,dd}^{-1} {\bf 1}_p \right\| \leq K_3' \sqrt{d_{\boldsymbol{t}}} 
$$

\noindent
and 
$$
\left\| (\tilde{\mathbf{S}}_{t,od} + \tilde{\mathbf{S}}_{t,\bar{o} d}) \tilde{\mathbf{S}}_{t,dd}^{-1} {\bf 1}_p - 
(\tilde{\mathbf{\Sigma}}_{t,od} + \tilde{\mathbf{\Sigma}}_{t,\bar{o} d}) \tilde{\mathbf{\Sigma}}_{t,dd}^{-1} {\bf 1}_p 
\right\| \leq K_3' \sqrt{\frac{d_{\boldsymbol{t}} \log p}{n}} 
$$

\noindent
on $C_{1,n}$. 
\item Using Assumption \ref{Assumption3: Bounded eigenvalues} along with the structure of $\tilde{\mathbf{P}}_t$, 
$\tilde{\mathbf{S}}_t$ and $\tilde{\mathbf{\Sigma}}_t$, we get that for large enough $n$ 
$$
\| \tilde{\mathbf{S}}_{Schur} \| \leq \| \tilde{\mathbf{K}}^{-1} \| \leq \frac{2}{\tilde{\varepsilon}_0}
$$

\noindent
and 
$$
\| \tilde{\mathbf{\Sigma}}_{Schur} \| \leq \| (\tilde{\mathbf{K}}^0)^{-1} \| \leq \frac{1}{\tilde{\varepsilon}_0}
$$

\noindent
on $C_{1,n}$. Using 
$$
\tilde{\mathbf{S}}_{Schur}^{-1} = (\tilde{\mathbf{\Sigma}}_{t,oo} + \tilde{\mathbf{\Sigma}}_{t,\bar{o} \bar{o}}) - 
(\tilde{\mathbf{\Sigma}}_{t,od} + \tilde{\mathbf{\Sigma}}_{t,\bar{o} d}) \tilde{\mathbf{\Sigma}}_{t,dd}^{-1} 
(\tilde{\mathbf{\Sigma}}_{t,do} + \tilde{\mathbf{\Sigma}}_{t,d \bar{o}}) 
$$

\noindent
$$
\tilde{\mathbf{\Sigma}}_{Schur}^{-1} = (\tilde{\mathbf{\Sigma}}_{t,oo} + \tilde{\mathbf{\Sigma}}_{t,\bar{o} \bar{o}}) - 
(\tilde{\mathbf{\Sigma}}_{t,od} + \tilde{\mathbf{\Sigma}}_{t,\bar{o} d}) \tilde{\mathbf{\Sigma}}_{t,dd}^{-1} 
(\tilde{\mathbf{\Sigma}}_{t,do} + \tilde{\mathbf{\Sigma}}_{t,d \bar{o}}), 
$$

\noindent
along with the fact that $\tilde{\mathbf{S}}_{t,oo} + \tilde{\mathbf{S}}_{t,\bar{o} \bar{o}}$ and 
$\tilde{\mathbf{\Sigma}}_{t,oo} + \tilde{\mathbf{\Sigma}}_{t,\bar{o} \bar{o}}$ have at most $2 \nu_{\max}$ structurally 
non-zero entries in each row (and column), we get that for large enough $n$ and a large enough constant $K_4'$ 
$$
\left\| \tilde{\mathbf{S}}_{Schur} - \tilde{\mathbf{\Sigma}}_{Schur} \right\| \leq \nu_{\max} \sqrt{\frac{\log p}{n}} 
$$

\noindent
on $C_{1,n}$. 
\end{enumerate}

\noindent
Using the observations above, it follows that for $K' > 2 \max \left( K_1', K_2', K_3', K_4' \right)$, and large enough $n$ 
\begin{equation} \label{mode:diag:bound}
\max_{1 \leq i \leq p} \left| \hat{\omega}_{ii} - \omega^0_{ii} \right| \leq \frac{K'}{2} \nu_{\max} \sqrt{\frac{d_{\boldsymbol{t}} 
\log p}{n}} 
\end{equation}

\noindent
on $C_{1,n}$. 

Now, for every $(i,j) \in E^0$, using the form of the inverse of a partitioned matrix, along with $\hat{\boldsymbol{\omega}} = 
\tilde{\mathbf{K}}^{-1} {\bf u}$ and $\boldsymbol{\omega}^0 = (\tilde{\mathbf{K}}^0)^{-1} {\bf u}$, we get that 
$$
\hat{\omega}_{ij} = {\bf v}_{ij}' \tilde{\mathbf{S}}_{Schur} (\tilde{\mathbf{S}}_{t,od} + \tilde{\mathbf{S}}_{t,\bar{o} d}) 
\tilde{\mathbf{S}}_{t,dd}^{-1} {\bf 1}_p, 
$$

\noindent
and 
$$
\omega^0_{ii} = {\bf v}_{ij}' \tilde{\mathbf{\Sigma}}_{Schur} (\tilde{\mathbf{\Sigma}}_{t,od} + 
\tilde{\mathbf{\Sigma}}_{t,\bar{o} d}) \tilde{\mathbf{\Sigma}}_{t,dd}^{-1} {\bf 1}_p 
$$

\noindent
for an appropriate unit vector ${bf v}_{ij} \in \mathbb{R}^{d_{\boldsymbol{t}}}$. Using the observations in 3. and 4. above, it 
follows that for large enough $n$ 
\begin{equation} \label{mode:offdiag:bound}
\max_{(i,j) \in E^0} \left| \hat{\omega}_{ij} - \omega^0_{ij} \right| \leq \frac{K'}{2} \nu_{\max} \sqrt{\frac{d_{\boldsymbol{t}} \log 
p}{n}} 
\end{equation}

\noindent
on $C_{1,n}$. Since $\mathbb{P}_0 (C_{1,n}) \rightarrow 0$ as $n \rightarrow \infty$, using (\ref{refitted:posterior:bound}), 
it is sufficient to prove that 
$$
\mathbb{E}_0 \left[ \pi_{refitted}  \left(\| \boldsymbol{\omega} - \hat{\boldsymbol{\omega}} \|_{\max} > K' \nu_{\max} 
\sqrt{\frac{d_{\boldsymbol{t}} \log p}{n}} \mid \mathcal{Y} \right) 1_{\{\hat{G} = G^0\}} \right] \rightarrow 0 
$$

\noindent
as $n \rightarrow \infty$. It follows from (\ref{refitted:posterior:density}) and the definition of $\mathbb{M}_{\hat{G}}$ that 
\begin{eqnarray}
& & \pi_{refitted}  \left(\| \boldsymbol{\omega} - \hat{\boldsymbol{\omega}} \|_{\max} > K' \nu_{\max} 
\sqrt{\frac{d_{\boldsymbol{t}} \log p}{n}} \mid \mathcal{Y} \right) 1_{\{\hat{G} = G^0\}} \nonumber\\ 
&=& P \left( \| {\bf Z} \|_{\max} > K' \nu_{\max} \sqrt{\frac{d_{\boldsymbol{t}} \log p}{n}} \mid Z_i > - \hat{\omega}_{ii}, \; 
\forall 1 \leq i \leq p \right) \nonumber\\
&\leq& \frac{P \left( \| {\bf Z} \|_{\max} > K' \nu_{\max} \sqrt{\frac{d_{\boldsymbol{t}} \log p}{n}} \right)}{P \left( Z_i > - 
\hat{\omega}_{ii}, \; \forall 1 \leq i \leq p \right)}. \label{refitted:multivariate:normal}
\end{eqnarray}

\noindent
where $P$ is a probability measure, and the $(p+d_{\boldsymbol{t}})$-dimensional random vector ${\bf Z}$ has a 
multivariate normal distribution with mean ${\bf 0}$ and covariance matrix $\frac{\tilde{\mathbf{K}}^{-1}}{n}$ under $P$. 
Using Assumption \ref{Assumption3: Bounded eigenvalues}, the structure of $\tilde{\mathbf{P}}_t, \tilde{\mathbf{S}}_t$, 
along with (\ref{mode:diag:bound}), we get that for large enough $n$ 
$$
\hat{\omega}_{ii} > \frac{\tilde{\varepsilon}_0}{2} \; \forall 1 \leq i \leq p 
$$

\noindent
and 
$$
\lambda_{\max} \left( \tilde{\mathbf{K}}^{-1} \right) < \frac{2}{\tilde{\varepsilon}_0} 
$$

\noindent
on $C_{1,n}$. It follows by the union-sum inequality that 
\begin{eqnarray*}
\frac{P \left( \| {\bf Z} \|_{\max} > K' \nu_{\max} \sqrt{\frac{d_{\boldsymbol{t}} \log p}{n}} \right)}{P \left( Z_i > - 
\hat{\omega}_{ii}, \; \forall 1 \leq i \leq p \right)} 
&\leq& \frac{\sum_{i=1}^{p+d_{\boldsymbol{t}}} P \left( |Z_i| > K' \nu_{\max} \sqrt{\frac{d_{\boldsymbol{t}} \log p}{n}} \right)}
{1 - \sum_{i=1}^p P(Z_i < - \hat{\omega}_{ii})}\\
&\leq& \frac{\sum_{i=1}^{p+d_{\boldsymbol{t}}} P \left( |\tilde{Z}_i| > K' \nu_{\max} \sqrt{\frac{\tilde{\varepsilon}_0 d_{\boldsymbol{t}} 
\log p}{2}} \right)}{1 - \sum_{i=1}^p P \left( |Z_i| > \frac{\varepsilon}{2} \right)}\\
&\leq& \frac{\sum_{i=1}^{p+d_{\boldsymbol{t}}} P \left( |\tilde{Z}_i| > K' \nu_{\max} \sqrt{\frac{\tilde{\varepsilon}_0 d_{\boldsymbol{t}} 
\log p}{2}} \right)}{1 - \sum_{i=1}^p P \left( |\tilde{Z}_i| > \sqrt{\frac{n \tilde{\varepsilon}_0^3}{8}} \right)}. 
\end{eqnarray*}

\noindent
where $\tilde{Z}_i$ has a standard normal distribution under the probability measure $P$ for every $1 \leq i \leq 
p+d_{\boldsymbol{t}}$. Using Markov's inequality with an appropriate function of $\tilde{Z}_i$, we get 
$$
\frac{P \left( \| {\bf Z} \|_{\max} > K' \nu_{\max} \sqrt{\frac{d_{\boldsymbol{t}} \log p}{n}} \right)}{P \left( Z_i > - 
\hat{\omega}_{ii}, \; \forall 1 \leq i \leq p \right)} \leq \frac{2 (p+d_{\boldsymbol{t}}) \exp \left( - (K')^2 \nu_{\max}^2 \varepsilon 
d_{\boldsymbol{t}} \log p/4 \right)}{1 - 2p \exp \left( - n \tilde{\varepsilon}_0^3/16 \right)} 
$$

\noindent
on $C_{1,n}$. It follows by Assumption \ref{Assumption2new}, (\ref{refitted:multivariate:normal}), and $\mathbb{P}_0 
(C_{1,n}) \rightarrow 0$ as $n \rightarrow \infty$ that 
$$
\mathbb{E}_0 \left[ \pi_{refitted}  \left(\| \boldsymbol{\omega} - \hat{\boldsymbol{\omega}} \|_{\max} > K' \nu_{\max} 
\sqrt{\frac{d_{\boldsymbol{t}} \log p}{n}} \mid \mathcal{Y} \right) 1_{\{\hat{G} = G^0\}} \right] \rightarrow 0 
$$

\noindent
as $n \rightarrow \infty$ for a large enough choice of $K'$. This establishes the first part of 
Theorem \ref{estimation:consistency:refitted:posterior} (with the $\| \cdot \|_{\max}$ norm). The second part follows by 
noting that on $\hat{G} = G^0$ 
$$
\left\| \mathbf{\Omega} - \mathbf{\Omega}^0 \right\| \leq \nu_{\max} \left\| \mathbf{\Omega} - \mathbf{\Omega}^0 
\right\|_{\max}. 
$$

\section{Appendix D: Continuous shrinkage priors - Horseshoe prior}

\noindent
Continuous shrinkage prior distributions are a popular alternative to spike-and-slab ones. Such prior distributions have a peak at zero and their tails decay at an appropriate rate. They serve as continuous approximations to the discrete mixture-based spike-and-slab prior distributions. Continuous shrinkage prior distributions are often a scale mixture of normals, such as Laplace-half-Cauchy, etc. (see \cite{polson2010shrink},\cite{bhattacharya2015dirichlet} and references therein). In the context of linear regression, the Bayesian lasso of \cite{park2008bayesian}, based on the interpretation of the well-known lasso estimator of the regression coefficients as the posterior mode in a Bayesian model which puts independent Laplace priors on the individual coefficients, has gained popularity in recent years. 

As mentioned in Section \ref{background}, the Bayesian Graphical lasso was proposed by \cite{wang2012bayesian}, as a 
Bayesian adaptation of the graphical lasso. The authors in \cite{wang2012bayesian} consider a Bayesian model which puts 
independent Laplace priors on the off-diagonal entries of $\mathbf{\Omega}$ and independent exponential priors on the 
diagonal entries of $\mathbf{\Omega}$ (restricted to $\mathbf{\Omega}$ begin positive definite). It follows that the graphical 
lasso estimator is the posterior mode of this Bayesian model. The Bayesian graphical lasso interpretation immediately yields 
credible regions for the graphical lasso estimate of $\mathbf{\Omega}$. Such estimates of uncertainty are not readily 
available in the frequentist setting. Alternatively, some practitioners also determine sparsity in $\mathbf{\Omega}$ based on 
whether zero is contained in the credible interval for the respective off-diagonal entries.

In principal, any continuous shrinkage prior distribution on the off-diagonal entries can be used in conjunction with the CONCORD generalized likelihood (\ref{eq2}). We will demonstrate this by choosing the popular horseshoe prior developed in 
\cite{carvalho2010horseshoe}. Consider the following hierarchical prior for every $\omega_{jk}$ with $j \neq k$:  
\begin{equation}\label{cauchy bayesian setting}
\begin{split}
\omega_{jk}|\lambda_{jk}^2, \tau^2 &\sim \mathcal{N}\left(0, \lambda_{jk}^2\tau^2 \right),\\
\lambda_{jk} & \sim \mathcal{C}^{+}\left(0, 1\right),\\
\tau & \sim \mathcal{C}^{+}\left(0, 1\right),
\end{split}
\end{equation}
where, $\mathcal{C}^{+}$ is the standard half-Cuachy distribution with probability density function
\begin{equation}
p\left( z \right) = \frac{2}{\pi \left(1+z^2\right)}, \quad z > 0,
\end{equation}
and $1 \leq j < k \leq p$. 

This hierarchical setting defines the horseshoe prior distribution, which is a global-local shrinkage distribution, wherein the local shrinkage for $\omega_{jk}$s is determined by $\lambda_{jk}$s and the overall 
level of shrinkage is determined by the hyperparameter $\tau$. The particular choice of the half-Cauchy distribution results in 
aggressive shrinkage of small in magnitude partial correlations and virtually no shrinkage of the sufficiently large ones. This is in contrast to other continuous shrinkage prior distributions, such as the Laplace (the Bayesian Lasso by \cite{park2008bayesian}) wherein the shrinkage effect is uniform across all values of the models parameters. For further studies regarding other properties of the horseshoe prior distribution, we refer the reader to \cite{carvalho2010horseshoe}, \cite{polson2012local} and \cite{polson2010shrink}. 

Employing the original form of the horseshoe distribution in (\ref{cauchy bayesian setting}) results in non-standard conditional posterior distributions for the hyperparameters $\lambda_{12}, ..., \lambda_{p-1p}, \tau$, which makes a standard Gibbs sampling algorithm difficult to implement. In the context of linear regression models, some studies have suggested the use of specialized algorithms, such as slice sampling for the hyperparameters, \cite{neal2003slice} and 
\cite{omre1989bayesian}. Recently, \cite{makalic2016simple} introduced an alternative sampling scheme for all model parameters based on auxiliary variables that leads to conjugate conditional posterior distributions for all parameters in various regression models. They make use of the following scale mixture representation of the Horse Shoe prior to construct a Gibbs sampler. Let $x$ and $a$ be random variables such that 
\begin{equation}
x^2 \sim \mathcal{IG}\left(1/2, 1/a\right), \quad \quad \quad a \sim \mathcal{IG}\left(1/2, 1/A^2\right); 
\end{equation}
then, $x\sim \mathcal{C}^{+}\left(0, A\right)$ (\cite{wand2011mean}) with $\mathcal{IG}(.,.)$ being the inverse-gamma distribution with probability density function 
\begin{equation}
p\left(z|\alpha, \beta\right) = \frac{\beta^\alpha}{\Gamma\left(\alpha\right)} z^{-\alpha - 1}\exp\left( -\frac{\beta}{z}\right), \quad z > 0.
\end{equation}
The above decomposition results in the following revised horseshoe hierarchy
\begin{equation}
\begin{split}
\omega_{jk}|\lambda_{jk}^2, \tau^2 &\sim \mathcal{N}\left(0, \lambda_{jk}^2\tau^2 \right),\\
\lambda_{jk}^2 | \nu_{jk} &\sim \mathcal{IG}\left(1/2, 1/\nu_{jk}\right), \\
\tau^2|\varepsilon & \sim \mathcal{IG}\left(1/2, 1/\varepsilon\right), \\
\nu_{12}, ..., \nu_{p-1p}, \varepsilon & \sim \mathcal{IG}\left(1/2, 1\right).
\end{split}
\end{equation}

Note that it becomes straightforward to construct a Gibbs sampling scheme. The conditional posterior distribution of the edge
parameters $\omega_{jk}$ is given by
\begin{equation}
\begin{split}
(\omega_{jk}| \mathbf{\Omega}_{-(jk)}, \mathcal{Y}) \sim N(-\frac{b_{jk}}{a_{jk}}, \frac{1}{na_{jk}}), \quad \quad 1 \le j < k \le p.
\end{split}
\end{equation}
with,
\begin{equation*}
\begin{split}
a_{jk} &= s_{jj} + s_{kk} + \frac{1}{n\lambda_{jk}^2\tau^2}, \quad \quad \quad 
b_{jk} = \mathbf{\Omega}_{-jk}'\mathbf{S}_{-jj} + \mathbf{\Omega}_{-kj}'\mathbf{S}_{-kk}, \end{split}
\end{equation*}
the conditional posterior probabilities of the local and global hyperparameters are inverse-gamma distributions 

\begin{equation}
\begin{split}
\lambda_{jk}^2|. &\sim \mathcal{IG}\left(1, \frac{1}{\nu_{jk}} + \frac{\omega_{jk}^2}{2\tau^2} \right),\\
\tau^2|. &\sim \mathcal{IG}\left(\frac{1}{2} + \frac{p(p-1)}{4}, \frac{1}{\varepsilon} + \sum\limits_{j=1}^{p-1} 
\sum\limits_{k=j+1}^{p}\frac{\omega_{jk}^2}{2\lambda_{jk}^2} \right).
\end{split}
\end{equation}

\noindent
Finally, the conditional posterior distribution for the auxiliary variables is given by
\begin{equation}
\begin{split}
\nu_{jk}|. &\sim \mathcal{IG}\left(1, 1 + \frac{1}{\lambda_{jk}^2} \right)\\
\varepsilon|. &\sim \mathcal{IG}\left(1, 1 + \frac{1}{\tau^2}\right).
\end{split}
\end{equation}
The resulting Gibbs sampler is summarized in the following algorithm.

\begin{algorithm}
\caption{Entry Wise Gibbs Sampler for BHSC}
\begin{algorithmic}
\Procedure{BHSC}{$\mathbf{y}_{1:}, ..., \mathbf{y}_{n:}$}\Comment{Input the data}
\For{$j=1, ..., p-1$}
\For{$k=j+1, ..., p$}
\State $a_{jk} \gets s_{jj} + s_{kk} + \frac{1}{n\lambda_{jk}\tau^2}$
\State $b_{jk} \gets \mathbf{\Omega}_{-jk}'\mathbf{S}_{-jj} + \mathbf{\Omega}_{-kj}'\mathbf{S}_{-kk}, $
\State $\omega_{jk} \sim N(-\frac{b_{jk}}{a_{jk}}, \frac{1}{na_{jk}})$
\State $\lambda_{jk}^2|. \sim \mathcal{IG}\left(1, \frac{1}{\nu_{jk}} + \frac{\omega_{jk}^2}{2\tau^2} \right)$
\State $\tau^2|. \sim \mathcal{IG}\left(\frac{1}{2} + \frac{p(p-1)}{4}, \frac{1}{\varepsilon} + \sum\limits_{j=1}^{p-1}\sum\limits_{k=j+1}^{p}\frac{\omega_{jk}^2}{2\lambda_{jk}^2} \right)$
\State $\nu_{jk}|. \sim \mathcal{IG}\left(1, 1 + \frac{1}{\lambda_{jk}^2} \right)$
\State $\varepsilon|. \sim \mathcal{IG}\left(1, 1 + \frac{1}{\tau^2}\right)$
\EndFor
\EndFor
\For{$j=1, ..., p$}
\State $\omega_{jj} \gets \frac{-(\lambda + n\mathbf{\Omega}_{-jj}'\mathbf{S}_{-jj}) + \sqrt{(\lambda + n\mathbf{\Omega}_{-jj}'\mathbf{S}_{-jj})^2 + 4n^2 s_{ii}^k}}{2n s_{ii}^k}$
\EndFor
\State \textbf{return} $ \mathbf{\Omega}$\Comment{Return $\mathbf{\Omega}$}
\EndProcedure
\end{algorithmic}\label{GibbsBHSC}
\end{algorithm}

\section{Appendix E: Background on the CONCORD regression based generalized likelihood} \label{background}

\noindent
Let $\mathcal{Y} :=\left(\{\mathbf{y}_{i:}\}_{i=1}^{n}\right)$ denote $i.i.d$ observations from a 
$p$-variate (continuous) distribution, with mean $\boldsymbol{0}$ and covariance matrix $\mathbf{\Omega}^{-1}$.  Let 
$\mathbf{S}$ denote the sample covariance matrix of the observations. In the frequentist setting, one of the standard 
methods to achieve a sparse estimate of $\bf{\Omega}$ is to minimize an objective function, comprising of the (negative) 
Gaussian log-likelihood and an $\ell_1$-penalty term for the off-diagonal entries of $\bf{\Omega}$, over the space of 
positive definite matrices. Equivalently, one can maximize the following weighted Gaussian likelihood: 
\begin{equation}\label{glasso}
\text{exp}\left( -\frac{n}{2} \left\{   \text{tr} \left( \mathbf{\Omega} \mathbf{S} \right) - \text{log det}\ \mathbf{\Omega} + 
\frac{\lambda}{n} \mathop{\sum\sum}\limits_{1\le j \le k \le p} {\left| \omega_{jk}  \right|}  \right\}\right). 
\end{equation}

\noindent
This approach and its variants are known as the {\it graphical lasso}, see 
\cite{yuan2007model, friedman2008sparse, banerjee2008model}. The function in (\ref{glasso}) can also be 
regarded as the posterior density of $\mathbf{\Omega}$ (up to proportionality) under Laplace priors for the off-diagonal entries, leading to a Bayesian inference and analysis framework \cite{wang2012bayesian}. 
Note that the requirement on $\mathbf{\Omega}$ being positive definite translates to the need of inverting 
$(p-1)\times (p-1)$ matrices in each iteration of the graphical lasso or Bayesian Markov Chain Monte Carlo (MCMC) based algorithms. This issue is mitigated in the graphical lasso algorithm by the small number of iterations required, but becomes critical for Bayesian approaches that require many iterations (in the thousands) of the corresponding MCMC scheme.

To address this problem in the frequentist setting, several works (see \cite{peng2009partial}, \cite{khare2015convex}) have 
considered replacing the Gaussian likelihood by a regression based generalized likelihood. Note that $-\frac{\omega_{jk}}
{\omega_{jj}}$ is the regression coefficient of $\mathbf{y}_{k:}$ when we regress $\mathbf{y}_{j:}$ on all other variables, and 
$\frac{1}{\omega_{jj}}$ is the residual variance. This is true even in non-Gaussian settings. \cite{peng2009partial} use this 
interpretation to define a generalized likelihood in terms of the (negative) exponent of the combined weighted squared error 
loss associated with all these regressions (corresponding to all $p$ variables) as follows. 
\begin{equation} \label{eq1}
\text{exp} \left( -\sum\limits_{j=1}^{p} {\omega_{jj}} \left\{ \sum\limits_{i=1}^{n} {\left( y_{ij} - \sum\limits_{k\ne j}
-{\frac{\omega_{jk}}{\omega_{jj}}y_{ik}} \right)^2} \right\} - \sum\limits_{j=1}^p \frac{n}{2} {\text{log}\omega_{jj}} \right).
\end{equation}

\noindent
Under Gaussianity, the expression in (\ref{eq1}) corresponds to the product of the conditional densities of each variable given 
all the other variables in the data set, and corresponds to Besag \cite{Besag:1975}'s pseudo-likelihood. \cite{peng2009partial} 
develop the SPACE algorithm which obtains a sparse estimator for $\mathbf{\Omega}$ by minimizing an objective function 
consisting of the (negative) log generalized likelihood and an $\ell_{1}$ penalty term for off-diagonal entries of 
$\mathbf{\Omega}$. However, this objective function is not jointly convex, which can lead to serious convergence issues for 
the corresponding minimization algorithm. 

\cite{khare2015convex} address this issue by appropriately re-weighting each of the $p$ regression terms in the exponent of 
(\ref{eq1}) and combining it with an $\ell_1$ penalty term to obtain a regression based loss function which is jointly convex in the elements of $\mathbf{\Omega}$. This loss function, referred by \cite{khare2015convex} as the CONCORD objective function, is given by
\begin{equation} \label{concord likelihood}
\begin{split}
Q_{con}\left(\mathbf{\Omega} \right) &= -n\sum\limits_{j=1}^{p}{\text{log}\omega_{jj}} + \frac{1}{2}\sum\limits_{j=1}^{p}
{\sum\limits_{i=1}^{n}{\left( \omega_{jj}y_{ij} + \sum\limits_{k \ne j}{\omega_{jk}y_{ik}} \right)^2}} + \lambda \mathop{\sum\sum}
\limits_{1 \le j < k \le p}|\omega_{jk}|  \\
&= -n\sum\limits_{j=1}^{p}{\text{log}\omega_{jj}} + \frac{n}{2} \text{tr} \left( \mathbf{\Omega}^2\mathbf{S}\right) + \lambda 
\mathop{\sum\sum}\limits_{1 \le j < k \le p}|\omega_{jk}|, 
\end{split}
\end{equation}

\noindent
where $\mathbf{S}$ denotes the sample covariance matrix. The joint convexity of $Q_{con}$ can be used to show that a
coordinate-wise minimization algorithm always converges to a global minimum. Note that both \cite{peng2009partial} and 
\cite{khare2015convex} relax the parameter space of $\mathbf{\Omega}$ from positive definite matrices to symmetric matrices with positive diagonal entries. The primary purpose of this relaxation is computational. Combined with the quadratic nature of the objective function, this relaxation leads to an {\it order of magnitude decrease in computational complexity} as compared to graphical lasso based approaches. Note that given the $\text{log det} \mathbf{\Omega}$ term in the Gaussian likelihood, such a parameter relaxation will not work for the graphical lasso. 

While the resulting minimizer may not be positive definite, its sparsity structure can be used to address the 
primary goal/challenge of selecting the sparsity pattern in $\mathbf{\Omega}$. \cite{khare2015convex} establish 
high-dimensional sparsity selection consistency of this approach, and also demonstrate that the CONCORD approach can 
outperform graphical lasso when the underlying data is not generated from a multivariate Gaussian distribution. This 
robustness is somewhat expected, since the regression based interpretation of $\mathbf{\Omega}$ does not depend on normality.

\newpage
\newpage
\small

\end{document}